\renewcommand\footnotetextcopyrightpermission[1]{} 
\newcommand{\takeout}[1]{\empty}
\tikzstyle{shiftarr}=[
\newcommand{\pullbackangle}[2][]{\arrow[phantom,to path={
                     -- ($ (\tikztostart)!1cm!#2:([xshift=8cm]\tikztostart) $)
                        node[anchor=west,pos=0.0,rotate=#2,
                        inner xsep = 0]
                        {\begin{tikzpicture}[minimum
                        height=1mm,baseline=0,#1]
    \draw[-] (0,0) -- (.5em,.5em) -- (0,1em);
                        \end{tikzpicture}}}]{}}
\newcommand{\resetCurThmBraces}{%
\gdef\curThmBraceOpen{(}%
\gdef\curThmBraceClose{)}}
\newcommand{\removeThmBraces}{%
\gdef\curThmBraceOpen{}%
\gdef\curThmBraceClose{}}
\newenvironment{notheorembrackets}{\removeThmBraces}{\resetCurThmBraces}
\patchcmd{\thmhead}{(#3)}{\curThmBraceOpen #3\curThmBraceClose }{}{}
\declaretheoremstyle[
    spaceabove=\ownthmSpaceAbove,
    spacebelow=\ownthmSpaceBelow,
    headpunct=.,
    postheadspace=.5em,
    notebraces={\curThmBraceOpen}{\curThmBraceClose},
    postheadhook={\resetCurThmBraces},
]{definition}
\declaretheoremstyle[
    style=definition,
    bodyfont=\itshape,
    notebraces={\curThmBraceOpen}{\curThmBraceClose},
    postheadhook={\resetCurThmBraces},
]{theorem}}
\numberwithin{equation}{section}
\setlist[enumerate,1]{label=(\arabic*),font=\normalfont,align=left,leftmargin=0pt,labelindent=0pt,
listparindent=\parindent,labelwidth=0pt,itemindent=!,topsep=3pt,parsep=0pt,itemsep=3pt,start=1}
\setlist[enumerate,2]{label=(\alph*),font=\normalfont,labelindent=*,leftmargin=*,topsep=3pt,start=1}
\setlist[itemize]{labelindent=*,leftmargin=*,topsep=5pt,itemsep=3pt}
\setlist[description]{labelindent=*,leftmargin=*,itemindent=-1 em}
\newcommand{\defaultshowkeysformat}[1]{%
\StrSubstitute{#1}{ }{\textvisiblespace}[\TEMP]%
\parbox[t]{\marginparwidth}{\raggedright\normalfont\small\ttfamily\(\{\){\color{red!50!black}\expandafter\seqsplit\expandafter{\TEMP}}\(\}\)}%
}
\renewcommand*\showkeyslabelformat[1]{%
\noexpandarg%
\defaultshowkeysformat{#1}%
}
\numberwithin{equation}{section}
\theoremstyle{plain}
\newtheorem{thm}{Theorem}[section]
\newtheorem{lem}[thm]{Lemma}
\newtheorem{propn}[thm]{Proposition}
\newtheorem{cor}[thm]{Corollary}
\theoremstyle{definition}
\newtheorem{defn}[thm]{Definition}
\newtheorem{expl}[thm]{Example}
\newtheorem{rem}[thm]{Remark}
\newtheorem{notn}[thm]{Notation}
\newcommand{\overbar}[1]{\mkern 1.5mu\overline{\mkern-2.5mu#1\mkern-1.5mu}\mkern 1.5mu}
\newcommand{\mybar}[3]{%
  \mathrlap{\hspace{#2}\overline{\scalebox{#1}[1]{\hphantom{\ensuremath{#3}}\vphantom{\rule{0pt}{5.5pt}}}}}\ensuremath{#3}
}
\DeclareMathOperator{\Ker}{\mathsf{ker}}
\newcommand{\infgameeq}{\sim_{\mathcal{G}}}
\newcommand{\barM}{\mybar{0.8}{2pt}{M}_1}
\newcommand{\barZ}{\mybar{0.7}{1.75pt}{Z}}
\DeclareMathOperator{\Fun}{\mathsf{Fun}}
\newcommand{\eps}{\varepsilon}
\newcommand{\syn}{\mathsf{syn}}
\newcommand{\id}{\mathsf{id}}
\newcommand{\subto}{\hookrightarrow}
\newcommand{\epito}{\twoheadrightarrow}
\newcommand{\monoto}{\rightarrowtail}
\newcommand{\inl}{\mathsf{inl}}
\newcommand{\inr}{\mathsf{inr}}
\newcommand{\inj}{\mathsf{in}}
\newcommand{\xra}[1]{\xrightarrow{~#1~}}
\newcommand{\xla}[1]{\xleftarrow{~#1~}}
\newcommand{\M}{\mathbb{M}}                   
\newcommand{\T}{\mathcal{T}}                     
\newcommand{\arity}{\mathsf{ar}}
\newcommand{\CatC}{\mathscr{C}}           
\newcommand{\CatD}{\mathscr{D}}
\newcommand{\CatK}{\mathscr{K}}
\newcommand{\CatI}{\mathscr{I}}
\newcommand{\CatG}{\mathscr{G}}
\newcommand{\K}{\mathscr{K}}
\newcommand{\Set}{\mathbf{Set}}
\newcommand{\Mon}{\mathscr{M}}             
\DeclareMathOperator{\Alg}{\mathsf{Alg}}
\DeclareMathOperator{\Support}{\mathsf{supp}}
\DeclareMathOperator{\Terms}{\mathscr{T}}
\newcommand{\Termsarg}[1]{\Terms_{\!#1}}
\DeclareMathOperator{\Win}{\mathsf{Win}}
\newcommand{\N}{\mathds{N}}            
\newcommand{\A}{\mathcal{A}}            
\newcommand{\CalG}{\mathcal{G}}             
\DeclareMathOperator{\colim}{\mathsf{colim}}    
\newcommand{\CalS}{\mathcal{S}}        
\newcommand{\bbD}{\mathbb{D}} 
\newcommand{\E}{\mathcal{E}}   
\DeclareMathOperator{\JSL}{\mathsf{JSL}}
\newcommand{\PT}{\mathsf{PT}}
\newcommand{\dl}{\mathbf 0}
\newcommand{\Id}{\mathsf{Id}}                 
\newcommand{\pow}{\mathscr P}              
\newcommand{\powf}{\pow_{\mathsf f}}
\newcommand{\CalN}{\mathcal{N}}           
\newcommand{\CalD}{\mathcal{D}}            
\newcommand{\Q}{\mathcal{Q}}                  
\begin{document}

\title[Graded Monads and Behavioural Equivalence Games]{Graded Monads and Behavioural Equivalence Games}         



\author[C. Ford]{Chase Ford}
\authornote{Supported by Deutsche Forschungsgemeinschaft (DFG, German Research Foundation) as
part of the Research and Training Group 2475 "Cybercrime and Forensic Computing"
(grant n. 393541319/GRK2475/1-2019).}          
\orcid{0000-0003-3892-5917}             
\affiliation{
  \institution{Friedrich-Alexander-Universit\"{a}t Erlangen-N\"{u}rnberg}            
  \country{Germany}                    
  }
\email{chase.ford@fau.de}          

\author[H. Beohar]{Harsh Beohar}
\orcid{0000-0001-5256-1334}             
\affiliation{
  \institution{University of Sheffield}            
  \country{United Kingdom}                    
}
\email{h.beohar@sheffield.ac.uk}          

\author[B. K\"{o}nig]{Barbara K\"{o}nig}
\authornote{Funded by the Deutsche Forschungsgemeinschaft (DFG, German
Research Foundation) -- project number 434050016.}    

\orcid{0000-0002-4193-2889}             
\affiliation{
  \institution{Universit\"{a}t Duisberg-Essen}            
  \country{Germany}                    
}
\email{barbara_koenig@uni-due.de}          

\author[S. Milius]{Stefan Milius}
\authornote{Funded by the Deutsche Forschungsgemeinschaft (DFG, German
  Research Foundation) -- project number 419850228.}          
\orcid{0000-0002-2021-1644}             
\affiliation{
  \institution{Friedrich-Alexander-Universit\"{a}t Erlangen-N\"{u}rnberg}            
  \country{Germany}                    
}
\email{stefan.milius@fau.de}          

\author[L. Schr\"{o}der]{Lutz Schr\"{o}der$^\dagger$}
  
\orcid{0000-0002-3146-5906}             
\affiliation{
  \institution{Friedrich-Alexander-Universit\"{a}t Erlangen-N\"{u}rnberg}            
  \country{Germany}                    
}
\email{lutz.schroeder@fau.de}          

\begin{abstract}
The framework of \emph{graded semantics} uses graded monads to capture behavioural equivalences of varying granularity, for example as found in the linear-time / branching-time spectrum, over general system types. We describe a generic Spoi\-ler-Duplicator game for graded semantics that is extracted from the given graded monad, and may be seen as playing out an equational proof; instances include standard pebble games for simulation and bisimulation as well as games for trace-like equivalences and coalgebraic behavioural equivalence. Considerations on an infinite variant of such games lead to a novel notion of \emph{infinite-depth graded semantics}. Under reasonable restrictions, the infinite-depth graded semantics associated to a given graded equivalence can be characterized in terms of a determinization construction for coalgebras under the equivalence at hand.
  \takeout{
  The framework of \emph{graded semantics} uses graded monads
  to capture behavioural equivalences with varying degrees of granularity
  over general system types,
  e.g.~as found on the linear-time / branching-time spectrum. We describe
  a generic Spoiler-Duplicator game for graded semantics that is extracted
  from the given graded monad, and may be seen as playing out an
  equational proof; instances include standard pebble games for
  (bi)simulation as well as games and for trace-like equivalences and
  coalgebraic behavioural equivalence. Furthermore, an infinite variant of
  such games leads to a novel notion of \emph{infinite-depth graded
    semantics}. Under reasonable restrictions, the infinite-depth
  semantics associated to a given graded equivalence can be
  characterized in terms of a determinization construction for
  coalgebras under the equivalence at hand.}
\end{abstract}

\begin{CCSXML}
<ccs2012>
<concept>
<concept_id>10011007.10011006.10011008</concept_id>
<concept_desc>Software and its engineering~General programming languages</concept_desc>
<concept_significance>500</concept_significance>
</concept>
<concept>
<concept_id>10003456.10003457.10003521.10003525</concept_id>
<concept_desc>Social and professional topics~History of programming languages</concept_desc>
<concept_significance>300</concept_significance>
</concept>
</ccs2012>
\end{CCSXML}

\ccsdesc[500]{Software and its engineering~General programming languages}
\ccsdesc[300]{Social and professional topics~History of programming languages}

\keywords{games,
	        graded monads,
	        semantics,
	        behavioural equivalence,
	        linear-time/bran\-ching-time spectrum}

\maketitle

\section{Introduction}

\noindent The classical linear-time / branching-time
spectrum~\cite{Glabbeek90} organizes a plethora of notions of
behavioural equivalence on labelled transition systems at various
levels of granularity ranging from (strong) bisimilarity to trace
equivalence. Similar spectra appear in other system types, e.g.~on
probabilistic systems, again ranging from branching-time equivalence
such as probabilistic bisimilarity to linear-time ones such as
probabilistic trace equivalence~\cite{JouSmolka90}. While the
variation in system types (nondeterministic, probabilistic, etc.) is
captured within the framework of \emph{universal
  coalgebra}~\cite{Rutten00}, the variation in the granularity of
equivalence, which we shall generally refer to as the \emph{semantics}
of systems, has been tackled, in coalgebraic generality, in a variety
of approaches~\cite{HJS07,KR15,JSS15,JLR18}. One setting that
manages to accommodate large portions of the
linear-time / branching-time spectrum, notably including also
intermediate equivalences such as ready similarity, is based on
\emph{graded monads}~\cite{MPS15,DMS19,FMS21a}.

An important role in the theoretical and algorithmic treatment of a
behavioural equivalence is classically played by equivalence
games~\cite{Glabbeek90,Stirling99}, e.g.~in partial-order
techniques~\cite{hnw:por-bisimulation-checking} or in on-the-fly
equivalence checking~\cite{h:bisim-verif-journal}. In the present
work, we contribute to \emph{graded semantics} in the sense indicated
above by showing that, under mild conditions, we can extract from a
given graded monad a Spoiler-Duplicator game~\cite{Stirling99} that
characterizes the respective equivalence, i.e.~ensures that two states
are equivalent under the semantics iff Duplicator wins the game.

As the name suggests, graded monads provide an \emph{algebraic} view
on system equivalence; they correspond to \emph{grad\-ed theories},
i.e.~algebraic theories equipped with a notion of \emph{depth} on
their operations. It has been noticed early on~\cite{MPS15} that many
desirable properties of a semantics depend on this theory being
\emph{depth-1}, i.e.~having only equations between terms that are
uniformly of depth~1. Standard examples include distribution of
actions over non-deterministic choice (trace semantics) or
monotonicity of actions w.r.t.~the choice ordering
(similarity)~\cite{DMS19}. Put simply, our generic equivalence game
plays out an equational proof in a depth-1 equational theory in a
somewhat nontraditional manner: 
Duplicator starts a round by playing a set of equational assumptions
she claims to hold at the level of successors of the present state,
and Spoiler then challenges one of these assumptions.
              										
In many concrete cases, the game can be rearranged in a straightforward
manner to let Spoiler move first as usual; in this view, the equational claims
of Duplicator roughly correspond to a short-term strategy determining the
responses she commits to playing after Spoiler's next move. In particular,
the game instantiates, after such rearrangement, to the standard pebble
game for bisimilarity. We analyse additional cases, including similarity
and trace equivalence, in more detail. In the latter case, several natural
variants of the game arise by suitably restricting strategies played by
Duplicator.

It turns out that the game is morally played on a form of
pre-determinization of the given coalgebra, which lives in the
Eilenberg-Moore category of the zero-th level of the graded monad, and as
such generalizes a determinization construction that applies in
certain instances of coalgebraic language semantics of
automata~\cite{JSS15}.  Under suitable conditions on the graded monad,
this pre-determinization indeed functions as an actual
determinization, i.e.~it turns the graded semantics into standard
coalgebraic behavioural equivalence for a functor that we construct on
the Eilenberg-Moore category. This construction simultaneously
generalizes, for instance, the standard determinization of serial labelled
transition systems for trace equivalence and the identification of
similarity as behavioural equivalence for a suitable functor on
posets~\cite{KKV12} (specialized to join
semilattices).

While graded semantics has so far been constrained to apply only to
finite-depth equivalences (finite-depth bisimilarity, finite trace
equivalence, etc.), we obtain, under the mentioned conditions on the
graded monad, a new notion of infinite-depth equivalence induced by a
graded semantics, namely via the (pre-)determinization. It turns out
the natural infinite version of our equivalence game captures
precisely this infinite-depth equivalence. This entails 
a fixpoint characterization of graded semantics on finite systems,
giving rise to perspectives for a generic algorithmic treatment.

\paragraph{Related Work.} Game characterizations of process
equivalences are an established theme in concurrency theory; 
they tend to be systematic but not generic~\cite{Glabbeek90,CD08}.
Work on games for spectra of quantitative equivalences is positioned
similarly~\cite{FLT11,fl:quantitative-spectrum-journal}. The idea of developing \mbox{(bi)}simulation
games in coalgebraic generality goes back to work on branching-time
simulations based on relators~\cite{Baltag00}. There
is recent highly general work, conducted in a fibrational setting, on
so-called codensity games for various notions of
bisimilarity~\cite{kkhkh:codensity-games}. The emphasis in this work
is on generality w.r.t.~the measure of bisimilarity, covering,
e.g.~two-valued equivalences, metrics, pre-orders, and topologies,
while, 
viewed through the lens of spectra of equivalences, the
focus remains on branching time. The style of the codensity game is
inspired by modal logic, in the spirit of coalgebraic Kantorovich
liftings~\cite{BaldanEA18,WildSchroder20}; 
Spoiler plays predicates thought of as arguments of modalities.
Work focused more specifically on games for Kantorovich-style
coalgebraic behavioural equivalence and behavioural
metrics~\cite{km:bisim-games-logics-metric} similarly concentrates on
the branching-time case. A related game-theoretic characterization is
implicit in work on $\Lambda$-(bi)similarity~\cite{GorinSchroder13},
also effectively limited to branching-time. Comonadic game
semantics~\cite{ADW17, AS18, CD21} proceeds in the opposite way
compared to the mentioned work and ours: It takes existing games as
the point of departure, and then aims to develop categorical models.

Graded semantics was developed in a line of work mentioned
above~\cite{MPS15,DMS19,FMS21a}. The underlying notion of
graded monad stems from algebro-geometric work~\cite{Smirnov08}
and was introduced into computer science (in substantially higher generality)
in work on the semantics of effects~\cite{Katsumata14}. Our pre-determinization
construction relates to work on coalgebras over algebras~\cite{BK11}.

%
%

\paragraph*{Organization.} We discuss preliminaries on categories,
coalgebras, graded monads, and games in \cref{sec:prelims}. We
recall the key notions of graded algebra and canonical graded algebra
in~\cref{sec:algebras}, and graded semantics in~\cref{sec:semantics}.
We introduce our pre-determinization construction in \cref{sec:determinization},
and finite behavioural equivalence games in \cref{S:games}. In
\cref{sec:infinte-depth}, we consider the infinite version of the
game, relating it to behavioural equivalence on the pre-determinization. We
finally consider specific cases in detail in~\cref{sec:cases}.

\section{Preliminaries}\label{sec:prelims}
We assume basic familiarity with category theory~\cite{AHS90}. We
will review the necessary background on coalgebra~\cite{Rutten00},
graded monads~\cite{Smirnov08,MPS15}, and the standard bisimilarity
game~\cite{Stirling99}.

\paragraph*{The category of sets.}
Unless explicitly mentioned otherwise, we will
work in the category $\Set$ of sets and functions
(or \emph{maps}), which is both complete and
cocomplete. We fix a terminal object $1=\{\star\}$
and use $!_{X}$ (or just $!$ if confusion is
unlikely) for the unique map $X\to 1$.

In the subsequent sections, we will mostly draw examples from (slight
modifications of) the following (endo-)functors on $\Set$. The
\emph{powerset functor} $\pow$ sends each set $X$ to its set of
subsets $\pow X$, and acts on a map $f\colon X\to Y$ by taking direct
images, i.e.~$\pow f(S):= f[S]$ for $S\in\pow X$.
We write $\powf$ for the \emph{finitary powerset functor} which sends
each set to its set of finite subsets; the action of $\powf$ on maps is
again given by taking direct images. Similarly, $\pow^+$ denotes the
non-empty powerset functor
($\pow^+(X)=\{Y\in\pow(X)\mid Y\neq\emptyset\}$), and $\powf^+$ its
finitary subfunctor
($\powf^+(X)=\{Y\in\powf(X)\mid Y\neq\emptyset\}$).

We write $\CalD X$ for the set of \emph{distributions} on a set $X$:
maps $\mu\colon X\to [0,1]$ such that $\sum_{x\in X}\mu(x)=1$.  A
distribution $\mu$ is \emph{finitely supported} if the set
$\{x\in X~|~\mu(x)\neq 0\}$ is finite. The set of finitely supported
distributions on $X$ is denoted $\CalD_f X$. The assignment
$X\mapsto\CalD X$ is the object-part of a functor: given
$f\colon X\to Y$, the map $\CalD f\colon\CalD X\to \CalD Y$ assigns to a
distribution $\mu\in\CalD X$ the \emph{image} distribution
$\CalD f(\mu)\colon Y\to [0, 1]$ defined by
$\CalD f(\mu)(y)=\sum_{x\in X\mid f(x)=y} \mu(x)$.  Then,
$\CalD f(\mu)$ is finitely supported if $\mu$ is, so $\CalD_f$ is
functorial as well.

\takeout{
We will also consider the \emph{contravariant powerset functor}
$\Q\colon\Set\to\Set^{op}$, which acts on sets according to $\pow$,
i.e.~$\Q(X):= \pow X$, but sends a map $f\colon X\to Y$ to the
\emph{inverse image map} $\Q f\colon\pow Y\to \pow X$ defined by
$S\mapsto f^{-1}[S]$. The composite functor
$\CalN:=(\Set\xra{\Q}\Set^{op}\xra{\Q^{op}}\Set)$ is called the
\emph{neighborhood functor}.  Explicitly, $\CalN$ is the endofunctor
on $\Set$ which sends a set $X$ to the set $\pow(\pow X)$ of all
\emph{neighborhoods} of~$X$, and it sends a map $f\colon X\to Y$
to the map $\CalN f\colon\CalN X\to \CalN Y$ defined by the
assignment
$\mathscr S\mapsto\{S\in \pow Y~|~f^{-1}[S]\in \mathscr S\}$.
}
\takeout{
We will also work with the \emph{(finitely supported)
distribution functor} $\CalD\colon\Set\to\Set$ is the functor
which sends a set $X$ to the set of \emph{(finitely supported)
probability distributions on $X$}. That is, $\CalD X$ consists of
all maps $\varphi\colon X\to [0,1]$ (where $[0,1]$ denotes the
real unit interval) such that
\[
\sum_{x\in X}\varphi(x) = 1
\]
and such that the \emph{support}
$\Support(\varphi):=\{x\in X\mid\varphi(x)\ne 0\}$ is finite. It
is well known (and easy to see) that each $\varphi\in\CalD X$
is equivalently presented as the formal convex sum
\[
\sum_{x\in\Support(\varphi)} \varphi(x)\cdot x.
\]
In this notation, the action $\CalD f\colon \CalD X\to\CalD Y$
on a map $f\colon X\to Y$ is conveniently described via the
assignment
$\sum_{i\leq k}r_i\cdot x_i \mapsto \sum_{i\leq k} r_i\cdot f(x_i).$
}
%
\noindent
\paragraph*{Coalgebra.}
 We will review the
basic definitions and results of \emph{universal
coalgebra}~\cite{Rutten00}, a categorical framework for the uniform
treatment of a variety of reactive system types.
\begin{defn}
  For an endofunctor $G\colon\CatC\to\CatC$ on a
  category $\CatC$, a \emph{$G$-coalgebra} (or just
  \emph{coalgebra}) is a pair $(X, \gamma)$ consisting
  of an object $X$ in $\CatC$ and a morphism
  $\gamma\colon X\to GX$.
  A \emph{(coalgebra) morphism} from $(X, \gamma)$ to
  a coalgebra $(Y, \delta)$ is a morphism $h\colon X\to Y$
  such that $\delta\cdot h = Fh\cdot\gamma$.
\end{defn}
\noindent Thus, for $\CatC = \Set$, a coalgebra consists of a set $X$
of \emph{states} and a map $\gamma\colon X\to GX$, which we view as a
transition structure that assigns to each state $x\in X$ a structured
collection $\gamma(x)\in GX$ of \emph{successors} in $X$.
\begin{expl}\label{E:coalg}
  We describe some examples of functors on $\Set$ and their coalgebras
  for consideration in the subsequent. Fix a finite set $\A$ of
  \emph{actions}.
\begin{enumerate}
\item\label{E:coalg:1} Coalgebras for the functor $G=\pow(\A\times -)$
  are just 
  \emph{$\A$-labelled transition systems (LTS)}: Given such a
  coalgebra $(X, \gamma)$, we can view the elements
  $(a, y)\in\gamma(x)$ as the $a$-successors of~$x$. We call
  $(X, \gamma)$ \emph{finitely branching} (resp.~\emph{serial}
  if~$\gamma(x)$ is finite (resp.~non-empty) for all~$x\in X$.
 Finitely branching (resp.~serial) LTS are coalgebras for
 the functor $G=\powf(\A\times -)$ (resp. $\pow^+(\A\times -)$).

\takeout{
\item A coalgebra for the neighborhood functor $\CalN$ is a
  \emph{neighborhood frame}~\cite{HKP09}: a map
  $\nu\colon X\to \CalN X$ assigning each state $x\in X$ to its
  set of \emph{neighbourhoods} $\nu(x)\in\CalN X$.%
  \smnote{I wouldn't write $\pow(\pow X)$; this gives the wrong
    impression that $\N$ is the double covariant powerset functor.}
  A morphism
  of neighborhood frames (also: \emph{bounded morphism}) from
  $(X, \nu)$ to $(X', \nu')$ is a map $f\colon X\to X'$ such that for
  all $x\in X$: $f^{-1}[V]\in\nu(x)$ iff $V\in\nu'(f(x))$ for all
  $V\in\pow X'$.
}

\item A coalgebra $(X,\gamma)$ for the functor $G=\CalD(\A\times -)$
  is a \emph{(generative) probabilistic transition system} (PTS): The
  transition structure~$\gamma$ assigns to each state $x\in X$ a
  distribution $\gamma(x)$ on pairs $(a,y)\in\A\times X$. We think of
  $\gamma(x)(a, y)$ as the probability of executing an $a$-transition
  to state~$y$ while sitting in state~$x$.
  A PTS $(X, \gamma)$ is \emph{finitely branching} if $\gamma(x)$ is
  finitely supported for all $x\in X$; then, finitely branching PTS
  are coalgebras for $\CalD_f(\A\times -)$.
\end{enumerate}
\end{expl}

Given coalgebras $(X, \gamma)$ and $(Y, \delta)$ for an endofunctor
$G$ on $\Set$, states $x\in X$ and $y\in Y$ are
\emph{$G$-behaviourally equivalent} if there exist coalgebra morphisms
\[
  (X,\gamma) \xra{f} (Z,\zeta) \xla{g} (Y,\delta)
\]
such that $f(x)= g(y)$. Behavioural equivalence can be approximated
via the (initial $\omega$-segment of the) \emph{final chain}
$(G^n1)_{n\in\omega}$, where $G^n$ denotes $n$-fold application of
$G$. 
The \emph{canonical cone} of a coalgebra
$(X, \gamma)$ is then the family of maps $\gamma_n\colon X\to G^n1$
defined inductively for $n\in\omega$ by
\begin{align*}
  \gamma_0 &= \big(X\xra{!} 1\big), \text{and} \\
  \gamma_{n+1} &= \big(X \xra{\gamma} GX \xra{G\gamma_n} GG^n 1 = G^{n+1}1\big).
\end{align*}
States $x, y\in X$ are \emph{finite-depth behaviourally equivalent} if
$\gamma_n(x)= \gamma_n(y)$ for all $n \in \omega$.

\begin{rem}\label{rem:finite-depth}
  It follows from results of Worrell~\cite{Worrell05} that behavioural
  equivalence and finite-depth behavioural equivalence coincide for
  finitary functors on $\Set$, where a functor $G$ on $\Set$ is
  \emph{finitary} if it preserves filtered colimits. Equivalently, for
  every set $X$ and each $x \in GX$ there exists a finite subset $Y
  \subseteq X$ such that $x = Gi[GY]$, where $i\colon Y \subto X$ is
  the inclusion map~\cite[Cor.~3.3]{amsw19-1}.
\end{rem}

\paragraph*{Bisimilarity games.}
We briefly recapitulate the classical \emph{bisimilarity game}, a
two-player graph game between the players Duplicator (D) and Spoiler
(S); player~D tries to show that two given states are bisimilar,
while~S tries to refute this. \emph{Configurations} of the game are
pairs $(x, y)\in X\times X$ of states in a LTS $(X, \gamma)$. The game
proceeds in rounds, starting from the \emph{initial configuration},
which is just the contested pair of states. In each round, starting
from a configuration $(x, y)$,~S picks one of the sides, say,~$x$, and
then selects an action $a\in\A$ and an $a$-successor~$x'$ of~$x$;
player~D then selects a corresponding successor on the other side, in
this case an $a$-successor~$y'$ of~$y$. The game then reaches the new
configuration $(x', y')$. If a player gets stuck, the play is
\emph{winning} for their opponent, whereas any infinite play is
winning for~D.

It is well known (e.g.~\cite{Stirling99}) that~D has a
winning strategy in the bisimilarity game at a configuration $(x, y)$
iff $(x, y)$ is a pair of bisimilar states. Moreover, for finitely
branching LTS, an equivalent formulation may be given in terms of the
\emph{$n$-round bisimilarity game}:
the rules of the $n$-round game are the same as those above, only
now~D wins as soon as at most~$n$ rounds have been played. In fact, a
configuration $(x, y)$ is a bisimilar pair precisely if~D has
a winning strategy in the $n$-round bisimilarity game for all
$n\in\omega$.

We mention just one obvious variation of this game that characterizes
a different spot on the linear-time/branching-time spectrum: The
\emph{mutual-simulation game} is set up just like the bisimulation
game, except that~S may only choose his side once, in the first round,
and then has to move on that side in all subsequent rounds (in the
bisimulation game, he can switch sides in every round if he
desires). It is easily checked that states~$x,y$ are mutually similar
iff~S wins the position $(x,y)$ in the mutual-simulation game. We will
see that both these games (and many others) are obtained
as instances of our generic notion of graded equivalence game.

\paragraph*{Graded monads.}
We now review some background material on graded monads
\cite{Smirnov08, MPS15}:
%
\begin{defn}\label{D:gradedmonad}
  A \emph{graded monad} $\M$ on a category $\CatC$ is a triple
  $(M, \eta, \mu)$ where $M$ is a family of functors
  $M_n\colon\CatC\to\CatC$ on $\CatC$ ($n\in\omega$),
  $\eta\colon \id \to M_0$ is a natural transformation (the
  \emph{unit}), and $\mu$ is a family of natural transformations%
  \begin{equation}
    \mu^{n,k}\colon M_nM_k\to M_{n+k}  \tag{$n,k\in\omega$}
  \end{equation}
  (the \emph{multiplication}) such that the following diagrams commute
  for all $n,m,k\in\omega$:
  \begin{equation}\label{diagram:unitlaw}
    \begin{tikzcd}[column sep=40]
      &
      M_n
      \arrow[ld, "M_n\eta"'] \arrow[rd, "\eta M_n"]  \arrow[d, "\Id"]
      \\
      M_nM_0
      \arrow[r, "{\mu^{n,0}}"]
      &
      M_n
      &
      M_0M_n \arrow[l, "{\mu^{0,n}}"']
    \end{tikzcd}
  \end{equation}
  \begin{equation}\label{diagram:associativelaw}
    \begin{tikzcd}[column sep = 60]
      M_nM_kM_m
      \arrow[r, "{M_n\mu^{k,m}}"]
      \arrow[d, "{\mu^{n,k}M_m}"']
      &
      M_{n}M_{k+m} \arrow[d, "{\mu^{n,k+m}}"]
      \\
      M_{n+k}M_m \arrow[r, "{\mu^{n+k,m}}"]
      &
      M_{n+k+m}
    \end{tikzcd}
  \end{equation}
  We refer to~\eqref{diagram:unitlaw} and~\eqref{diagram:associativelaw}
  as the \emph{unit} and \emph{associative} laws of $\M$,
  respectively. We call $\M$ \emph{finitary} if all of the
  functors $M_n\colon\CatC\to\CatC$ are finitary.
\end{defn}
\noindent
The above notion of graded monad is due to Smirnov \cite{Smirnov08}.
Katsumata~\cite{Katsumata14}, Fujii et al.~\cite{FKM16}, and
Mellies~\cite{Mellies17} consider a more general notion of graded (or
\emph{parametrized}) monad given as a lax monoidal action of a
monoidal category $\Mon$ (representing the system of grades) on a
category $\CatC$. Graded monads in the above sense are recovered by
taking~$\Mon$ to be the (discrete category induced by the) monoid
$(\N, +, 0)$.

The graded monad laws imply that the triple $(M_0, \eta, \mu^{0,0})$
is a (plain) monad on the base category $\CatC$; we use this freely
without further mention.

\begin{expl}\label{E:graded-monad}
  We review some salient constructions~\cite{MPS15} of graded
  monads on $\Set$ for later use.
  \begin{enumerate}
  \item\label{E:graded-monad:1} Every endofunctor $G$ on $\Set$
    induces a graded monad~$\M_G$ with underlying endofunctors
    $M_n:= G^n$ (the $n$-fold composite of~$G$ with itself); the unit
    $\eta_X\colon X\to G^0X= X$ and multiplication
    $\mu_X^{n,k}\colon G^nG^kX\to G^{n+k}X$ are all identity maps.
    We will later see that~$\M_G$ captures (finite-depth) $G$-behavioural
    equivalence.

  \item\label{item:graded-kleisli}
  Let $(T, \eta, \mu)$ be a monad on $\Set$, let $F$ be
  an endofunctor on $\Set$, and let $\lambda\colon FT\to TF$
  be a natural transformation such that
    \[
      \lambda\cdot F\eta=\eta F
      \qquad\text{and}\qquad
      \lambda \cdot F\mu = \mu F \cdot T\lambda \cdot \lambda T
    \]
    (i.e.~$\lambda$ is a distributive law of the functor~$F$ over the
    monad~$T$). For each $n\in\omega$, let
    $\lambda^n\colon F^nT\to TF^n$ denote the natural transformation
    defined inductively by
    \[
      \lambda^0:= \id_T; \qquad \lambda^{n+1}:= \lambda^n F\cdot F^n\lambda.
    \]
    We obtain a graded monad with $M_n:= TF^n$, unit $\eta$, and
    components $\mu^{n,k}$ of the multiplication given as the
    composites
    \[
      TF^nTF^k\xra{T\lambda^n F^k}  TTF^nF^k = TTF^{n+k}\xra{ \mu F^{n+k}} TF^{n+k}.
    \]
    Such graded monads relate strongly to Kleisli-style coalgebraic
    trace semantics~\cite{HJS07}.
  \item\label{item:T-traces} We obtain (by instance of the example above) a graded monad
    $\M_T(\A)$ with $M_n= T(\A^n\times -)$ for every monad $T$ on
    $\Set$ and every set~$\A$. Thus, $\M_T$ is a graded monad for
    traces under effects specified by~$T$; e.g.~for $T=\CalD$, we will
    see that $\M_T(\A)$ captures probabilistic trace equivalence on
    PTS.

  \item\label{E:graded-monad:4}Similarly, given a monad $T$, an
    endofunctor $F$, both on the same category $\CatC$, and a
    distributive law $\lambda\colon TF \to FT$ of~$T$ over $F$, we
    obtain a graded monad with $M_n := F^nT$, unit and
    multiplication given analogously as in
    item~\ref{item:graded-kleisli} above
    (see~\cite[Ex.~5.2.6]{MPS15}). Such graded monads relate strongly
    to Eilenberg-Moore-style coalgebraic language semantics~\cite{bms13}.
\end{enumerate}
\end{expl}

Graded variants of Kleisli triples have been introduced and proved
equivalent to graded monads (in a more general setting)
by Katsumata~\cite{Katsumata14}:

\begin{notn}\label{N:star}
We will employ the \emph{graded Kleisli star} notation:
for $n\in\omega$ and a morphism $f\colon X\to M_k Y$, we
write%
\begin{equation}\label{Eqn:Kleisli-star}
  f^*_n:=  \big(M_nX\xra{M_nf} M_nM_k\xra{\mu^{n,k}}M_{n+k}Y\big).
\end{equation}
In this way, we obtain a morphism satisfying the following graded
variants~\cite[Def.~2.3]{Katsumata14} of the usual laws of the
Kleisli star operation for ordinary monads: for every $m\in\omega$ and
morphisms $f\colon X\to M_nY$ and $g\colon Y\to M_kZ$ we have:%
\begin{align}
  f^*_0\cdot\eta_X &= f,	\label{item:star-2} \\
  (\eta_X)^*_n &= \id_{M_n X}, \label{item:start-3}\\
  (g^*_{n}\cdot f)^*_m &= g^*_{m+n}\cdot f_m^*.    \label{item:star-1}
\end{align}

\end{notn}

\paragraph*{Graded theories.}
Graded theories, in a generalized form in which arities of operations
are not restricted to be finite, have been proved equivalent to graded
monads on $\Set$~\cite{MPS15} (the finitary case was implicitly
covered already by Smirnov~\cite{Smirnov08}).  We work primarily with
the finitary theories below; we consider infinitary variants of such
theories only when considering infinite-depth equivalences
(\cref{sec:infinte-depth}).

\begin{defn}\label{def:theory}
\begin{enumerate}
\item A \emph{graded signature} is a set $\Sigma$ of \emph{operations}
  $f$ equipped with a finite \emph{arity} $\arity(f) \in \omega$ and a
  finite \emph{depth} $d(f)\in\omega.$ An operation of arity 0 is
  called a \emph{constant}.

\item
Let $X$ be a set of \emph{variables} and let
$n\in\omega$. The set $\Termsarg{\Sigma, n}(X)$
of \emph{$\Sigma$-terms of uniform depth $n$ with
variables in $X$} is defined inductively as follows:
every variable $x\in X$ is a term of uniform depth
$0$ and, for $f\in\Sigma$ and
$t_1,\dots, t_{\arity(f)}\in\Termsarg{\Sigma, k}(X)$,
$f(t_1,\dots, t_{\arity(f)})$ is a $\Sigma$-term
of uniform depth $k+ d(f)$. In particular,
a constant $c$ has uniform depth $k$ for all $k\geq d(c)$.
\item A \emph{graded $\Sigma$-theory} is a set $\E$ of
  \emph{uniform-depth equations}: pairs $(s, t)$, written `$s=t$',
  such that $s,t\in\Termsarg{\Sigma, n}(X)$ for some $n\in\omega$;
  %
  we say that $(s, t)$ is \emph{depth-$n$}.
  A theory is \emph{depth-$n$} if all of its equations and
  operations have depth at most $n$.
\end{enumerate}
\end{defn}

\begin{notn}\label{N:substitution}
A \emph{uniform-depth substitution} is a map
$\sigma\colon X\to\Termsarg{\Sigma, k}(Y)$,
where $k\in\omega$ and $X, Y$ are sets.
Then $\sigma$ extends to a family of maps
$\bar{\sigma}_n\colon\Termsarg{\Sigma, n}(X)\to\Termsarg{\Sigma, k+n}(Y)$
($n\in\omega$) defined recursively by
\[
\bar{\sigma}_n(f(t_1,\dots, t_{\arity(f)})) =
f(\bar{\sigma}_m(t_1),\dots, \bar{\sigma}_m(t_{\arity(f)})),
\]
where $t_i\in\Termsarg{\Sigma, m}$ and
$d(f)+m = n$. For a term $t\in T_{\Sigma, k}(X)$,
we also write $t\sigma:= \bar{\sigma}_n(t)$ when
confusion is unlikely.
\end{notn}
\noindent
Given a graded theory $\T=(\Sigma, \E)$, we have essentially the
standard notion of equational derivation (sound and complete over
graded algebras, cf.\ \cref{sec:algebras}), restricted to
uniform-depth equations. Specifically, the system includes the
expected rules for reflexivity, symmetry, transitivity, and
congruence, and moreover allows substituted introduction of axioms: If
$s=t$ is in~$\E$ and~$\sigma$ is a uniform-depth substitution, then
derive the (uniform-depth) equation $s\sigma=t\sigma$. (A substitution
rule that more generally allows uniform-depth substitution into
derived equations is then admissible.)  For a set~$Z$ of uniform-depth
equations, we write
\begin{equation*}
Z\vdash s=t
\end{equation*}
if the uniform-depth equation $s=t$ is derivable from equations in~$Z$
in this system; note that unlike the equational axioms in~$\E$, the
equations in~$Z$ cannot be substituted into in such a derivation (they
constitute assumptions on the variables occurring in~$s,t$).

We then see that~$\T$ induces a graded monad~$\M_{\T}$ with $M_nX$
being the quotient of $\Termsarg{\Sigma, n}(X)$ modulo derivable
equality under $\E$; the unit and multiplication of $\M_{\T}$ are
given by the inclusion of variables as depth-0 terms and the
collapsing of layered terms, respectively. Conversely, every graded
monad arises from a graded theory in this way~\cite{MPS15}.

We will restrict attention to graded monads presented by depth-$1$
graded theories:

\begin{defn}
  A \emph{presentation} of a graded monad $\M$ is a graded theory $\T$
  such that $\M\cong\M_{\T}$, in the above notation.  A graded monad
  is \emph{depth-1} if it has a depth-$1$ presentation.
\end{defn}

\begin{expl}\label{E:graded-theory}
  Fix a set $\A$ of actions. We describe \mbox{depth-$1$} graded theories
  associated (via the induced behavioural equivalence,
  \cref{sec:semantics}) to standard process equivalences on LTS and
  PTS~\cite{DMS19}.
\begin{enumerate}
\item\label{item:jsl-a} The graded theory $\JSL(\A)$ of
  \emph{$\A$-labelled join semilattices} has as depth-1 operations all
  formal sums
  \[
    \textstyle\sum_{i=1}^na_i(-),
    \quad
    \text{for \mbox{$n\ge 0$} and $a_1,\dots,a_n\in\A$}
  \]
  (and no depth-$0$ operations); we write~$0$ for
  the empty formal sum. The axioms of $\JSL(\A)$ consist of all
  depth-1 equations $\sum_{i=1}^na_i(x_i) = \sum_{j=1}^m b_j(y_j)$
  (where the~$x_i$ and~$y_j$ are variables, not necessarily distinct)
  such that
  $\{(a_i, x_i)~|~1\le i\leq n\}=\{(b_j, y_j)~|~1\le j\leq m\}$.  The
  graded monad induced by $\JSL(\A)$ is $\M_G$ for
  $G=\powf(\A\times(-))$
  (cf.~\cref{E:graded-monad}.\ref{E:graded-monad:1}).

\item\label{item:pt-a} The \emph{graded theory of probabilistic
    traces}, $\PT(\A)$, has a depth-0 convex sum operation
  \[
    \textstyle\sum^n_{i = 1} p_i\cdot(-)
    \quad
    \text{for all $p_1, \ldots, p_n \in [0,1]$ such that
      $\sum^n_{i=1} p_i = 1$}
  \]
  and unary depth-1 operations $a(-)$ for all actions $a\in\A$. As
  depth-0 equations, we take the usual equational axiomatisation of
  convex algebras, which is given by the equation
  $\sum^n_{i = 1}\delta_{ij}\cdot x_j = x_i$ (where $\delta_{ij}$
  denotes the Kronecker delta function) and all instances of the
  equation scheme
\[
\sum^n_{i = n} p_i\cdot \sum^m_{j=1} q_{ij}\cdot x_j =
\sum_{j=1}^m \Big(\sum^{n}_{i=1}p_iq_{ij}\Big)\cdot x_j.
\]
We further impose depth-1 equations
stating that actions distribute over
convex sums:
\[
a\Big(\sum_{i=1}^np_i\cdot x_i\Big) = \sum_{i=1}^n p_i\cdot a(x_i).
\]
The theory $\PT(\A)$ presents~$\M_{\CalD_f}(\A)$,
where $\CalD_f$ is the finitely supported distribution monad
(cf.~\cref{E:graded-monad}.\ref{item:T-traces}).

\item\label{item:traces-a} We mention two variations on the graded
  theory above. First, the \emph{graded theory of (non-deterministic)
    traces} presenting $\M_{\powf}(\A)$ has depth-0
  operations~$+$,~$0$ and equations for join-semilattices with bottom,
  and unary depth-1 operations~$a$ for $a\in\A$ as in
  \ref{item:jsl-a} above; the depth-1 equations now state that
  actions distribute over joins and preserve bottom. Second, the
  \emph{graded theory of serial (non-deterministic) traces} arises by
  omitting~$0$ and associated axioms from the graded theory of traces,
  and yields a presentation of $\M_{\powf^+}(\A)$.

\item\label{item:simulation-a} The \emph{graded theory of simulation}
  has the same signature and depth-0 equations as the graded theory of
  traces, 
  along with depth-1 equations stating that actions are monotone:
\[
a(x+y) + a(x) = a(x+y).
\]
The theory of simulation equivalence then
yields a presentation of the graded monad
with $M_nX$ defined inductively along with
a partial ordering as follows: We take
\mbox{$M_0X = \powf(X)$} ordered by set inclusion.
We equip $\A\times M_nX$ with the product ordering
of the discrete order on $\A$ and the given
ordering on $M_nX$. Then $M_{n+1}X = \powf^{\downarrow}(\A\times M_nX)$ is
the set  of downwards-closed finite subsets of
$\A\times M_nX$.
\end{enumerate}
\end{expl}

In the following lemma, an \emph{epi-transformation} is a natural
transformation $\alpha$ whose components $\alpha_X$ are surjective maps.

\begin{notheorembrackets}
\begin{lem}[{\cite{MPS15}}]\label{L:depth-1}
  A graded monad $\M$ on $\Set$ is \emph{depth-1} if and only if
  all $\mu^{1,n}$ are epi-transformations and the following is
  object-wise a coequalizer diagram in the category of
  Eilenberg-Moore algebras for the monad $M_0$ for all
  $n\in\omega$:
  \begin{equation}
    \begin{tikzcd}[column sep = 35]\label{Diagram:depth1}
      M_1M_0M_n \arrow[r, "M_1\mu^{0,n}", shift left] \arrow[r,
      "\mu^{1,0}M_n"', shift right] & M_1M_n \arrow[r, "\mu^{1,n}"] &
      M_{1+n}.
    \end{tikzcd}
  \end{equation}
\end{lem}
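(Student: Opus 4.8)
The plan is to prove the two implications separately, in each case mediating between $\M$ and the graded theory built from its two bottom layers $M_0$ and $M_1$.

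\emph{Depth-$1$ implies the two conditions.} Suppose $\M\cong\M_{\T}$ for a depth-$1$ theory $\T=(\Sigma,\E)$ and argue syntactically. That $\mu^{1,n}$ is an epi-transformation is a term-peeling argument: since every operation of $\Sigma$ has depth $0$ or $1$, a structural induction shows that every uniform-depth-$(1{+}n)$ term $s$ over a set $X$ can be written as $\hat s[t_w/w]_{w\in W}$ with $\hat s$ of uniform depth $1$ over fresh variables $W$ and each $t_w$ of uniform depth $n$ over $X$ (a depth-$1$ operation at the top forces its arguments to be of uniform depth $n$; a depth-$0$ operation is propagated into the context $\hat s$; constants sit at whichever level is needed), so $[s]$ lies in the image of $\mu^{1,n}_X$. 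For the coequalizer, I would first record that the pair $(M_1\mu^{0,n},\mu^{1,0}M_n)$ is reflexive, with common section $M_1\eta M_n$ (using the unit laws $\mu^{0,n}\cdot\eta M_n=\id$ and $\mu^{1,0}\cdot M_1\eta=\id$), and then verify that $M_{1{+}n}X$, with its canonical $M_0$-algebra structure, is the quotient of $M_1M_nX$ by the $M_0$-algebra congruence generated by this pair: the only equational manipulations of a uniform-depth-$(1{+}n)$ term that alter where its depth-$n$ subterms are deemed to begin amount to sliding $\Sigma_0$-structure across that frontier, which is precisely the discrepancy between $M_1\mu^{0,n}$ and $\mu^{1,0}M_n$, while every other step is an $M_0$-congruence step. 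Making this last point precise --- pinning down that the congruence identifying uniform-depth-$(1{+}n)$ terms is generated by exactly the $M_0$-structure together with the pair, with nothing extra --- is the step I expect to be the main obstacle; soundness and completeness of graded equational derivation over graded algebras (\cref{sec:algebras}) is the tool.

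\emph{The two conditions imply depth-$1$.} Conversely, assume both conditions and build a depth-$1$ theory $\T=(\Sigma_0\cup\Sigma_1,\E_0\cup\E_1)$: take $(\Sigma_0,\E_0)$ to be a (possibly infinitary) presentation of the plain monad $M_0$; for each set $k$ and each $\omega\in M_1k$ put a $k$-ary depth-$1$ operation into $\Sigma_1$, interpreted in $M_1X$ by sending an assignment $\bar t\colon k\to M_0X$ to $(\bar t)^*_1(\omega)\in M_1X$; and let $\E_1$ collect all depth-$1$ equations valid under this interpretation. Since $\M$ interprets $\Sigma$ and validates $\E$, the universal property of $\M_{\T}$ yields a graded monad morphism $\alpha\colon\M_{\T}\to\M$, and by soundness and completeness of graded equational logic $\alpha_0$ and $\alpha_1$ are isomorphisms; for $\alpha_1$ one uses that $(\eta_X)^*_1=\id_{M_1X}$, so that every element of $M_1X$ is hit by the interpretation, and that $\E_1$ was chosen to contain every valid depth-$1$ equation.

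\emph{The induction.} It remains to show $\alpha_n$ is an isomorphism for all $n$, which I would do by induction on $n$, the cases $n=0,1$ being in hand. For $n\ge 1$, the direction just proved applies to $\M_{\T}$ (which has the depth-$1$ presentation $\T$), so \eqref{Diagram:depth1} is an object-wise coequalizer in $\Set^{M_0}$ for $\M_{\T}$, and it is one for $\M$ by hypothesis. As $\alpha$ is compatible with the units, the multiplications and the induced $M_0$-actions, it is a morphism between these two coequalizer diagrams; $\alpha_1$ is an isomorphism of functors and $\alpha_n$ an isomorphism by the induction hypothesis, so --- since functors preserve isomorphisms --- $\alpha$ is an isomorphism on the objects $M_1M_0M_nX$ and $M_1M_nX$, hence on their coequalizers, i.e.\ $\alpha_{1{+}n}$ is an isomorphism. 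This closes the induction, so $\M\cong\M_{\T}$ with $\T$ depth-$1$. The residual chores are to confirm that $\Sigma_1$ and $\E_1$ are genuinely depth-$1$ data and that $\alpha$ is well defined and natural.
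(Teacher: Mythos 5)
The paper does not actually prove this lemma --- it is imported from \cite{MPS15} with a citation and used as a black box --- so the comparison is against the known argument there rather than against anything in this text. Your architecture matches that argument: term-peeling for surjectivity of $\mu^{1,n}$, a congruence analysis for the coequalizer, and for the converse a depth-1 theory whose depth-1 operations are the elements of $M_1k$ interpreted via the graded Kleisli star, followed by an induction transporting the coequalizer property along the graded monad morphism $\alpha\colon\M_{\T}\to\M$. The induction step itself is sound: isomorphic reflexive parallel pairs have isomorphic coequalizers, and $\alpha_{1+n}$ is indeed the induced map because $\alpha$ commutes with $\mu^{1,n}$.

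The gap is the one you flag yourself, and it is not a residual chore but the entire technical content of the forward direction. Note that the fact that $\mu^{1,n}$ \emph{coequalizes} the pair $(M_1\mu^{0,n},\mu^{1,0}M_n)$ is immediate from the associativity law with $(n,k,m)=(1,0,n)$; what has to be proved is the universal property, i.e.\ that the $M_0$-congruence on $M_1M_nX$ generated by this pair is \emph{all} of derivable equality of uniform-depth-$(1{+}n)$ terms. Closing this requires an induction over equational derivations showing that every derivation between depth-$(1{+}n)$ terms stratifies --- depth-1 axiom instances and depth-0 steps above the chosen frontier are absorbed into the quotient defining $M_1$, steps below it into the quotient defining $M_n$, congruence under depth-0 contexts into the $M_0$-algebra structure, and the only residue is relocation of depth-0 material across the frontier --- together with the verification that the decomposition $s=\hat s[t_w/w]$, which is far from unique, is well defined up to exactly that congruence. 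Until that is done, the forward implication is asserted rather than proved. A second, smaller defect sits in the converse: ``for each set $k$ and each $\omega\in M_1k$'' yields a proper class of operations, which is not a graded signature in the sense of \cref{def:theory}; one must restrict to arities below the accessibility rank of $M_1$ (finite arities in the finitary setting the paper works in), using that every element of $M_1X$ factors through $M_1$ of a small subset of $X$.
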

\end{notheorembrackets}

\section{Graded Behavioural Equivalences}\label{sec:semantics}
We next recall the notion of a graded semantics~\cite{MPS15}
on coalgebras for an endofunctor on $\Set$;
we illustrate several instantiations of subsequent interest.


\begin{defn}[Graded semantics]
  A \emph{(depth-1) graded semantics} for an endofunctor
  $G\colon\Set\to\Set$ is a pair $(\alpha, \M)$ consisting of a
  (depth-1) graded monad $\M$ on $\Set$ and a natural transformation
  $\alpha\colon G\to M_1$ .
\end{defn}
\noindent
Given a $G$-coalgebra $(X,\gamma)$, the graded
semantics $(\alpha, \M)$ induces a sequence of
maps $\gamma^{(n)}\colon X\to M_n1$ inductively
defined by
\begin{align*}
  \gamma^{(0)} &:= (X\xra{\eta_X}M_0X\xra{M_0!}M_01); \\
  \gamma^{(n+1)} &:= (X\xra{\alpha_X\cdot\gamma} M_1X
                                  \xra{M_1\gamma^{(n)}} M_1M_n 1
				 \xra{\mu^{1,n}_1} M_{1+n}1)
\end{align*}
(or, using the graded Kleisli star,
$\gamma^{(n+1)} = (\gamma^{(n)})^*_1\cdot\alpha_X\cdot\gamma$). We call
$\gamma^{(n)}(x)\in M_n1$ the \emph{$n$-step $(\alpha, \M)$-behaviour}
of $x\in X$.

\begin{defn}[Graded behavioural equivalence]
  States ${x\in X, y\in Y}$ in $G$-coalgebras $(X, \gamma)$ and
  $(Y, \delta)$ are \emph{depth-n behaviourally equivalent} under
  $(\alpha, \M)$ if ${\gamma^{(n)}(x) = \delta^{(n)}(y)}$, and
  \emph{$(\alpha, \M)$-behaviourally equivalent} if
  $\gamma^{(n)}(x) = \delta^{(n)}(y)$ for all $n\in\omega$. We
  refer to $(\alpha,\M)$-behavioural equivalence as a
  \emph{graded behavioural equivalence} or just a \emph{graded
    equivalence}.
\end{defn}

\begin{expl}\label{E:semantics}
We recall~~\cite[Section 4]{DMS19} several graded
equivalences, restricting primarily to LTS and PTS.
\begin{enumerate}
\item\label{item:sem-beh}
 For an endofunctor $G$ on $\Set$, finite-depth
 $G$-behavioural equivalence arises as the
 graded equivalence with $\M = \M_G$ and
 $\alpha = \id$, where $\M_G$ is the graded
 monad of
 \cref{E:graded-monad}.\ref{E:graded-monad:1}.
 By~\cref{rem:finite-depth}, it follows that
 $(\id, \M_G)$ captures full coalgebraic bisimilarity
 in case $G$ is finitary.

\item\label{item:sem-trace} Let $(X, \gamma)$ be an LTS, let
  ${x\in X}$, and let ${w\in\A^*}$ be a finite word over $\A$. We
  write $x\xra{w} y$ if the state $y$ can be reached on a path whose
  labels form the word $w$. A \emph{finite trace} at $x\in X$ is a
  word $w\in\A^*$ such that $x\xra{w} y$ for some $y\in X$; the set of
  finite traces at $x$ is denoted $\tau(x)$. States $x,y\in X$ are
  trace equivalent if $\tau(x)=\tau(y)$. Trace equivalence on finitely
  branching LTS is captured by the graded equivalence induced by
  $\M=\M_{\powf}(\A)$
  (cf.~\cref{E:graded-monad}.\ref{item:T-traces}), again with
  $\alpha=\id$; replacing $\powf$ with $\pow^+$ (or with $\powf^+$)
  yields trace equivalence on serial (and finitely branching) LTS.

\item\label{item:sem-prob} Probabilistic trace equivalence on PTS is
  the graded equivalence induced by $\M=\M_{\CalD_f}(\A)$
  (cf.~\cref{E:graded-monad}.\ref{item:T-traces}) and
  $\alpha = \id$: The maps~$\gamma^{(k)}$ equip states with
  distributions on length-$k$ action words, and the induced
  equivalence identifies states $x$ and $y$ whenever these
  distributions coincide at~$x$ and~$y$ for all~$k$.

\item\label{item:sem-sim}
Simulation equivalence on LTS can also be
construed as a graded equivalence by taking
$\M$ to be the graded monad described in
\cref{E:graded-theory}.\ref{item:simulation-a},
and
  \begin{eqnarray*}
    \alpha_X\colon\powf(\A\times
    X) & \to & \pow^{\downarrow}_f(\A\times\powf X) \\
    S\ & \mapsto &
    {\downarrow}\{(a,\{x\})\mid (a,x)\in S\}
  \end{eqnarray*}
where $\downarrow$ takes downsets.
\end{enumerate}
\end{expl}

\begin{rem}
  It follows from the depth-1 presentations described
  in~\cref{E:graded-theory} that the graded semantics mentioned in
  \cref{E:semantics} are depth-1.
\end{rem}

\section{Graded Algebras}\label{sec:algebras}
Graded monads come equipped with graded analogues
of both the Eilenberg-Moore and Kleisli constructions for
ordinary monads. In particular, we have a
notion of \emph{graded algebra}~\cite{FKM16, MPS15}:

\begin{defn}[Graded algebra]\label{D:gradedalgebra}
Let $k\in\omega$ and let $\M$ be a graded
monad on a category $\CatC$. An \emph{$M_k$-algebra}
$A$ consists of a family of $\CatC$-objects $(A_n)_{n\leq k}$
(the \emph{carriers}) and a family of $\CatC$-morphisms%
\begin{equation}
  a^{n,m}\colon M_nA_m\to A_{n+m}		\tag{$n+m\le k$}
\end{equation}
(the \emph{structure}) such that $a^{0,n}\cdot\eta_{A_n} = \id_{A_n}$
($n\leq k$) and
\begin{equation}\label{Diagram:gradedalgebralaw}
  \begin{tikzcd}[column sep = 35]
    M_nM_mA_r \arrow[d, "\mu^{n,m}"'] \arrow[r, "M_na^{m,r}"]
    &
    M_nA_{m+r} \arrow[d, "a^{n,m+r}"] \\
    M_{n+m}A_{r} \arrow[r, "a^{n+m,r}"]
    &
    A_{n+m+r}
  \end{tikzcd}
\end{equation}
for all $n,m,r\in\omega$ such that $n+m+r\leq k$. The \emph{i-part} of
an $M_k$-algebra $A$ is the $M_0$-algebra $(A_i, a^{0,i})$.

A \emph{homomorphism} from $A$ to an $M_k$-algebra
$B$ is a family of $\CatC$-morphisms
$h_n\colon A_n\to B_n$ ($n\leq k$) such that
\[
  h_{n+m} \cdot a^{n,m} = b^{n,m}\cdot M_nh_m
  \quad\text{for all $n,m\in\omega$ s.th.~$n+m\le k$.}
\]
  We
write $\Alg_k(\M)$ for the category of $M_k$-algebras
and their homomorphisms.

We define \emph{$M_{\omega}$-algebras} (and their
homomorphisms) similarly, by allowing the indices $n,m,r$
to range over $\omega$.

\end{defn}

\begin{rem}
  The above notion of $M_{\omega}$-algebra corresponds with the
  concept of graded Eilenberg-Moore algebras introduced by Fujii et
  al.~\cite{FKM16}. Intuitively, $M_{\omega}$-algebras are devices for
  interpreting terms of unbounded uniform depth. We understand
  $M_k$-algebras~\cite{MPS15} as a refinement of $M_{\omega}$-algebras
  which allows the interpretation of terms of uniform depth \emph{at
  most~$k$}. Thus, $M_k$-algebras serve as a formalism for
  specifying properties of states exhibited in \emph{$k$ steps}. For
  example, $M_1$-algebras are used to interpret one-step modalities of
  characteristic logics for graded semantics~\cite{DMS19,FMS21a}.
  Moreover, for a depth-1 graded monad, its $M_\omega$-algebras may
  be understood as compatible chains of $M_1$-algebras~\cite{MPS15},
  and  
  a depth-1 graded monad can be reconstructed from its $M_1$-algebras.
  %
\end{rem}

\noindent We will be chiefly interested in $M_0$- and $M_1$-algebras:

\begin{expl}
Let $\M$ be a graded monad on $\Set$.
\begin{enumerate}
\item An $M_0$-algebra is just an Eilenberg-Moore algebra for the
  monad $(M_0, \eta, \mu^{0,0})$.  It follows that $\Alg_0(\M)$ is
  complete and cocomplete, in particular has coequalizers.

\item An $M_1$-algebra is a pair $((A_0, a^{0,0}), (A_1, a^{0,1}))$ of
  $M_0$-algebras -- often we just write the carriers $A_i$ to also
  denote the algebras, by abuse of notation -- equipped with a
  \emph{main structure map} $a^{1,0}\colon M_1A_0\to A_1$ satisfying
  two instances of~\eqref{Diagram:gradedalgebralaw}.  One instance
  states that $a^{1,0}$ is an $M_0$-algebra homomorphism from
  $(M_1A_0, \mu^{0,1}_A)$ to $(A_1, a^{0,1})$ (\emph{homomorphy}); the
  other expresses that $a^{1,0}\cdot \mu^{1,0}= a^{1,0}\cdot
  M_1a^{0,0}$ (\emph{coequalization}):
  \begin{equation}\label{Diagram:coequalization}
    \begin{tikzcd}[column sep = 35]
      M_1M_0A_0 \arrow[r, "\mu^{1,0}", shift left] \arrow[r,
      "M_1a^{0,0}"', shift right] & M_1A_0 \arrow[r, "a^{1,0}"] & A_1.
    \end{tikzcd}
  \end{equation}
\end{enumerate}
\end{expl}
\begin{rem}
  The free $M_n$-algebra on a set~$X$ is formed in the expected way,
  in particular has
  carriers~$M_0X,\dots,M_nX$, see~\cite[Prop.~6.3]{MPS15}.
\end{rem}



\paragraph*{Canonical algebras.}
We are going to review the basic definitions and results on
\emph{canonical $M_1$-algebras}~\cite{DMS19}. Fix a graded
monad $\M$ on $\Set$.

We write $(-)_i\colon\Alg_1(\M)\to\Alg_0(\M)$, $i = 0,1$,
for the functor which sends an $M_1$-algebra $A$ to its $i$-part
$A_i$ and sends a homomorphism $h\colon A\to B$ to
$h_i\colon A_i\to B_i$.

\begin{defn}\label{D:canonical}
An $M_1$-algebra~$A$ is \emph{canonical} if it is free over its
$0$-part with respect to $(-)_0\colon\Alg_1(\M)\to\Alg_0(\M)$.
\end{defn}

\begin{rem}\label{rem:canonical}
  The universal property of a canonical algebra~$A$ is the following:
  for every $M_1$-algebra $B$ and every $M_0$-algebra homomorphism
  $h\colon A_0\to B_0$, there exists a unique $M_1$-algebra
  homomorphism $h^\#\colon A\to B$ such that $(h^{\#})_0 = h_0$.
\end{rem}

\begin{notheorembrackets}
\begin{lem}[{\cite[Lem.~5.3]{DMS19}}]\label{L:canonical-algebra}
An $M_1$-algebra $A$ is canonical if and only if
  (\ref{Diagram:coequalization}) is a coequalizer in $\Alg_0(\M)$.
\end{lem}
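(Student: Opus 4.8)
The statement to prove is \cref{L:canonical-algebra}: an $M_1$-algebra $A$ is canonical (i.e.\ free over its $0$-part with respect to $(-)_0$) if and only if the diagram \eqref{Diagram:coequalization} is a coequalizer in $\Alg_0(\M)$.

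The plan is to exploit the explicit description of free $M_1$-algebras over \emph{sets} (as noted in the remark, the free $M_1$-algebra on a set $X$ has carriers $M_0X, M_1X$ with the evident structure maps $\mu^{0,0}, \mu^{0,1}, \mu^{1,0}$). I would factor the problem through the free-algebra adjunction for the monad $M_0$: given an $M_1$-algebra $A$, its $0$-part $A_0 = (A_0, a^{0,0})$ is an $M_0$-algebra, and we can write $A_0$ canonically as a coequalizer of free $M_0$-algebras, namely the standard presentation $M_0M_0A_0 \rightrightarrows M_0A_0 \to A_0$ with maps $\mu^{0,0}$ and $M_0a^{0,0}$. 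Applying the free-$M_1$-algebra functor $L\colon\Alg_0(\M)\to\Alg_1(\M)$ (which exists object-wise when canonical algebras exist, and on free $M_0$-algebras $M_0Y$ yields the free $M_1$-algebra on $Y$) and using that left adjoints preserve coequalizers, one gets that the free $M_1$-algebra $L(A_0)$ over $A_0$ is the object-wise coequalizer (in $\Alg_1(\M)$, hence computed level-wise in $\Alg_0(\M)$ by part~(1) of the preceding example) of the pair of $M_1$-algebra homomorphisms between the free $M_1$-algebras on $M_0A_0$ and on $A_0$. At level $1$, this coequalizer diagram is precisely \eqref{Diagram:coequalization} with the top object being $M_1M_0A_0$, the middle $M_1A_0$, and the comparison map into $(L(A_0))_1$.

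From here the argument is a comparison of universal properties. In one direction, if $A$ is canonical, then $A \cong L(A_0)$ by definition, and in particular the $1$-level carrier $A_1$ with the structure map $a^{1,0}$ is (isomorphic to) the coequalizer just described, so \eqref{Diagram:coequalization} is a coequalizer in $\Alg_0(\M)$. Conversely, suppose \eqref{Diagram:coequalization} is a coequalizer in $\Alg_0(\M)$. I would verify the universal property of \cref{rem:canonical} directly: given an $M_1$-algebra $B$ and an $M_0$-homomorphism $h\colon A_0\to B_0$, one needs a unique $M_1$-homomorphism $h^\#\colon A\to B$ with $(h^\#)_0 = h$. The $0$-component is forced to be $h$; the $1$-component $h^\#_1\colon A_1\to B_1$ must satisfy the homomorphy square $h^\#_1\cdot a^{1,0} = b^{1,0}\cdot M_1h$. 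Now $b^{1,0}\cdot M_1h\colon M_1A_0\to B_1$ is an $M_0$-homomorphism (from $(M_1A_0,\mu^{0,1})$ to $B_1$, using naturality of $\mu^{0,1}$ and homomorphy of $b^{1,0}$) which coequalizes the pair $\mu^{1,0}, M_1a^{0,0}$ — the latter because $b^{1,0}$ already coequalizes the corresponding pair over $B_0$ and $h$ is a homomorphism, so one pushes the computation through $M_1h$ using naturality. Hence by the universal property of the coequalizer \eqref{Diagram:coequalization} there is a unique $M_0$-homomorphism $h^\#_1\colon A_1\to B_1$ with $h^\#_1\cdot a^{1,0} = b^{1,0}\cdot M_1h$; this gives existence and uniqueness of $h^\#_1$, and one checks $(h^\#_0, h^\#_1)$ is indeed an $M_1$-homomorphism (the only non-trivial square is the main-structure one, which holds by construction).

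The main obstacle, and the step I would be most careful about, is the bookkeeping in the converse direction: checking that $b^{1,0}\cdot M_1h$ is genuinely an $M_0$-algebra homomorphism out of $(M_1A_0, \mu^{0,1}_{A_0})$ and that it coequalizes the pair in \eqref{Diagram:coequalization}, both of which rest on naturality of the multiplications $\mu^{0,1}$ and $\mu^{1,0}$ together with the two instances of \eqref{Diagram:gradedalgebralaw} defining an $M_1$-algebra. None of this is deep, but it is the place where the graded-monad laws are actually used, and it is easy to mis-state which diagram one is chasing. Alternatively, one can avoid the explicit diagram chase entirely by the abstract argument above (free $M_1$-algebra functor preserves the canonical presentation coequalizer), provided one first establishes that $L$ exists and that coequalizers in $\Alg_1(\M)$ are computed level-wise in $\Alg_0(\M)$; I would likely present the abstract version as the clean proof and relegate the concrete verification to a remark.
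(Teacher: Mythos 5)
First, note that the paper does not actually prove \cref{L:canonical-algebra}: it is imported from \cite{DMS19} (Lem.~5.3), so there is no in-paper proof to compare against. On the merits, your converse direction is correct and complete, and it is essentially the standard argument from the cited source: uniqueness of $h^\#_1$ is forced because a coequalizer map $a^{1,0}$ is epi, existence follows from the universal property of the coequalizer once one checks that $b^{1,0}\cdot M_1h$ is an $M_0$-homomorphism out of $(M_1A_0,\mu^{0,1}_{A_0})$ that coequalizes the pair in \eqref{Diagram:coequalization}, and both checks go through exactly as you describe using naturality of $\mu^{0,1},\mu^{1,0}$ and the homomorphy and coequalization laws for~$B$.

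Your forward direction, however, rests on an unjustified step: the claim that coequalizers in $\Alg_1(\M)$ are computed level-wise in $\Alg_0(\M)$, for which cocompleteness of $\Alg_0(\M)$ is not sufficient. To equip the level-wise coequalizer $(C_0,C_1)$ of a parallel pair of $M_1$-homomorphisms with a main structure map $M_1C_0\to C_1$ one must factor $q_1\cdot b^{1,0}$ through $M_1q_0$, which requires $M_1$ to preserve (or at least send to an epimorphism) the level-$0$ coequalizer $q_0$; this is not automatic, so the ``hence computed level-wise'' step does not stand as written. Two clean repairs are available. (a) Derive the forward direction from your own converse: the coequalizer $q\colon M_1A_0\to C$ of the pair in \eqref{Diagram:coequalization} always exists in $\Alg_0(\M)$ and makes $(A_0,C,q)$ an $M_1$-algebra (homomorphy because $q$ is an $\Alg_0(\M)$-morphism, coequalization by construction), which is canonical by your converse; uniqueness of free objects over $A_0$ then yields an isomorphism $C\cong A_1$ compatible with $q$ and $a^{1,0}$, so $a^{1,0}$ is itself a coequalizer. (b) Argue directly and symmetrically to the converse: given any $M_0$-homomorphism $g\colon M_1A_0\to B$ coequalizing the pair, $(A_0,B,g)$ is an $M_1$-algebra, and canonicity of $A$ applied to $\id_{A_0}$ produces the unique factorization $A_1\to B$ through $a^{1,0}$. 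Either route avoids the level-wise claim entirely.
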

\end{notheorembrackets}

\begin{expl}\label{E:canonical}
  Let $X$ be a set and let $\M$ be a depth-1 graded monad on $\Set.$
  For each ${k\in\omega}$, we may view $M_kX$ as an $M_0$-algebra with
  structure $\mu^{0,k}$. For the $M_1$-algebra $(M_kX, M_{k+1}X)$
  (with main structure map $\mu^{1,k}$), the instance of
  Diagram~(\ref{Diagram:coequalization}) required
  by~\cref{L:canonical-algebra} is a coequalizer
  by~\cref{L:depth-1}; that is, $(M_kX, M_{k+1}X, \mu^{1,k})$ is
  canonical.
\end{expl}

\section{Pre-Determinization in Eilenberg-Moore}\label{sec:determinization}
We describe a generic notion of pre-determinization (the terminology
will be explained in \cref{rem:determinization}) for coalgebras of
an endofunctor $G$ on $\Set$ with respect to a given depth-1 graded
semantics $(\alpha,\M)$, generalizing the Eilenberg-Moore-style
coalgebraic determinization construction by Silva et
al.~\cite{BBSR13}. The behavioural equivalence game introduced in the
next section will effectively be played on the pre-determinization of
the given coalgebra. We will occasionally gloss over issues of finite
branching in the examples.

We first note that every $M_0$-algebra $A$ extends (uniquely) to a
canonical $M_1$-algebra $EA$ (with $0$-part $A$), whose $1$-part and
main structure are obtained by taking the coequalizer of the pair of
morphisms in \eqref{Diagram:coequalization} (canonicity then follows
by~\cref{L:canonical-algebra}). This construction forms the object
part of a functor $\Alg_0(\M)\to\Alg_1(\M)$ which sends a homomorphism
$h\colon A\to B$ to its unique extension $Eh:=h^\sharp\colon EA\to EB$
(cf.~\cref{rem:canonical}). We write $\barM$ for the endofunctor on
$\Alg_0(\M)$ given by
\begin{equation}\label{eq:barM}
\barM:= (\Alg_0(\M)\xra{E}\Alg_1(\M)\xra{(-)_1}\Alg_0(\M)),
\end{equation}
where $(-)_1$ is the functor taking $1$-parts.  Thus, for an
$M_0$-algebra~$A_0$, $\barM(A_0)$ is the vertex of the coequalizer
\eqref{Diagram:coequalization}.

By \cref{E:canonical}, we have
\begin{equation}
  \label{eq:barM-can}
  \barM(M_kX,\mu^{0,k}_X)=(M_{k+1}X,\mu^{0,k+1}_X)
\end{equation}
for every set~$X$ and every $k\in\omega$. In particular,
\begin{equation}\label{Eq:determinization}
  U\barM F = M_1
\end{equation}
where $F\dashv U\colon \Alg_0(\M) \to \Set$ is the canonical
adjunction of the Eilenberg-Moore category of~$M_0$ -- that is,~$U$ is
the forgetful functor, and~$F$ takes free $M_0$-algebras, so
$FX=(M_0X,\mu^{00}_X)$. For an $M_1$-coalgebra
$f\colon X\to M_1X=U\barM FX$, we therefore obtain a homomorphism
$f^\#\colon FX\to\barM FX$ (in $\Alg_0(\M)$) via adjoint
transposition. This leads to the following pre-determinization
construction:

\begin{defn}\label{D:determinization}
Let $(\alpha, \M)$ be a depth-1 graded semantics on
$G$-coalgebras.
The \emph{pre-determinization} of a $G$-coalgebra
$(X, \gamma)$ under $(\alpha, \M)$
is the $\barM$-co\-al\-gebra
\begin{equation}\label{eq:det}
  (\alpha_X\cdot\gamma)^\#\colon FX\to \barM FX.
\end{equation}
\end{defn}
\begin{rem}\label{rem:determinization}
  \begin{enumerate}
  \item\label{item:predet-det} We call this construction a
    \emph{pre-}de\-ter\-mi\-ni\-zat\-ion because it will serve as a
    \emph{determinization} -- in the expected sense that the
    underlying graded equivalence transforms into behavioural
    equivalence on the determinization -- only under additional
    conditions. Notice that given a $G$-coalgebra $(X,\gamma)$,
    (finite-depth) behavioural equivalence on the $\barM$-coalgebra
    $(\alpha_X\cdot\gamma)^*_0$ is given by the canonical cone into
    the final chain
    \begin{equation*}
      1 \xla{!} \barM 1 \xla{\overbar M_1 !} \barM^{2}1 \xla{\overbar
          M_1^2 !} \cdots
    \end{equation*}
    while graded behavioural equivalence on $(X,\gamma)$ is given by
    the maps $\gamma^{(k)}$ into the sequence $M_01,M_11,M_21,\dots$,
    equivalently given as homomorphisms
    $(\gamma^{(k)})^*_0\colon FX\to(M_k1,\mu^{0,k}_1)$, whose codomains
    can, by~\eqref{eq:barM-can}, be written as the sequence
    \begin{equation*}
      F1,\quad \barM 1,\quad \barM^21,\quad \ldots
    \end{equation*}
    of $M_0$-algebras. The two sequences coincide in case $M_01=1$,
    and indeed one easily verifies that in this case, finite-depth
    behavioural equivalence on $\barM$-coalgebras coincides with
    $(\alpha,\M)$-behavioural equivalence.  For instance, this holds
    in the case of probabilistic trace equivalence
    (\cref{E:semantics}.\ref{item:sem-prob}), where $M_0=\CalD$, so
    $M_01=1$. In the case of trace equivalence
    (\cref{E:semantics}.\ref{item:sem-trace}), $M_01=1$ can be
    ensured by restricting to serial labelled transition systems,
    which, as noted in \cref{E:coalg}.\ref{E:coalg:1}, are
    coalgebras for $\pow^+(\A\times -)$ with~$\pow^+$ denoting
    non-empty powerset, so that in the corresponding variant of the
    graded monad for trace semantics, we have $M_0=\pow^+$ and hence
    $M_01=1$.

    On the other hand, the condition $M_01=1$ fails for trace
    equivalence of unrestricted systems where we have~\mbox{$M_0=\pow$,}
    which in fact
    constitutes a radical example where behavioural equivalence on the
    pre-determization is strictly coarser than the given graded
    equivalence. In this case, since the actions preserve the
    bottom~$0$, we in fact have~$\barM 1=1$: it follows that \emph{all}
    states in $\barM$-coalgebras are behaviourally equivalent (as
    the unique coalgebra structure on~$1$ is final).


  \item Using~\eqref{Eq:determinization}, we see that the underlying
    map of the pre-determinization of a coalgebra $(X,\gamma)$ is
    $(\alpha_X \cdot \gamma)^*_0 \colon M_0X \to M_1 X = U_0\barM F_0
    X$ (written using graded Kleisli star as
    per~\cref{N:star}). Indeed, one easily shows that
    $(\alpha_X \cdot\gamma)^*_0$ is an $M_0$-algebra
    morphism 
    $(M_0 X, \mu^{0,0}_X)\to\barM
    (M_0X,\mu^{0,0}_X)=(M_1X,\mu^{0,1}_X)$ satisfying
    $(\alpha_X \cdot \gamma)^*_0 \cdot \eta_X = \alpha_X \cdot
    \gamma$. Thus, it is the adjoint transpose in~\eqref{eq:det}.

  \item As indicated above, pre-de\-ter\-mi\-ni\-za\-tion captures the
    Eilenberg-Moore style generalized determinization by Silva et
    al.~\cite{BBSR13} as an instance. Indeed, for a monad $T$ and an
    endofunctor $F$, both on the category $\CatC$, one considers a
    coalgebra $\gamma\colon X \to FTX$. Assuming that $FTX$ carries
    the structure of an Eilenberg-Moore algebra for $T$ (e.g.~because
    the functor~$F$ lifts to the category of Eilenberg-Moore algebras
    for $T$), one obtains an
    $F$-coalgebra $\gamma^\sharp\colon TX \to FTX$ by taking the
    unique homomorphic extension of $\gamma$. Among the concrete
    instances of this construction are the well-known powerset
    construction of non-deterministic automata (take $T = \pow$ and
    $F = 2 \times (-)^A$), the non-determinization of alternating
    automata and that of Markov decision processes~\cite{JSS15}.

    To view this as an instance of pre-de\-ter\-mi\-ni\-za\-tion, take
    the graded monad with $M_n = F^nT$
    (\cref{E:graded-monad}.\ref{E:graded-monad:4}), let $G = FT$,
    and let $\alpha = \id_{FT}$. Using~\eqref{Eq:determinization}, we
    see that $(\alpha_X \cdot \gamma)^\#$ in~\eqref{eq:det} is the
    generalized determinization $\gamma^\sharp$ above.
  \item We emphasize that the construction applies completely
    universally; e.g.~we obtain as one instance a `determinization' of
    serial labelled transition systems modulo similarity, which
    transforms a coalgebra $X\to\pow^+(\A\times X)$ into an
    $\barM$-coalgebra
    $\pow^+(X)\to\pow^{\downarrow}(\A\times \pow^+(X))$
    (\cref{E:graded-theory}.\ref{item:simulation-a}); instantiating
    the observations in item~\ref{item:predet-det}, we obtain that
    finite-depth behavioural equivalence of $\barM$-coalgebras (see
    \cref{expl:barM} for the description of~$\barM$) coincides with
    finite-depth mutual similarity.
  \end{enumerate}
\end{rem}

\begin{expl}\label{expl:barM}
  We give a description of the functor~$\barM$ on $M_0$-algebras
  constructed above in some of the running examples.
  \begin{enumerate}[wide]
  \item For graded monads of the form $\M_G$, which capture
    finite-depth behavioural equivalence
    (\cref{E:graded-monad}.\ref{E:graded-monad:1}), we have
    $M_0=\Id$, so $M_0$-algebras are just sets, and under this
    correspondence, $\barM$ is the original functor~$G$.

  \item\label{item:mono-traces} Trace semantics of LTS
    (\cref{E:graded-theory}.\ref{item:traces-a}): Distribution of
    actions over the join semilattice operations ensures that depth-1
    terms over a join semilattice~$X$ can be normalized to sums of the
    form $\sum_{a\in \A}a(x_a)$, with $x_a\in X$ (possibly
    $x_a=0$). It follows that $\barM$ is simply given by
    $\barM X=X^\A$ ($\A$-th power, where~$\A$ is the finite set of labels).
    Other forms of trace semantics are treated similarly. 
  \item In the graded theory for simulation
    (\cref{E:graded-theory}.\ref{item:simulation-a}), the
    description of the induced graded monad~\cite{DMS19} extends
    analogously to~$\barM$, yielding that $\barM B$ is the join
    semilattice of finitely generated downwards closed subsets of
    $\A\times B$ where, again,~$\A$ carries the discrete
    ordering.
  \end{enumerate}
\end{expl}

\begin{rem}
  The assignment $\M \mapsto \barM$ exhibits the category $\K$ of
  depth-1 graded monads whose $0$-part is the monad
  $(M_0, \eta, \mu^{0,0})$ as a coreflective subcategory (up to
  isomorphism) of the category $\Fun(\Set^{M_0})$ of all endofunctors
  on the Eilenberg-Moore category of that monad.

  Indeed, given an endofunctor $H$ on $\Set^{M_0}$ we form the
  $6$-tuple
    $(M_0, UHF, \eta, \mu^{0,0}, \mu^{0,1},\mu^{1,0}),$
  where the latter two natural transformations arise from the counit
  $\eps\colon FU \to \Id$ of the canonical adjunction $F\dashv U\colon\Alg_0(\M_0)\to\Set$:
  \begin{align*}
    \mu^{0,1} &= (M_0UHF = UFUHF \xra{U\eps HF} UHF\big);\\
    \mu^{1,0} &= (UHFM_0 = UHFUF \xra{UHF\eps F} UHF\big).
  \end{align*}
  It is not difficult to check that this data satisfies all applicable
  instances of the graded monad laws. Hence, it specifies a
  depth-1 graded monad $R(H)$~\cite[Thm.~3.7]{DMS19}; this
  assignment is the object part of a functor $R\colon\Fun(\Set^{M_0})\to\K$.

  In the other direction, we have for each depth-1 graded monad $\M$
  with $0$-part $M_0$ the endofunctor $I(\M) =
  \barM$. By~\eqref{Eq:determinization}, we have $RI(\M) = \M$.
  Now, given a depth-1 graded monad $\M$ and an endofunctor $H$ on
  $\Set^{M_0}$, consider $\barM = IR(H)$ (so that $M_1 = UHF$). We
  obtain for every algebra $(A,a)$ in $\Set^{M_0}$ a homomorphism
  $c_{(A,a)}\colon \barM (A,a) \to H(A,a)$ by using the coequalizer
  defining $\barM(A,a)$ (cf.~\cref{L:canonical-algebra}):
  \[
    \begin{tikzcd}
      M_1M_0 A
      \ar[yshift=2]{r}{\mu^{0,1}_A}
      \ar[yshift=-2]{r}[swap]{M_1a}
      &
      M_1A=HFA
      \ar[->>]{r}
      \ar{rd}{Ha}
      &
      \barM (A,a)
      \ar[dashed]{d}{c_{(A,a)}}
      \\
      &&
      H(A,a)
    \end{tikzcd}
  \]
  Note that $M_1M_0A$ is the carrier of the Eilenberg-Moore algebra
  $HFA = H(M_0A, \mu^{0,0}_A)$ and similarly for the middle object (in
  both cases we have omitted the algebra structures given by
  $\mu^{0,1}_{M_0 A}$ and $\mu^{0,1}_A$ coming from the graded monad
  $I(H)$).  It is easy to see that the homomorphism $Ha$ merges the
  parallel pair, and therefore we obtain the dashed morphism such that
  the triangle commutes, yielding the components of a natural
  transformation $c\colon \barM \to H$ which is couniversal: for each
  depth-1 graded monad $\mathbb N$ whose $0$-part is $M_0$ and each
  natural transformation $h\colon \barM \to H$, there is a unique
  natural transformation $m_1\colon N_1 \to M_1 = UHF$ such that
  $m = (id_{M_0}, m_1)$ is a morphism of graded monads from
  $\mathbb N$ to~$\M$ and $c \cdot I(m) = h$. This shows that $I
  \dashv R$.
\end{rem}

\section{Behavioural Equivalence Games}\label{S:games}
Let $\CalS = (\alpha, \M)$ be a depth-1 graded semantics for an
endofunctor $G$ on $\Set$. We are going to describe a game for playing
out depth-$n$ behavioural equivalence under $\CalS$-semantics on
states in $G$-coalgebras.

We first give a description of the game in the syntactic language of
graded equational reasoning, and then present a more abstract
categorical definition. Given a coalgebra $(X,\gamma)$, we will see
the states in~$X$ as variables, and the map $\alpha_X\cdot\gamma$ as
assigning to each variable~$x$ a depth-1 term over~$X$; we can regard
this assignment as a (uniform-depth) substitution~$\sigma$. A
configuration of the game is a pair of depth-0 terms over~$X$; to play
out the equivalence of states $x,y\in X$, the game is started from the
initial configuration $(x,y)$.  Each round of the game then proceeds
in two steps: First, Duplicator plays a set~$Z$ of equalities between
depth-0 terms over~$X$ that she claims to hold under the
semantics. This move is admissible in the configuration $(s,t)$
if~$Z\vdash s\sigma=t\sigma$. Then, Spoiler challenges one of the
equalities claimed by Duplicator, i.e.~picks an
element~$(s',t')\in Z$, which then becomes the next configuration. Any
player who cannot move, loses.  After~$n$ rounds have been played,
reaching the final configuration $(s,t)$, Duplicator wins if
$s\theta =t\theta$ is a valid equality, where~$\theta$ is a
substitution that identifies all variables. We refer to this last
check as \emph{calling the bluff}. Thus, the game plays out an
equational proof between terms obtained by unfolding depth-0 terms
according to~$\sigma$, cutting off after~$n$ steps.

We introduce some technical notation to capture the notion of
admissibility of~$Z$ abstractly:
\begin{notn}\label{N:admissible}
  Let $Z\subseteq M_0X\times M_0X$ be a relation, and let
  $c_Z\colon M_0X\to C_Z$ be the coequalizer in $\Alg_0(\M)$ of the
  homomorphisms $ \ell_0^*, r_0^*\colon M_0Z\to M_0X $ given by
  applying the Kleisli star~\eqref{Eqn:Kleisli-star} to the
  projections $\ell, r\colon Z\to M_0X$. We define a homomorphism
  $\barZ\colon M_0X\to M_1C_Z$ in $\Alg_0(\M)$ by
  \begin{equation}\label{eq:barZ}
    \barZ = \big(M_0X\xra{(\alpha_X\cdot\gamma)^*_0} M_1X = \barM M_0X
    \xra{\overbar M_1 c_Z} \barM C_Z\big)
  \end{equation}
  (omitting algebra structures, and again using the Kleisli star).
\end{notn}

\begin{rem}\label{R:coeq}
  Using designators as in \cref{N:admissible}, we note:
  \begin{enumerate}
  \item\label{R:coeq:1}\label{R:coeq:2} By the universal property of
    $\eta_Z\colon Z \to M_0Z$, an $M_0$-algebra homomorphism
    $h\colon M_0X \to A$ merges $\ell, r$ iff it merges
    $\ell^*_0, r^*_0$. This implies that the coequalizer $M_0X \xra{c_Z}C_Z$
    quotients the free $M_0$-algebra $M_0X$ by the congruence
    generated by~$Z$. Also, it follows that in case~$Z$ is already an
    $M_0$-algebra and $\ell, r\colon Z \to M_0X$ are $M_0$-algebra
    homomorphisms (e.g.~when $Z$ is a congruence), one may take
    $c_Z\colon M_0X \to C_Z$ to be the coequalizer of $\ell, r$.

  \item\label{item:barZ} The map $\barZ\colon M_0X\to\barM C_Z$
    associated to the relation $Z$ on $M_0X$ may be understood as
    follows. As per the discussion above, we view the states of the
    coalgebra $(X,\gamma)$ as variables, and the map
    $X\xra{\gamma} GX\xra{\alpha_X} M_1X$ as a substitution mapping a
    state $x \in X$ to the equivalence class of depth-1 terms encoding
    the successor structure $\gamma(x)$. The second factor $\barM c_Z$
    in~\eqref{eq:barZ} then essentially applies the relations given by
    the closure of $Z$ under congruence w.r.t.~depth-0 operations,
    embodied in~$c_Z$ as per~\ref{R:coeq:1}, under depth-1 operations
    in (equivalence classes of) of depth-1 terms in $M_1X$; to sum up,
    $\barM c_Z$ merges a pair of equivalence classes $[t], [t']$ iff
    $Z\vdash t=t'$ in a depth-1 theory presenting $\M$ (in notation as
    per \cref{sec:prelims}). 
  \end{enumerate}
\end{rem}

\begin{defn}\label{def:game}
  For $n\in\omega$, the \emph{$n$-round $\CalS$-behavioural
    equivalence game} $\CalG_n(\gamma)$ on a $G$-coalgebra
  $(X, \gamma)$ is played by Duplicator (D) and Spoiler
  (S). \emph{Configurations} of the game are pairs
  $(s,t)\in M_0(X)\times M_0(X)$. Starting from an \emph{initial
    configuration} designated as needed, the game is played for~$n$
  rounds. Each round proceeds in two steps, from the current
  configuration~$(s,t)$: First, D chooses a relation
  $Z\subseteq M_0X\times M_0X$ such that $\barZ(s) = \barZ(t)$
  (for~$\barZ$ as per \cref{N:admissible}). Then,~S
  picks an element~$(s',t') \in Z$, which becomes the next configuration. Any
  player who cannot move at his turn, loses. After~$n$ rounds have
  been played,~D wins if $M_0!(s_n) = M_0!(t_n)$; otherwise,~S wins.
\end{defn}

\begin{rem}
  By the description of~$\barZ$ given in
  \cref{R:coeq}.\ref{item:barZ}, the categorical definition of the
  game corresponds to the algebraic one given in the lead-in
  discussion. The final check whether $M_0!(s_n)=M_0!(t_n)$
  corresponds to what we termed \emph{calling the bluff}. The apparent
  difference between playing either on depth-0 terms or on elements
  of~$M_0X$, i.e.~depth-0 terms modulo derivable equality, is absorbed
  by equational reasoning from~$Z$, which may incorporate also the
  application of depth-0 equations.
\end{rem}
\begin{rem}
  A pair of states coming from different coalgebras $(X,\gamma)$ and
  $(Y,\delta)$ can be treated by considering those states as elements
  of the coproduct of the two coalgebras:
  \[
    X+Y \xra{\gamma + \delta} GX + GY \xra{[G\inl, G\inr]} G(X+Y),
  \]
  where $X \xra{\inl} X+Y \xla{\inr} Y$ denote the coproduct
  injections. There is an evident variant of the game played on two
  different coalgebras $(X,\gamma)$, $(Y,\delta)$, where moves of~D
  are subsets of $M_0X\times M_0Y$. However, completeness of this
  version depends on additional assumptions on~$\M$, to be clarified
  in future work. For instance, if we instantiate the graded monad for
  traces with effects specified by~$T$
  (\cref{E:graded-monad}.\ref{item:T-traces}) to~$T$ being the free
  real vector space monad, and a state~$x\in X$ has successor
  structure $2\cdot x'-2\cdot x''$, then~D can support equivalence
  between~$x$ and a deadlock~$y\in Y$ (with successor structure~$0$)
  by claiming that $x'=x''$, but not by any equality between terms
  over~$X$ with terms over~$Y$. That is, in this instance, the variant
  of the game where~D plays relations on $M_0X\times M_0Y$ is not
  complete.
\end{rem}

\noindent Soundness and completeness of the game with respect to
$\CalS$-behavioural equivalence is stated as follows.

\begin{thm}\label{T:sound-complete}
  Let $(\alpha, \M)$ be a depth-1 graded semantics for a functor~$G$
  such that $\barM$ preserves monomorphisms, and let $(X, \gamma)$ be
  a $G$-coalgebra. Then, for all $n\in\omega$,~D wins $(s, t)$ in
  $\CalG_n(\gamma)$ if and only if
  $(\gamma^{(n)})^*_0(s) = (\gamma^{(n)})^*_0(t)$.
\end{thm}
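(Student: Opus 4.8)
The plan is to prove the two directions of the biconditional by induction on $n$, where the key invariant carried through the induction is the relationship between winning positions in $\CalG_n(\gamma)$ and equality of the canonical-cone maps for the pre-determinization $\barM$-coalgebra. First I would reformulate the statement in terms of the pre-determinization: by \cref{rem:determinization}.\ref{item:predet-det}, the map $(\gamma^{(n)})^*_0\colon M_0X\to M_n1$ is precisely the $n$-th canonical-cone map of the $\barM$-coalgebra $(\alpha_X\cdot\gamma)^*_0\colon M_0X\to\barM M_0X$ (composed appropriately with the iterated $\barM_1!$), and $M_0!(s)=M_0!(t)$ is the $0$-th such map. So the theorem says: D wins $(s,t)$ in $\CalG_n$ iff $s,t$ are identified by the $n$-th cone map of the pre-determinization. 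This recasting makes the inductive structure transparent.

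For the base case $n=0$: no rounds are played, D wins $(s,t)$ iff $M_0!(s)=M_0!(t)$, which is exactly $(\gamma^{(0)})^*_0(s)=(\gamma^{(0)})^*_0(t)$ by definition of $\gamma^{(0)}$ and the Kleisli-star laws \eqref{item:star-2}. For the inductive step, suppose the claim holds for $n$; consider the $(n{+}1)$-round game from $(s,t)$. The crucial link is between the ``admissibility'' condition $\barZ(s)=\barZ(t)$ for a relation $Z$ chosen by D and the requirement that D can continue to win for $n$ more rounds at every challenge $(s',t')\in Z$. Here I would use \cref{R:coeq}: the coequalizer $c_Z\colon M_0X\to C_Z$ quotients $M_0X$ by the congruence generated by $Z$, and $\barZ=\barM c_Z\cdot(\alpha_X\cdot\gamma)^*_0$. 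By the inductive hypothesis, D wins all challenges $(s',t')\in Z$ in $n$ rounds iff $c_{n}\cdot c_Z$ merges $s'$ and $t'$ for the appropriate cone map $c_n$ of the pre-determinization starting from $C_Z$ — equivalently, iff $Z$ is contained in the kernel of the $n$-step behaviour map out of $C_Z$. I would then show that the existence of such a ``good'' $Z$ with $\barZ(s)=\barZ(t)$ is equivalent to the $(n{+}1)$-step behaviour map merging $s$ and $t$: for one direction, take $Z$ to be the kernel of the $n$-step behaviour map on $M_0X$ itself (which is an $M_0$-congruence, so by \cref{R:coeq}.\ref{R:coeq:1} one may use the coequalizer of the two projections directly), and use naturality of $\mu^{1,n}$ together with the definition of $\gamma^{(n+1)}=(\gamma^{(n)})^*_1\cdot\alpha_X\cdot\gamma$; for the converse, chase the square relating $\barZ$, $\barM c_Z$, and the cone maps, using that $\barM$ is a functor and the universal property of $c_Z$.

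The step where the hypothesis that $\barM$ preserves monomorphisms is essential — and where I expect the main obstacle to lie — is in the converse direction of the inductive step, i.e.\ in showing \emph{completeness}: that whenever $(\gamma^{(n+1)})^*_0$ merges $s$ and $t$, D actually has an admissible move $Z$ that keeps her winning. The natural candidate $Z=\Ker\big((\gamma^{(n)})^*_0\big)$ is an $M_0$-congruence, so $C_Z$ is its quotient and there is an induced mono $m\colon C_Z\monoto (M_n1,\mu^{0,n}_1)$ through which $(\gamma^{(n)})^*_0$ factors; applying $\barM$ and using that $\barM$ preserves monos gives that $\barM m$ is monic, so that $\barM c_Z$ identifies exactly as much as $\barM(\gamma^{(n)})^*_0$ does on the relevant elements. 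One then needs $\barM(\gamma^{(n)})^*_0\cdot(\alpha_X\cdot\gamma)^*_0$ to agree, after applying $\mu$, with $(\gamma^{(n+1)})^*_0$ — this is a diagram chase using the graded Kleisli-star law \eqref{item:star-1} and naturality. The mono-preservation is what converts ``$\barM c_Z$ merges $[t],[t']$'' back into ``the $n$-step behaviour out of $M_0X$ merges $t,t'$'', i.e.\ it is what makes the congruence generated by $Z$ no coarser than needed. I would isolate this as a lemma: for an $M_0$-congruence $Z$ with quotient $C_Z$, the map $\barM c_Z$ has the same kernel on $M_1X$ as $\barM q$ for any mono-factorization $q$ of a representative of the congruence — granting mono-preservation this is immediate, and without it completeness can genuinely fail, matching the degenerate counterexample for unrestricted traces ($M_0=\pow$) flagged in \cref{rem:determinization}.
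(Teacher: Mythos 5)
Your proposal is correct and follows essentially the same route as the paper's proof: induction on $n$ with the base case $M_0!=(\gamma^{(0)})^*_0$, completeness handled by playing $Z=\Ker\big((\gamma^{(n)})^*_0\big)$, factoring $(\gamma^{(n)})^*_0$ through the quotient $C_Z$ via a monomorphism $m$ and using preservation of monos by $\barM$ to conclude admissibility, and soundness handled by factoring $(\gamma^{(n)})^*_0$ through $c_Z$ via its universal property; the identity you invoke, $\barM\big((\gamma^{(n)})^*_0\big)\cdot(\alpha_X\cdot\gamma)^*_0=(\gamma^{(n+1)})^*_0$, is exactly what the paper isolates as a separate lemma ($\barM(f^*_0)=f^*_1$ plus the graded Kleisli-star law). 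The only caveat is that your opening recasting of $(\gamma^{(n)})^*_0$ as the $n$-th canonical-cone map of the pre-determinization is literally accurate only when $M_01=1$ (otherwise the chains $\barM^n 1$ and $\barM^n F1$ differ), but since your actual induction works directly with $(\gamma^{(n)})^*_0$ this does not affect the argument.
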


\begin{cor}
  States $x,y$ in a $G$-coalgebra $(X, \gamma)$ are
  $\CalS$-behaviourally equivalent if and only if~D wins
  $(\eta(x), \eta(y))$ for all $n\in\omega$.
\end{cor}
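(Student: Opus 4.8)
The plan is to read off the corollary directly from \cref{T:sound-complete}, whose standing hypothesis (that $\barM$ preserves monomorphisms) I take to remain in force. By the definition of graded behavioural equivalence, $x$ and $y$ are $\CalS$-behaviourally equivalent exactly when $\gamma^{(n)}(x) = \gamma^{(n)}(y)$ holds for every $n \in \omega$. Hence it suffices to establish, for each fixed $n$, the equivalence between D winning the configuration $(\eta(x), \eta(y))$ in $\CalG_n(\gamma)$ and the single-depth condition $\gamma^{(n)}(x) = \gamma^{(n)}(y)$; the full statement then follows by conjoining over all $n$.

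To carry this out, I would instantiate \cref{T:sound-complete} at the configuration $(s,t) = (\eta(x), \eta(y))$, obtaining that D wins this configuration in $\CalG_n(\gamma)$ iff $(\gamma^{(n)})^*_0(\eta(x)) = (\gamma^{(n)})^*_0(\eta(y))$. It then remains only to simplify each side. Applying the first graded Kleisli star law \eqref{item:star-2}, namely $f^*_0 \cdot \eta_X = f$, to $f = \gamma^{(n)}\colon X \to M_n 1$ gives $(\gamma^{(n)})^*_0 \cdot \eta_X = \gamma^{(n)}$, so that $(\gamma^{(n)})^*_0(\eta(x)) = \gamma^{(n)}(x)$ and likewise for $y$. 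Substituting, D wins $(\eta(x), \eta(y))$ in $\CalG_n(\gamma)$ iff $\gamma^{(n)}(x) = \gamma^{(n)}(y)$, which is precisely depth-$n$ behavioural equivalence of $x$ and $y$.

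I do not expect a genuine obstacle here: the entire force of the result resides in \cref{T:sound-complete}, and the corollary is a routine specialization of its closed-form criterion to diagonal configurations built from the unit, combined with one application of a Kleisli star identity. The only points meriting care are bookkeeping ones: keeping the theorem's monomorphism hypothesis explicit, and noting that the game is played on the single coalgebra $(X,\gamma)$ so that $\eta = \eta_X$ is the unit at the common state set $X$ (a pair of states drawn from distinct coalgebras would first be merged into their coproduct, per the remark preceding the theorem).
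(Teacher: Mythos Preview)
Your proposal is correct and matches the paper's intent: the paper states the corollary without proof, treating it as immediate from \cref{T:sound-complete}, and your derivation via the Kleisli star identity $(\gamma^{(n)})^*_0 \cdot \eta_X = \gamma^{(n)}$ is exactly the routine unpacking that justifies this.
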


\begin{rem}\label{rem:monos}
  In algebraic terms, the condition that~$\barM$ preserves
  monomorphisms amounts to the following: In the derivation of an
  equality of depth-1 terms~$s,t$ over~$X$ from depth-0 relations
  over~$X$ (i.e.~from a presentation of an $M_0$-algebra by relations
  on generators~$X$), if~$X$ is included in a larger set~$Y$ of
  variables with relations that conservatively extend those on~$X$,
  i.e.~do not imply additional relations on~$X$, then it does not
  matter whether the derivation is conducted over~$X$ or more
  liberally over~$Y$. Intuitively, this property is needed because not
  all possible $n$-step behaviours, i.e.~elements of~$Y=M_n1$, are
  realized by states in a given coalgebra on~$X$. Preservation of
  monos by~$\barM$ is automatic for graded monads of the form $\M_G$
  (\cref{E:graded-monad}.\ref{E:graded-monad:1}), since $M_0=\Id$
  in this case. In the other running examples, preservation of monos
  is by the respective descriptions of~$\barM$ given in
  \cref{expl:barM}.
\end{rem}
\begin{expl}\label{expl:bisim-instance}
  We take a brief look at the instance of the generic game for the
  case of bisimilarity on finitely branching LTS (more extensive
  examples are in \cref{sec:cases}), i.e.~we consider the depth-1
  graded semantics $(\id, \M_G)$ for the functor
  $G=\powf(\A\times(-))$. In this case, $M_0=\Id$, so when playing on
  a coalgebra $(X,\gamma)$,~D plays relations~$Z\subseteq X\times
  X$. If the successor structures of states~$x,y$ are represented by
  depth-1 terms $\sum_{i}a_i(x_i)$ and $\sum_j b_j(y_j)$,
  respectively, in the theory $\JSL(\A)$
  (\cref{E:graded-theory}.\ref{item:jsl-a}), then~D is allowed to
  play~$Z$ iff the equality $\sum_{i}a_i(x_i)=\sum_j b_j(y_j)$ is
  entailed by~$Z$ in $\JSL(\A)$. This, in turn, holds iff for
  each~$i$, there is~$j$ such that $a_i=b_j$ and $(x_i,y_j)\in Z$, and
  symmetrically. Thus~$Z$ may be seen as a pre-announced
  non-deterministic winning strategy for~D in the usual bisimilarity
  game where~S moves first (\cref{sec:prelims}):~D announces that
  if~S moves from, say,~$x$ to~$x_i$, then she will respond with
  some~$y_j$ such that $a_i=b_j$ and $(x_i,y_j)\in Z$.
\end{expl}

\section{Infinite-depth behavioural
  equivalence}\label{sec:infinte-depth}

\sloppypar
\noindent We have seen in \cref{sec:determinization} that in case
\mbox{$M_01=1$}, $(\alpha, \M)$-behavioural equivalence
on~$G$-coalgebras coincides, via a determization construction, with
finite-depth behavioural equivalence on $\barM$-coalgebras for a
functor $\barM$ on $M_0$-algebras constructed from~$\M$. If~$G$ is
finitary, then finite-depth behavioural equivalence coincides with
full behavioural equivalence~(\cref{rem:finite-depth}), but in
general, finite-depth behavioural equivalence is strictly
coarser. Previous treatments of graded semantics stopped at this
point, in the sense that for non-finitary functors (which describe
infinitely branching systems), they did not offer a handle on
infinite-depth equivalences such as full bisimilarity.
In case $M_01=1$, a candidate for a notion
of infinite-depth equivalence induced by a graded semantics arises via
full behavioural equivalence of $\barM$-coalgebras. We fix this notion
explicitly:
\begin{defn}
  States $x,y$ in a $G$-coalgebra $(X,\gamma)$ are
  \emph{in\-fin\-ite-depth $(\alpha,\M$)-behaviourally equivalent}
  if~$\eta(x)$ and~$\eta(y)$ are behaviourally equivalent in the
  pre-det\-er\-min\-i\-za\-tion of~$(X,\gamma)$ as described in
  \cref{S:games}.
\end{defn}
\noindent We hasten to re-emphasize that this notion in general only
makes sense in case $M_01=1$. We proceed to show that infinite-depth
equivalence is in fact captured by an infinite variant of the
behavioural equivalence game of \cref{S:games}.

Since infinite depth-equivalences differ from finite-depth ones only
in settings with infinite branching, we do not assume in this section
that~$G$ or~$\M$ are finitary, and correspondingly work with
generalized graded theories where operations may have infinite
arities~\cite{MPS15}; we assume arities to be cardinal numbers. We
continue to be interested only in depth-1 graded monads and theories,
and we fix such a graded monad~$\M$ and associated graded theory for the
rest of this section. The notion of derivation is essentially the same
as in the finitary case, the most notable difference being that the
congruence rule is now infinitary, as it has one premise for each
argument position of a given possibly infinitary operator. We do not
impose any cardinal bound on the arity of operations; if all
operations have arity less than~$\kappa$ for a regular
cardinal~$\kappa$, then we say that the monad is
\emph{$\kappa$-ary}.
\begin{rem}\label{R:final-coalg}
  One can show using tools from the theory of locally presentable
  categories that $\barM$ has a final coalgebra if~$\M$ is
  $\kappa$-ary in the above sense. To see this, first note that
  $\Alg_0(\M)$ is locally $\kappa$-presentable if $M_0$ is
  $\kappa$-accessible~\cite[Remark~2.78]{AR94}. Using a somewhat
  similar argument one can prove that $\Alg_1(\M)$ is also locally
  $\kappa$-presentable. Moreover, the functor $\barM$ is
  $\kappa$-accessible, being the composite~\eqref{eq:barM} of the left
  adjoint $E\colon \Alg_0(\M) \to \Alg_1(\M)$ (which preserves all
  colimits) and the $1$-part functor
  $(-)_1\colon \Alg_1(\M) \to \Alg_0(\M)$, which preserves
  $\kappa$-filtered colimits since those are formed componentwise.
  It follows that $\barM$ has a final
  coalgebra~\cite[Exercise~2j]{AR94}. Alternatively, existence of a
  final $\barM$-coalgebra will follow from \cref{thm:fin-coalg}
  below. %
\end{rem}

\noindent
Like before, we \emph{assume that $\barM$ preserves monomorphisms}.

\begin{expl}
  We continue to use largely the same example theories as in
  \cref{E:graded-theory}, except that we allow operations to be
  infinitary. For instance, the \emph{graded theory of complete join
    semilattices over~$\A$} has as depth-1 operations all formal sums
  $\sum_{i\in I}a_i(-)$ where~$I$ is now some (possibly infinite)
  index set; the axioms are then given in the same way as in
  \cref{E:graded-theory}.\ref{item:jsl-a}, and all depth-1
  equations
  \[
    \textstyle \sum_{i\in I}a_i(x) = \sum_{j\in J} b_j(y)
  \]
  such that  $\{(a_i, x_i)\mid i\in I\}=\{(b_j, y_j)\mid j\in  J\}$.
  This theory presents the graded monad~$\M_G$ for
  $G=\pow(\A\times(-))$.  
\end{expl}
%

\noindent The infinite game may then be seen as defining a notion of
derivable equality on infinite-depth terms by playing out a
non-standard, infinite-depth equational proof; we will make this view
explicit further below. In a less explicitly syntactic version, the
game is defined as follows.
\begin{defn}[Infinite behavioural equivalence game]
  The \emph{infinite $(\alpha,\M$)-behavioural equivalence game}
  $\CalG_\infty(\gamma)$ on a $G$-coalgebra~$(X,\gamma)$ is played by
  Spoiler~(S) and Duplicator~(D) in the same way as the finite
  behavioural equivalence game (\cref{def:game}) except that the
  game continues forever unless one of the players cannot move. Any
  player who cannot move, loses. Infinite matches are won by~D.
\end{defn}
\noindent As indicated above, this game captures infinite-depth
$(\alpha,\M)$-behavioural equivalence (under the running assumption
that~$\barM$ preserves monomorphisms):
\begin{thm}\label{thm:infinite-depth-games}
  Given a $G$-coalgebra~$(X,\gamma)$, two states~$s,t$ in the
  pre-determinization of~$\gamma$ are behaviourally equivalent iff D
  wins the infinite $(\alpha,\M$)-behavioural equivalence game
  $\CalG_\infty(\gamma)$ from the initial configuration $(s,t)$.
\end{thm}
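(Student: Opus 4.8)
The plan is to reduce $\CalG_\infty(\gamma)$ to the standard bisimilarity game on the pre-determinization $(\alpha_X\cdot\gamma)^\#\colon FX\to\barM FX$, for which the equivalence between Duplicator winning and behavioural equivalence is already understood in the finite-depth case (\cref{T:sound-complete}) and extends to the infinite case by the usual argument relating infinite Spoiler-Duplicator games to greatest fixpoints. Concretely, I would first observe that a single round of $\CalG_\infty(\gamma)$ is \emph{exactly} a single round of $\CalG_n(\gamma)$ with the winning condition deleted: from a configuration $(s,t)\in M_0X\times M_0X$, Duplicator plays $Z\subseteq M_0X\times M_0X$ with $\barZ(s)=\barZ(t)$, then Spoiler picks $(s',t')\in Z$. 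So the only thing to verify is that an \emph{infinite} play in this game corresponds to a genuine bisimulation-type witness on $\barM$-coalgebras, and conversely that any coalgebraic behavioural equivalence furnishes Duplicator with a non-losing strategy forever.

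The key step is the characterization of admissible moves. Using \cref{R:coeq}.\ref{item:barZ}, the condition $\barZ(s)=\barZ(t)$ says precisely that, after closing $Z$ under the $M_0$-congruence (forming $c_Z\colon M_0X\to C_Z$) and then applying one layer of depth-1 structure (the map $\barM c_Z$ composed with the pre-determinization $(\alpha_X\cdot\gamma)^*_0\colon M_0X\to M_1X=\barM M_0X$), the images of $s$ and $t$ in $\barM C_Z$ agree. I would then package the family of relations Duplicator is prepared to play as a single relation $R\subseteq M_0X\times M_0X$ — namely the union, over all configurations reachable under her strategy, of the played sets $Z$, together with the reflexive pairs — and show: (i) if $R$ is such that $(s,t)\in R$ implies $\barZ_R(s)=\barZ_R(t)$ (where $\barZ_R$ uses $c_R$, the quotient by the congruence generated by $R$), then the quotient map $c_R\colon FX=M_0X\to C_R$ is a $\barM$-coalgebra morphism out of the pre-determinization, using that $\barM$ preserves monomorphisms exactly as in the proof of \cref{T:sound-complete} to transfer the one-step compatibility through the mono $C_R\monoto\barM C_R$-related diagram; hence any pair in $R$ is behaviourally equivalent on the pre-determinization. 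Conversely, (ii) if $s,t$ are behaviourally equivalent on the pre-determinization, witnessed by $c\colon FX\to Z'$ and $c(s)=c(t)$, then $R:=\ker(c)$ is a congruence on $M_0X$ with $C_R$ (a quotient of $Z'$, or $Z'$ itself if $c$ is surjective, which one can arrange via (co)image factorization in $\Alg_0(\M)$) carrying a $\barM$-coalgebra structure making $c$ a morphism, so Duplicator can play $R$ in every round and never gets stuck, and every round preserves the invariant "$(s_k,t_k)\in R$". The final-configuration bluff check of the finite game has no analogue here because infinite plays are won by~D outright; one only needs that Duplicator, playing $R$, is never \emph{unable} to move, and that Spoiler's choice keeps the play inside $R$.

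The main obstacle I anticipate is the passage from "Duplicator has a winning strategy" to "there exists a single relation $R$ closed in the required sense" — i.e.\ the fixpoint/determinacy argument. A priori Duplicator's strategy may play different, incomparable sets $Z$ at different positions, and one must check that their union (suitably congruence-closed) still satisfies $(s,t)\in R\Rightarrow\barZ_R(s)=\barZ_R(t)$; this is where one uses that enlarging $Z$ to $R$ only \emph{coarsens} the quotient, so $c_R$ factors through each $c_Z$, and the one-step condition $\barM c_Z$ merging $[s],[t]$ implies $\barM c_R$ does too — here the hypothesis that $\barM$ preserves monomorphisms (equivalently, the "conservativity" property of \cref{rem:monos}) is again what makes the induced map on coequalizers behave, since $\barM$ applied to the comparison monos/epis stays well-behaved. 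I would also need a small argument that the game is determined so that "not (S wins)" gives "D wins", which for these open/closed winning conditions (S wins = some finite play ends with D stuck; D wins = otherwise) follows from Gale–Stewart or can be seen directly by the usual unfolding of the greatest-fixpoint characterization of coalgebraic behavioural equivalence. Everything else — that a single round of the infinite game is literally a round of the finite game, and that $\barM$-behavioural equivalence is the greatest such $R$ — is either immediate or already available from \cref{sec:determinization} and the proof technique of \cref{T:sound-complete}.
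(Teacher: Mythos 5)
Your proposal follows essentially the same route as the paper's proof: soundness by quotienting the pre-determinization by a suitable relation on $M_0X$ and checking that the quotient map is an $\barM$-coalgebra morphism, and completeness by letting~D play the kernel of a witnessing coalgebra morphism in every round, with an (epi, mono) image factorization supplying the coalgebra structure needed to verify admissibility of that move.

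Two points deserve correction, though both are easily repaired. First, your relation $R$ (the union of the played sets $Z$ plus reflexive pairs) need not contain the initial configuration $(s,t)$: an admissible move $Z$ at $(s,t)$ is only required to entail equality of the \emph{successors} $\barZ(s)=\barZ(t)$, not to contain the pair $(s,t)$ itself, so the conclusion ``every pair in $R$ is behaviourally equivalent'' does not immediately apply to $(s,t)$. The paper avoids this by taking $R$ to be the entire winning region of~D, which contains $(s,t)$ by hypothesis and contains every played $Z$ because~S may pick any of its elements; adding the initial configuration to your $R$ works just as well. Second, you invoke preservation of monomorphisms by $\barM$ in the coarsening step of the soundness direction, but that step needs only functoriality: since $c_R$ factors through $c_Z$, the map $\barM c_R$ factors through $\barM c_Z$, so merging under $\barM c_Z$ implies merging under $\barM c_R$. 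Mono-preservation is genuinely needed in the \emph{completeness} direction, to endow the image of the witnessing coalgebra morphism with an $\barM$-coalgebra structure by diagonal fill-in (as in \cref{R:fact}), which is what makes $\ker(c)$ an admissible move. Finally, the determinacy worry is a red herring: completeness directly constructs a never-losing strategy for~D from the coalgebra morphism, so no appeal to Gale--Stewart is required.
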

\begin{cor}
  Two states $x,y$ in a $G$-coalgebra $(X,\gamma)$ are infinite-depth
  $(\alpha,\M)$-behaviourally equivalent iff D wins the infinite
  $(\alpha,\M$)-behavioural equivalence game $\CalG_\infty(\gamma)$
  from the initial configuration $(\eta(x),\eta(y))$.
\end{cor}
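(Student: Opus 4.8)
The plan is to give a direct coinductive argument, relating winning strategies for~D to $\barM$-coalgebra morphisms out of the pre-determinization. Write $\beta := (\alpha_X\cdot\gamma)^\#\colon FX\to\barM FX$ for the pre-determinization; as recorded in \cref{sec:determinization}, its underlying map is $(\alpha_X\cdot\gamma)^*_0\colon M_0X\to M_1X=U\barM FX$. As a preliminary observation I would note that, since $\Alg_0(\M)=\Set^{M_0}$ is cocomplete (hence so is $\Coalg(\barM)$, with colimits created by the forgetful functor), two states $s,t$ of $(FX,\beta)$ are behaviourally equivalent in the pre-determinization iff there is a \emph{single} $\barM$-coalgebra morphism $q\colon (FX,\beta)\to(Q,\rho)$ with $q(s)=q(t)$: one direction is trivial, and for the other one coequalizes the two parallel legs $FX\rightrightarrows V$ of a witnessing cospan and precomposes with one of them.

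For the ``only if'' direction, suppose $q\colon(FX,\beta)\to(Q,\rho)$ is a coalgebra morphism with $q(s)=q(t)$. The strategy for~D is to play the congruence $Z:=\ker q=\{(u,v)\mid q(u)=q(v)\}$ on $FX$ in every round. As~$Z$ is a congruence, \cref{R:coeq}.\ref{R:coeq:1} lets us take $c_Z\colon M_0X\to C_Z$ to be the coequalizer of the projections $\ell,r\colon Z\to M_0X$, i.e.~the regular-epi part~$e$ of the (regular epi, mono) factorization $q=m\cdot e$, so that $\ker e=\ker q=Z$ and $\barZ=\barM c_Z\cdot\beta=\barM e\cdot\beta$. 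From $\rho\cdot q=\barM q\cdot\beta$ and $\barM q=\barM m\cdot\barM e$ one gets $\rho\cdot m\cdot e=\barM m\cdot\barZ$, so $\barM m$ equalizes~$\barZ$ on $\ker e=Z$; since $\barM$ preserves monomorphisms, $\barM m$ is monic, whence $\barZ(u)=\barZ(v)$ for all $(u,v)\in Z$. Thus~$Z$ is an admissible move from every configuration in~$Z$, in particular from $(s,t)$ (as $q(s)=q(t)$), and every Spoiler response stays inside~$Z$; the match is therefore infinite (or~S gets stuck), so~D wins.

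For the ``if'' direction, let $W\subseteq M_0X\times M_0X$ be the set of configurations from which~D has a winning strategy; by hypothesis $(s,t)\in W$, and by inspecting a winning strategy one sees that each $(s',t')\in W$ admits an admissible move $Z\subseteq W$ (the admissible~$Z$ opening a winning strategy has the property that every Spoiler response in~$Z$ again leaves~D with a winning strategy; an empty admissible~$Z$ is fine as $\emptyset\subseteq W$). Let~$\theta$ be the $M_0$-congruence on $M_0X$ generated by~$W$ and $c\colon FX\to FX/\theta=:(C,\star)$ the quotient. I claim the homomorphism $h:=\barM c\cdot\beta\colon FX\to\barM C$ merges~$W$: given $(s',t')\in W$ with admissible $Z\subseteq W$, the congruence generated by~$Z$ lies in~$\theta$, so $\ker c_Z\subseteq\ker c$ and $c=d_Z\cdot c_Z$; hence $h=\barM d_Z\cdot\barZ$, and $h(s')=h(t')$ follows from admissibility, i.e.~from $\barZ(s')=\barZ(t')$. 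Since~$h$ underlies a homomorphism merging~$W$, its kernel contains the generated congruence~$\theta=\ker c$, so $h$ factors as $h=\rho\cdot c$ for a unique $\rho\colon C\to\barM C$; the equation $\rho\cdot c=\barM c\cdot\beta$ says precisely that $c\colon(FX,\beta)\to(C,\rho)$ is a coalgebra morphism, and $(s,t)\in W\subseteq\theta$ gives $c(s)=c(t)$, so~$s,t$ are behaviourally equivalent in the pre-determinization.

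I expect the main obstacle to be the bookkeeping in the ``if'' direction: making precise that~D's winning region is ``trap-closed'' in the stated sense, and — the more delicate step — verifying that the congruence-closure~$\theta$ of the winning region really carries an $\barM$-coalgebra structure turning~$c$ into a coalgebra morphism, which hinges on the interplay between the coequalizer description of~$c_Z$ (\cref{N:admissible}, \cref{R:coeq}) and the factorization $c=d_Z\cdot c_Z$. (Preservation of monomorphisms by~$\barM$ is used only in the ``only if'' direction, exactly as in \cref{T:sound-complete}.) A tempting alternative — deducing the infinite result from the finite games $\CalG_n$ via \cref{T:sound-complete} and a limit argument — looks harder, since without finite branching ``\,D wins every $\CalG_n$\,'' need not imply ``\,D wins $\CalG_\infty$\,''.
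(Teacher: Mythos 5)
Your argument is correct, and it is in substance the same route the paper takes: the corollary is an immediate instance of \cref{thm:infinite-depth-games} (put $s=\eta(x)$, $t=\eta(y)$), and your two directions mirror the paper's proof of that theorem --- for soundness of the game you quotient the pre-determinization by (the congruence generated by) Duplicator's winning region and equip the quotient with an $\barM$-coalgebra structure; for completeness you let Duplicator play the kernel of a witnessing coalgebra morphism and verify admissibility via the (regular epi, mono) factorization together with preservation of monomorphisms by $\barM$. Two small points of comparison are worth recording. First, you reduce the cospan witnessing behavioural equivalence to a single morphism $q$ with $q(s)=q(t)$ by coequalizing in $\Coalg(\barM)$; the paper leaves this standard reduction implicit. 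Second, and more substantively, in the ``if'' direction the paper asserts outright that the winning region is already an $M_0$-congruence, whereas you pass to the congruence $\theta$ it generates and show that the map $\barM c\cdot\beta$ merges all of $W$ (hence all of $\theta$) before factoring it through the quotient; this is slightly more work but cleanly sidesteps having to prove closure of the winning region under the algebraic operations, which is the one step the paper's own proof glosses over. Your closing remark is also apt: deriving the infinite statement from the finite games would indeed fail without accessibility/finite-branching hypotheses, which is precisely why the paper proves the infinite case directly.
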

\begin{rem}
  Like infinite-depth $(\alpha,\M)$-behavioural equivalence, the
  infinite $(\alpha,\M$)-behavioural equivalence game is sensible only
  in case $M_01=1$. For instance, as noted in
  \cref{sec:determinization}, in the graded monad for trace
  semantics (\cref{E:graded-theory}.\ref{item:pt-a}), which does not
  satisfy this condition, behavioural equivalence of
  $\barM$-coalgebras is trivial. In terms of the game,~D wins every
  position in $\CalG_\infty(\gamma)$ by playing
  $Z=\{(t,0)\mid t\in M_0X\}$ -- since the actions preserve the bottom
  element~$0$, this is always an admissible move. In the terminology
  introduced at the beginning of \cref{S:games}, the reason that~D
  wins in this way is that in the infinite game, her bluff is never
  called ($M_0!(t)$ will in general not equal $M_0!(0)=0$).  However, see
  \cref{expl:inf-depth}.\ref{item:inf-trace} below.
\end{rem}

\begin{expl}\label{expl:inf-depth}
  \begin{enumerate}
  \item\label{item:inf-trace} As noted in
    \cref{rem:determinization}.\ref{item:predet-det}, the graded
    monad for trace semantics can be modified to satisfy the
    condition~$M_01=1$ by restricting to serial labelled transition
    systems. 
    By modification of~\cref{expl:barM}.\ref{item:mono-traces}, we
    obtain that in this setting, $\barM X$ consists of partial maps
    $\A\rightharpoonup X$ that are not everywhere undefined. It
    follows that in this case, infinite-depth
    $(\alpha,\M)$-behavioural equivalence is just finite trace
    equivalence, and thus coincides with plain
    $(\alpha,\M)$-behavioural equivalence.
  \item In the case of graded monads $\M_G$
    (\cref{E:graded-monad}.\ref{E:graded-monad:1}), which so far
    were used to capture finite-depth behavioural equivalence in the
    standard (branching-time) sense, we have $M_0=\Id$; in particular,
    $M_01=1$. In this case, the infinite-depth behavioural equivalence
    game instantiates to a game that characterizes full behavioural
    equivalence of $G$-coalgebras. Effectively, a winning strategy
    of~D in the infinite game~$\CalG_\infty(\gamma)$ on a
    $G$-coalgebra $(X,\gamma)$ amounts to a
    relation~$R\subseteq X\times X$ (the positions of~D actually
    reachable when~D follows her winning strategy) that is a
    \emph{precongruence} on~$(X,\gamma)$~\cite{AczelMendler89}.
  \end{enumerate}
\end{expl}

\begin{rem}[Fixpoint computation]
  Via its game characterization (\cref{thm:infinite-depth-games}),
  infinite-depth $(\alpha,\M)$-behavioural equivalence can be cast as
  a greatest fixpoint, specifically of the monotone function~$F$ on
  $\pow(M_0X\times M_0X)$ given by
  \begin{equation*}
    F(Z)=\{(s,t)\in M_0X\times M_0X\mid \barZ(s)=\barZ(t)\}.
  \end{equation*}
  If~$M_0$ preserves finite sets, then this fixpoint can be computed
  on a finite coalgebra $(X,\gamma)$ by fixpoint iteration; since
  $F(Z)$ is clearly always an equivalence relation, the iteration
  converges after at most $|M_0X|$ steps, e.g.~in exponentially many
  steps in case $M_0=\pow$. In case $M_0X$ is infinite (e.g.~if
  $M_0=\CalD$), then one will need to work with finite representations
  of subspaces of~$M_0X\times M_0X$. We leave a more careful analysis
  of the algorithmics and complexity of solving infinite
  $(\alpha,\M)$-behavioural equivalence games to future work. We do
  note that on finite coalgebras, we may assume w.l.o.g.~that both the
  coalgebra functor~$G$ and graded monad~$\M$ are finitary, as we can
  replace them with their finitary parts if needed (e.g.~the powerset
  functor $\pow$ and the finite powerset functor~$\powf$ have
  essentially the same finite coalgebras). If additionally~$M_01=1$,
  then~$(\alpha,\M)$-behavioural equivalence coincides with
  infinite-depth $(\alpha,\M)$-behavioural equivalence, so that we
  obtain also an algorithmic treatment of $(\alpha,\M)$-behavioural
  equivalence. By comparison, such a treatment is not immediate from
  the finite version of the game, in which the number of rounds is
  effectively chosen by Spoiler in the beginning.
\end{rem}
\noindent Assume from now on that $\M$ is $\kappa$-ary.
We note that in this case, we can describe the final $\barM$-coalgebra
in terms of a syntactic variant of the infinite game that is played on
infinite-depth terms, defined as follows.
\begin{defn}[Infinite-depth terms]
  Recall that we are assuming a graded signature~$\Sigma$ with
  operations of arity less than~$\kappa$. A \emph{(uniform)
    infinite-depth \mbox{($\Sigma$-)}term} is an infinite tree with
  ordered branching where each node is labelled with an
  operation~$f\in\Sigma$, and then has as many children as given
  by the arity of~$f$; when there is no danger of confusion, we
  will conflate nodes with (occurrences of) operations. We require
  moreover that every infinite path in the tree contains infinitely
  many depth-1 operations (finite full paths necessarily end in
  constants). We write $\Termsarg{\Sigma,\infty}$ for the set of
  infinite-depth $\Sigma$-terms. By cutting off at the top-most
  depth-1 operations, we obtain for every $t\in\Termsarg{\Sigma,\infty}$
  a \emph{top-level decomposition} $t=t_1\sigma$ into a depth-1 term
  $t_1\in\Termsarg{\Sigma,1}(X)$, for some set~$X$, and a substitution
  $\sigma\colon X\to\Termsarg{\Sigma,\infty}$.
\end{defn}

\begin{defn}
  The \emph{syntactic infinite $(\alpha,\M$)-behavioural equivalence
    game} $\CalG^\syn_\infty$ is played by~S and~D. Configurations of
  the game are pairs $(s,t)$ of infinite-depth $\Sigma$-terms. For
  such $(s,t)$, we can assume, by the running assumption that~$\barM$
  preserves monomorphisms, that the top level decompositions
  $s=s_1\sigma$, $t=t_1\sigma$ are such that
  $s_1,t_1\in\Termsarg{\Sigma,1}(X)$,
  $\sigma\colon X\to\Termsarg{\Sigma,\infty}$ for the
  same~$X,\sigma$. Starting from a designated initial configuration,
  the game proceeds in rounds. In each round, starting from a current
  such configuration~$(s,t)$,~D first chooses a
  relation~$Z\subseteq\Termsarg{\Sigma,0}(X)\times\Termsarg{\Sigma,0}(X)$
  such that $Z\vdash s_1 = t_1$ in the graded theory that
  presents~$\M$ (cf.~\cref{sec:prelims}). \mbox{Then, S} selects an
  element $(u,v)\in Z$, upon which the game reaches the new
  configuration $(u\sigma,v\sigma)$. The game proceeds forever unless
  a player cannot move. Again, any player who cannot move, loses, and
  infinite matches are won by~D. We write $s\infgameeq t$ if~D
  wins~$\CalG^\syn_\infty$ from position $(s,t)$.
\end{defn}
\noindent We construct an $\barM$-coalgebra on the set
\(
U=\Termsarg{\Sigma,\infty}/{\infgameeq}
\)
of infinite-depth terms modulo the winning region of~D as follows. We
make~$U$ into an $M_0$-algebra by letting depth-0 operations act by
term formation. We then define the coalgebra
structure~$\zeta\colon U \to \barM U$ by
\(
  \zeta(q(t_1\sigma)) = \barM((q\cdot\sigma)^*_0)([t_1])
\)
(using Kleisli star as per \cref{N:star}) where
$t_1\sigma$ is a top-level decomposition of an infinite-depth term,
with $t_1\in\Termsarg{\Sigma,1}(X)$;
\[
  [-]\colon  \Termsarg{\Sigma,1}(X)\to M_1X=\barM M_0 X
  \quad\text{and}\quad
  q\colon \Termsarg{\Sigma,\infty}\to U
\]
denote canonical quotient maps.
These data are well-defined.%
\begin{thm}\label{thm:fin-coalg}
  The coalgebra $(U,\zeta)$ is final.%
\end{thm}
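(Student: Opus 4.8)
The plan is to establish finality of $(U,\zeta)$ by exhibiting, for an arbitrary $\barM$-coalgebra $(V,\xi)$, a unique $\barM$-coalgebra morphism $h\colon (V,\xi)\to(U,\zeta)$. The key link is the syntactic game: I expect $\infgameeq$ to coincide, on infinite-depth terms, with behavioural equivalence in the pre-determinization, via Theorem~\ref{thm:infinite-depth-games}. So the first step is to verify well-definedness of $\zeta$ in detail — that $\barM((q\cdot\sigma)^*_0)([t_1])$ does not depend on the choice of top-level decomposition $t_1\sigma$ of a representative, nor on the representative of the $\infgameeq$-class; here the running assumption that $\barM$ preserves monomorphisms is what lets us align top-level decompositions of two terms over a common variable set~$X$, exactly as in the definition of $\CalG^\syn_\infty$. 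Monotonicity/functoriality of $\barM$ and the graded Kleisli star laws \eqref{item:star-1} do the bookkeeping.

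For existence of $h$, I would proceed by a coinductive/corecursive construction: given $v\in V$, unfold $\xi$ repeatedly to read off an infinite-depth term over the ``states'' of $V$. Concretely, $\xi(v)\in\barM V=M_1V$ (after forgetting algebra structure, using \eqref{Eq:determinization} at the level of carriers) is an equivalence class of depth-1 terms with variables in $V$; picking a representative $t_1^v\in\Termsarg{\Sigma,1}(V)$ and recursing on the variables appearing in it yields a well-founded (in the depth-1 layering sense) description of an infinite-depth term $h(v)$, with every infinite path crossing infinitely many depth-1 operations because each unfolding step produces exactly one depth-1 layer. The map $h$ should then be checked to be an $M_0$-algebra homomorphism $V\to U$ (depth-0 operations act by term formation on both sides, so this is essentially by construction) and to satisfy $\zeta\cdot h=\barM h\cdot\xi$; the latter is the assertion that reading off one depth-1 layer and then continuing is the same as first mapping into $U$ and then applying $\zeta$, which unwinds to the defining equation of $\zeta$ together with the naturality of the Kleisli star.

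For uniqueness, suppose $h,h'$ are both $\barM$-coalgebra morphisms $(V,\xi)\to(U,\zeta)$. I would argue that the relation $\{(h(v),h'(v))\mid v\in V\}$ (closed off under depth-0 term formation to make it an $M_0$-subalgebra-style relation) furnishes a winning strategy for Duplicator in $\CalG^\syn_\infty$ from any $(h(v),h'(v))$: at each round, the fact that both $h$ and $h'$ intertwine $\xi$ with $\zeta$ forces the chosen depth-1 representatives to be $Z$-provably equal with $Z$ drawn from this relation, and Spoiler's move lands back inside the relation. Since infinite matches are won by~D, we get $h(v)\infgameeq h'(v)$, i.e.\ $h(v)=h'(v)$ in $U$. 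The main obstacle I anticipate is the well-definedness argument for $\zeta$ — in particular handling the noncanonical choices of top-level decomposition and of depth-1 representatives, and ensuring the two layers of ``modding out'' (congruence generated by~$Z$, embodied in $c_Z$, and the passage from $M_1X$ to $\barM C_Z$) interact correctly; this is precisely where preservation of monos by $\barM$ is essential and where the soundness direction of the game characterization, i.e.\ that $Z\vdash s_1=t_1$ implies $\barZ(s)=\barZ(t)$ as in Remark~\ref{R:coeq}.\ref{item:barZ}, has to be invoked carefully. The remaining pieces are routine diagram chases using \eqref{Diagram:gradedalgebralaw}, \eqref{eq:barZ}, and the Kleisli-star identities.
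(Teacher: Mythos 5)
Your proposal is correct and follows essentially the same route as the paper's proof: well-definedness of $\zeta$ is handled via admissibility of Duplicator's winning move together with preservation of monomorphisms by $\barM$, existence of the morphism into $(U,\zeta)$ is obtained by infinite unrolling of the given coalgebra, and uniqueness reduces to Duplicator maintaining an invariant in the syntactic game $\CalG^\syn_\infty$. The only cosmetic difference is that the paper states uniqueness tersely as the coalgebra-morphism condition amounting to the unrolling property, whereas you spell out the underlying game-invariant argument explicitly; the substance is the same.
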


\section{Case studies}\label{sec:cases} 
We have already seen (\cref{expl:bisim-instance}) how the standard
bisimilation game arises as an instance of our generic game. We
elaborate on some further examples. %
\takeout{\subsection{Trace equivalence} In this subsection, we
  illustrate the admissible moves of Duplicator in concrete terms and,
  subsequently, establishing the game-theoretic characterisation of
  trace equivalence.
\bknote{Example~2.9 does not talk about trace equivalence, we should explain the connection better.}
To this end, consider an LTS $\gamma\colon X\to\powf(\A\times X)$
and recall the depth-1 graded monad attributed to trace equivalence
from \cite{MPS15}.

%
\takeout{In particular, $\Sigma=\{\bot,\vee\} \cup \{a.\_ \mid a\in \A\}$ with action prefixing as the only depth-1 operation; $\E$ has all the axioms of join semilattice and the following depth-1 equations.
\[
a.\bot = \bot \qquad a.(x\vee y) = a.x \vee a.y
\]}
%

\takeout{We begin by stating the coequaliser $C_Z,c_Z$ for a given relation $Z$ (cf.\thinspace Notation~\ref{N:admissible}) in more concrete terms. Define $Z' \subseteq M_1 X \times M_1 X$ as follows:
\[
t \mathrel {Z'} t' \iff \exists_{s\in M_1 Z}\ l_1^*s =t \land r_1^*s=t'.
\]
Then the object $C_Z$ is given by the quotient set $M_1X/ Z''$, where $Z''$ is the least equivalence which is closed with respect to all the depth-0 operations and which includes $Z'$. In short, we say that $Z''$ is the \emph{congruence closure} of $Z'$.
\begin{propn}
  The object $C_Z$ together with its quotient map $c_Z$ indeed forms the coequaliser of the arrows $l_1^*,r_1^*$.
\end{propn}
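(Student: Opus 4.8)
The plan is to read off the coequalizer of $l_1^*,r_1^*$ from the standard description of coequalizers in the Eilenberg--Moore category $\Alg_0(\M)$, which has coequalizers by the remark following \cref{D:gradedalgebra}, and to match that description with the concrete quotient $M_1X/Z''$. First I would record that the diagram really lives in $\Alg_0(\M)$: the sets $M_1Z$ and $M_1X$ are $M_0$-algebras via $\mu^{0,1}$, and for the plain projections $l,r\colon Z\to M_0X$ the Kleisli-star maps $l_1^*,r_1^*\colon M_1Z\to M_1X$ are $M_0$-algebra homomorphisms. This \emph{homomorphy} is a general property of the graded Kleisli star: writing $l_1^*=\mu^{1,0}_X\cdot M_1l$ and combining naturality of $\mu^{0,1}$ with the instance $(0,1,0)$ of the associative law, namely $\mu^{1,0}_X\cdot\mu^{0,1}_{M_0X}=\mu^{0,1}_X\cdot M_0\mu^{1,0}_X$, one obtains $l_1^*\cdot\mu^{0,1}_Z=\mu^{0,1}_X\cdot M_0l_1^*$, and likewise for $r_1^*$.

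Since $M_0$ is a monad on $\Set$, coequalizers in $\Alg_0(\M)$ are computed as quotients by generated congruences: the coequalizer of $l_1^*,r_1^*$ is $M_1X/\theta$, where $\theta$ is the least $M_0$-congruence on $M_1X$ containing $R:=\{(l_1^*s,r_1^*s)\mid s\in M_1Z\}$, together with the canonical quotient homomorphism. By the very definition of $Z'$ we have $R=Z'$, so $\theta$ is the least $M_0$-congruence containing $Z'$. The remaining point is to see that this categorical congruence closure agrees with the syntactic one: an $M_0$-congruence on $M_1X$ is exactly an equivalence relation on $M_1X$ that forms a subalgebra of $M_1X\times M_1X$, which for the monad $M_0$ over $\Set$ means precisely an equivalence relation closed under all depth-0 operations. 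Hence the least such relation containing $Z'$ is exactly the congruence closure $Z''$, and therefore $C_Z=M_1X/Z''$ with its quotient map $c_Z$ is the coequalizer object.

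For a self-contained check I would verify the universal property directly. That $c_Z\cdot l_1^*=c_Z\cdot r_1^*$ holds because each pair $(l_1^*s,r_1^*s)$ lies in $Z'\subseteq Z''$ and is thus collapsed by $c_Z$. For universality, take any $M_0$-algebra homomorphism $h\colon M_1X\to A$ with $h\cdot l_1^*=h\cdot r_1^*$. Its kernel pair $\{(u,v)\mid h(u)=h(v)\}$ is an equivalence relation closed under depth-0 operations (as $h$ is a homomorphism), hence an $M_0$-congruence, and it contains $Z'$; by minimality it contains $Z''$. Consequently $h$ factors through the surjection $c_Z$ as $h=\bar h\cdot c_Z$ for a unique $M_0$-algebra homomorphism $\bar h\colon C_Z\to A$, uniqueness being forced by surjectivity of $c_Z$.

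The only steps beyond routine bookkeeping, which I would flag as the crux, are the two identifications above: that $l_1^*,r_1^*$ are $M_0$-homomorphisms, so that the coequalizer is taken in $\Alg_0(\M)$ in the first place, and that ``equivalence relation closed under depth-0 operations'' coincides with ``$M_0$-subalgebra of the square'', so that the syntactic congruence closure $Z''$ is precisely the congruence generated by the image relation $Z'$. Everything else is the standard computation of coequalizers in the Eilenberg--Moore category of a monad on $\Set$.
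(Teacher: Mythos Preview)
Your argument is correct. The paper in fact states this proposition without proof (it occurs only inside material that was ultimately suppressed), so there is no authorial proof to compare against; your approach---showing that $l_1^*,r_1^*$ are $M_0$-homomorphisms via naturality of $\mu^{0,1}$ and the associative law at $(0,1,0)$, then invoking the standard description of coequalizers in an Eilenberg--Moore category over $\Set$ as quotients by the generated congruence, and finally identifying that congruence with the syntactic closure $Z''$---is exactly the expected one, and each step is properly justified.
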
}
\begin{expl}
  Consider the following process terms (defined in a fragment of CCS
  signature):
  \[
    p_1\equiv a.p_1',\ p_2\equiv a.p_2' + b.p_2'',\ p_3\equiv b.p_3',
  \]
  where $p_1',p_2',p_2'',p_3'$ are all inaction (i.e.~the constant $\0$) and $\equiv$ denotes the syntactic equality. Clearly we find that the set $s=\{p_1,p_2\}$ and $t=\{p_2,p_3\}$ are trace equivalent.

  In particular, $s,t$ have the same traces of length 1, which we argue next through our game. Duplicator plays the relation $Z$ viewed as a set of equations between $M_0$-terms:
  \[
  Z = \{p_1' + p_2' = p_2',\ p_3' + p_2'' = p_2'' \}.
  \]
  We claim that the relation $Z$ is admissible at position $(s,t)$ because the $M_1$-terms $(\alpha\cdot\gamma)^\# s=a.p_1' + a.p_2' + b.p_2''$ and $(\alpha\cdot\gamma)^\# t=a.p_2' + b.p_2'' + b.p_3'$ are related in the congruence closure $Z''$ of $Z'$. To establish this first observe that
  \[
  (a.p_1' + a.p_2') \mathrel {Z'} a.p_2'\ \text{and}\ b.p_2''\mathrel {Z'} (b.p_3' + b.p_2'').
  \]
  Clearly $(\alpha\cdot\gamma)^\# s\mathrel {Z''} (\alpha\cdot\gamma)^\# t$. Now Spoiler can pick either of the two equations in $Z$. Moreover, both pairs $(\{p_1',p_2'\},\{p_2'\})$ and $(\{p_3',p_2''\},\{p_2''\})$ are mapped to a common point (the singleton containing empty trace) by $M_1!$; thus, resulting in two 1-round matches both won by Duplicator.
\end{expl}
\begin{propn}
  Every $M_1$-term $t$ is derivably equivalent to a term of the form $\bigvee_{a}a.t_a$, where $t_a$ is some $M_0$-term.
\end{propn}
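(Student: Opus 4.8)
The plan is to prove the normal-form result by structural induction on the uniform depth-$1$ term~$t$, working in the graded theory of (non-deterministic) traces (\cref{E:graded-theory}.\ref{item:traces-a}): its depth-$0$ operations are the join-semilattice operations~$\vee,\bot$, its only depth-$1$ operations are the unary prefixes~$a.(-)$ for $a\in\A$, and the relevant depth-$1$ axioms are distributivity $a.(x\vee y)=a.x\vee a.y$ and strictness $a.\bot=\bot$. The first, purely syntactic, observation is that by the inductive definition of uniform depth (\cref{def:theory}) every uniform depth-$1$ term is assembled from subterms of the form $a.s$ with $s\in\Termsarg{\Sigma,0}(X)$ a depth-$0$ term, using only the depth-$0$ operations $\vee$ and~$\bot$. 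Thus, as a term, $t$ is already a finite join $\bigvee_i a_i.s_i$ of prefixes (the empty join being~$\bot$), and the task reduces to rewriting this join, modulo derivable equality, into a join in which each action of~$\A$ occurs exactly once.

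The core of the argument consists of three rewriting steps. First, I regroup summands by their leading action using commutativity and associativity of~$\vee$; although these are depth-$0$ axioms, they yield the required depth-$1$ equations once the depth-$1$ terms $a_i.s_i$ are substituted for the variables, which is a legitimate uniform-depth substitution (\cref{N:substitution}). Second, for each action~$a$ occurring in~$t$, I collapse its block $a.s_{i_1}\vee\dots\vee a.s_{i_k}$ to a single prefix by applying distributivity in the reverse direction: substituting the depth-$0$ terms $s,s'$ for~$x,y$ turns the axiom into the derivable depth-$1$ equation $a.s\vee a.s'=a.(s\vee s')$, so the block becomes $a.t_a$ with $t_a:=s_{i_1}\vee\dots\vee s_{i_k}$, a depth-$0$ term. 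Third, for each action~$a\in\A$ not occurring in~$t$ I set $t_a:=\bot$ and adjoin the summand $a.\bot$; by strictness $a.\bot=\bot$ this summand equals~$\bot$ and is absorbed by neutrality of~$\bot$, so its addition does not change the value of the term. Since~$\A$ is finite, $\bigvee_{a\in\A}a.t_a$ is a well-formed finite uniform depth-$1$ term, and the three steps together exhibit a derivation of $t=\bigvee_{a\in\A}a.t_a$.

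The base cases fall out of these manipulations: the term~$\bot$ equals $\bigvee_{a\in\A}a.\bot$ (all $t_a=\bot$) by strictness and neutrality, and a single prefix $a.s$ equals $\bigvee_{a'\in\A}a'.t_{a'}$ with $t_a=s$ and $t_{a'}=\bot$ for $a'\neq a$. For the inductive step $t=t_1\vee t_2$, I apply the induction hypothesis to obtain normal forms $\bigvee_{a\in\A}a.u^1_a$ and $\bigvee_{a\in\A}a.u^2_a$, and then merge them action by action via $a.u^1_a\vee a.u^2_a=a.(u^1_a\vee u^2_a)$, setting $t_a:=u^1_a\vee u^2_a$; regrouping the two joins into a single $\A$-indexed join again uses only commutativity and associativity of~$\vee$ at depth~$1$.

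I do not expect a genuine obstacle: the only thing requiring attention is the bookkeeping of depths, i.e.\ confirming that every rewrite is a bona fide uniform depth-$1$ equation in the derivation system of \cref{sec:prelims}. The join-semilattice laws are depth-$0$ axioms instantiated, via uniform-depth substitution, by depth-$1$ terms, whereas distributivity and strictness are depth-$1$ axioms instantiated by depth-$0$ terms; in each case the substitution instance lands in $\Termsarg{\Sigma,1}(X)$, so the whole chain is a derivation of depth-$1$ equations. Everything beyond this is the routine normalization of a finite join of action prefixes, and no completeness or semantic input is needed---the statement is purely syntactic and follows from the axioms alone.
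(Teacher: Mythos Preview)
Your argument is correct. The paper states this proposition without proof, regarding it as a routine normalization in the graded theory of traces; your structural induction makes that normalization explicit, and the only point requiring care---that each rewrite stays within uniform depth~$1$---is handled correctly by your observation that the join-semilattice laws are depth-$0$ axioms substituted with depth-$1$ terms, while distributivity and strictness are depth-$1$ axioms substituted with depth-$0$ terms.
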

\begin{propn}
  Suppose the normal form of $M_1$-terms $(\alpha\cdot\gamma)^*_0\Box$ is $\bigvee_{a} a.\Box_a$ (for $\Box\in\{s,t\}$). Then a relation $Z$ on $M_0X$ satisfying the following conditions for each action $a$ is admissible at $(s,t)$.
  \begin{enumerate}
    \item $\forall_{x\in s_a}\exists_{t'}\ t'\subseteq t_a \land \left((t',x\vee t') \in Z \lor (x\vee t',t')\in Z \right)$, and
    \item $\forall_{y\in t_a}\exists_{s'}\ s'\subseteq s_a \land \left((s',y\vee s') \in Z \lor (y\vee s',s')\in Z\right)$.
  \end{enumerate}
\end{propn}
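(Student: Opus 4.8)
The plan is to reduce admissibility of $Z$ at $(s,t)$ to a purely equational derivation in the graded theory of traces, and then to discharge that derivation by elementary join-semilattice reasoning. Recall from \cref{def:game} that $Z$ is admissible at $(s,t)$ precisely when $\barZ(s)=\barZ(t)$, which by \cref{R:coeq}.\ref{item:barZ} is equivalent to the derivability $Z\vdash(\alpha_X\cdot\gamma)^*_0 s=(\alpha_X\cdot\gamma)^*_0 t$ of the associated depth-$1$ equation in the presenting theory. Since $(\alpha_X\cdot\gamma)^*_0 s=\bigvee_a a.s_a$ and $(\alpha_X\cdot\gamma)^*_0 t=\bigvee_a a.t_a$ are the hypothesised normal forms, hence derivable equalities, this goal becomes $Z\vdash\bigvee_a a.s_a=\bigvee_a a.t_a$.

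By the congruence rules for the depth-$1$ operations $a(-)$ and the depth-$0$ join, it then suffices to prove the action-wise equalities $Z\vdash s_a=t_a$ for each $a\in\A$, which I would extract from the two hypotheses, treating them symmetrically. Consider condition~(1): for a fixed $x\in s_a$ we are handed some $t'\subseteq t_a$ with $(t',x\vee t')\in Z$ or $(x\vee t',t')\in Z$; in either orientation, symmetry of equational derivation gives $Z\vdash t'=x\vee t'$, i.e.~$x\le t'$ in the semilattice order. Writing $t_a=t'\vee r$, where $r$ is the join of the states of $t_a$ outside $t'$ (this identity being derivable outright, as $t'\subseteq t_a$), idempotence and associativity of the join yield $Z\vdash x\vee t_a=x\vee t'\vee r=t'\vee r=t_a$. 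Absorbing in this way each $x\in s_a$ into $t_a$ then produces $Z\vdash s_a\vee t_a=t_a$.

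Condition~(2) is handled identically with the roles of $s_a$ and $t_a$ exchanged, giving $Z\vdash s_a\vee t_a=s_a$. Combining the two derived equalities via commutativity of the join yields $Z\vdash s_a=s_a\vee t_a=t_a$, as required; reassembling these action-wise equalities through the congruence rules completes the derivation $Z\vdash\bigvee_a a.s_a=\bigvee_a a.t_a$, and hence establishes admissibility of $Z$ at $(s,t)$.

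The technical heart of the argument is the single semilattice step that turns an assumed $x\le t'$ with $t'\subseteq t_a$ into $x\vee t_a=t_a$; I do not expect a serious obstacle beyond the bookkeeping of (i)~the two possible orientations of each pair in $Z$, absorbed by symmetry, and (ii)~the fact that the hypotheses constrain the normal forms $s_a,t_a$ rather than the raw configurations $s,t$, which is exactly why the normal-form rewriting must be performed first. Throughout, the passage between the categorical admissibility condition $\barZ(s)=\barZ(t)$ and the syntactic derivability $Z\vdash(\cdots)$ is licensed by \cref{R:coeq}, using that the presenting theory of traces is depth-$1$ and that $\barM$ preserves monomorphisms (\cref{expl:barM}.\ref{item:mono-traces}).
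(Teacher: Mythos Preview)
Your proposal is correct and follows essentially the same approach as the paper's own proof: reduce admissibility to the action-wise derivations $Z\vdash s_a=t_a$, obtain $Z\vdash s_a\vee t_a=t_a$ from condition~(1) and $Z\vdash s_a\vee t_a=s_a$ from condition~(2), and conclude. You spell out the intermediate absorption step $x\vee t_a=t_a$ for each $x\in s_a$ in more detail than the paper, which jumps directly to $\bigvee_{x\in s_a}x\vee t_a=t_a$, but the argument is the same.
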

\begin{proof}
  To show admissibility of $Z$ at $(s,t)$, it suffices to show that $\T+Z \vdash s_a=t_a$. Condition 1 ensures that
  \[
  \T+Z \vdash s_a \lor t_a = \bigvee_{x\in s_a} x \vee t_a = t_a.
  \]
  Likewise Condition 2 ensures that $\T+Z \vdash s_a \lor t_a = s_a$. Thus, $\T +Z \vdash s_a=t_a$.
\end{proof}
A natural question to ask is whether there exists an algorithm to
determine whether $s,t$\hbnote{I changed $U,V$ to $s,t$ inline with the notation in Definition~\ref{def:game}.} are in the congruence closure of $Z$. In fact
there are algorithms to do this for the powerset monad $\powf$
\cite{bp:checking-nfa-equiv} and for certain semiring monads
\cite{bkk:up-to-weighted}. The idea behind those algorithms is to
obtain rewriting rules from the pairs of $Z$ and two elements are in
the congruence closure iff they can be rewritten to the same normal
form.}

\paragraph*{Simulation equivalence.}
We illustrate how the infinite $(\alpha,\M$)-behavioural equivalence
game can be used to characterise simulation
equivalence~\cite{Glabbeek90} on serial LTS. We have described the
graded theory of simulation
in~\cref{E:semantics}.\ref{item:sem-sim}.  Recall that it requires
actions to be monotone, via the depth-1 equation
\( a(x + y) = a(x+ y) + a(x).  \) When trying to show that depth-1
terms $\sum_{i\in I}a_i(t_i)$ and $\sum_{j\in J}b_j(s_k)$ are \mbox{equal, D}
may exploit that over join semilattices, inequalities can be expressed
as equalities ($x\le y$ iff $x+y=y$), and instead endeavour to show
inequalities in both directions. By the monotonicity of actions,
$\sum_{i\in I}a_i(t_i)\le \sum_{j\in J}b_j(s_k)$ is implied by~D
claiming, for each~$i$, that $t_i\le s_j$ for some~$j$ such that
$a_i=b_j$; symmetrically for $\ge$ (and by the description of the
relevant graded monad as per
\cref{E:graded-theory}.\ref{item:simulation-a}, this proof
principle is complete). Once~S challenges either a claim of the form
$t_i\le s_j$ or one of the form $t_i\ge s_j$, the direction of
inequalities is fixed for the rest of the game; this corresponds to
the well-known phenomenon that in the standard pebble game for
similarity,~S cannot switch sides after the first move. Like for
bisimilarity (\cref{expl:bisim-instance}), the game can be modified
to let~S move first:~S first picks, say, one of the terms~$t_i$, and~D
responds with an~$s_j$ such that~$a_i=b_j$, for which she claims
$t_i\le s_j$. Overall, the game is played on positions in
$\pow^+(X)\times\pow^+(X)$, but if started on two states~$x,y$ of the
given labelled transition systems, i.e.~in a position of the form
$(\{x\},\{y\})$, the game forever remains in positions where both
components are singletons, and thus is effectively played on pairs of
states. Summing up, we recover exactly the usual pebble game for
mutual similarity. Variants such as complete, failure, or ready
simulation are captured by minor modifications of the graded
semantics~\cite{DMS19}.

\paragraph*{T-structured trace equivalence.}
Fix a set $\A$ and a finitary monad $T$ on $\Set$. We are going to
consider the $(\id, \M_T(\A))$-behavioural equivalence game on
coalgebras for the functor $T(\A\times -)$
(cf.~\cref{E:graded-monad}.\ref{item:T-traces}).

\begin{notn}
  Fix a presentation $(\Sigma', E')$ of $T$ (i.e.~an equational theory
  in the sense of universal algebra).  We generalize the graded trace
  theory described in \cref{E:graded-theory}.\ref{item:traces-a} to
  a graded theory $\T=(\Sigma,\E)$ for $\M_T(\A)$ as follows:
  $(\Sigma', E')$ forms the depth-0 part of $\T$ and, at depth-1, $\T$
  has unary actions $a(-)$ which distribute over all operations
  $f\in \Sigma'$:
\[
a(f(s_1,\cdots,s_{\arity(f)}))= f(a (s_1),\cdots,a(s_{\arity(f)}))
\]
The arising theory $\T$ presents $\M_T(\A)$.
\end{notn}

\noindent Recall from~\cref{R:coeq}.\ref{R:coeq:1} that since, in
this setting, $M_0=T$, a legal move for~D in position
$(s,t)\in TX\times TX$ is a relation $Z$ on $TX$ such
that equality of the respective successors~$(\alpha_X\cdot \gamma)^*_0(s)$
and $(\alpha_X\cdot \gamma)^*_0(t)$, viewed
as (equivalence classes of) depth-1 terms, is derivable in the theory
$\T$ under assumptions~$Z$.

\begin{rem}
  A natural question is whether there exist algorithms for deciding if
  a pair $(\alpha_X\cdot \gamma)^*_0(s),(\alpha_X\cdot \gamma)^*_0(t)$
  sits in the congruence closure of $Z$.  In fact, there are
  algorithms to check congruence closure of depth-0 terms for the
  powerset monad $T=\powf$~\cite{bp:checking-nfa-equiv} and for
  certain semiring monads~\cite{bkk:up-to-weighted}.  The idea behind
  those algorithms is to obtain rewrite rules from pairs in $Z$, and
  two elements are in the congruence closure if and only if they can
  be rewritten to the same \emph{normal form}. Applying depth-1
  equations to normal forms could potentially yield a method to check
  automatically whether a given pair of $M_1$-terms lies in the
  congruence closure of $Z$.
\end{rem}

\paragraph*{Finite-trace equivalence.}
More concretely, we examine the behavioural equivalence game for trace
equivalence on finitely branching LTS (i.e.
$(\id, \M_{\powf}(\A))$-semantics as
per~\cref{E:semantics}.\ref{item:sem-trace}).


\begin{expl}
  Consider the following process terms representing a coalgebra
  $\gamma$ (in a fragment of CCS):
  \[
  p_1\equiv a.p_1';
  \quad
  p_2\equiv a.p_2' + b.p_2'';
  \quad
  p_3\equiv b.p_3',
  \]
  where $p_1',p_2',p_2'',p_3'$ are deadlocked. It is easy to see that
  $s=\{p_1,p_2\}$ and $t=\{p_2,p_3\}$ are trace equivalent: In
  particular, $s,t$ have the same traces of length 1. We show that~D
  has a winning strategy in the 1-round
  $(\id, \M_{\powf}(\A))$-behavioural equivalence game at
  $(s,t)$. Indeed, the relation
  \( Z := \{p_1' + p_2' = p_2',\ p_3' + p_2'' = p_2'' \}, \)
  is admissible at~$(s, t)$: We must show that equality of
  $(\alpha\cdot\gamma)^\#(s)=a(p_1') + a(p_2') + b(p_2'')$
  and~$(\alpha\cdot\gamma)^\#(t)=a(p_2') + b(p_2'') + b(p_3')$ is
  entailed by~$Z$. To see this, note that
  \[
  Z\vdash a(p_1') + a(p_2') = a(p_2')\ \text{ and }\ Z\vdash b(p_2'') = b(p_3') + b(p_2'').
  \]
  Moreover, the pairs $(\{p_1',p_2'\},\{p_2'\})$
  and~$(\{p_3',p_2''\},\{p_2''\})$ are both identified by~$M_0!$ (all
  terms are mapped to $\{\epsilon\}$ when $1=\{\epsilon\}$).  That
  is,~$Z$ is a winning move for~D.
\end{expl}
\noindent
In general, admissible moves of~D can be described via a normalisation
of depth-1 terms as follows:

\begin{propn}\label{prop:trace-nf}
  In $\M_{\powf}(\A)$, every depth-1 term is derivably equal to one of
  the form $\sum_{a\in \A}a (t_{a})$, with depth-0 terms (i.e.~finite,
  possibly empty, sets)~$t_{a}$. Over serial LTS (i.e.~$T=\powf^+$),
  every depth-1 term has a normal form of the shape
  $\sum_{a\in B}a (t_{a})$ with $B\in\powf^+\A$ (where the~$t_a$ are
  now finite and non-empty).
\end{propn}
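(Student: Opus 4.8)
The plan is to carry out a standard normal-form argument by structural induction, using only the axioms of the graded theory~$\T$ presenting $\M_{\powf}(\A)$ (\cref{E:graded-theory}.\ref{item:traces-a}): the join-semilattice-with-bottom axioms on the depth-0 operations $+,0$, and the depth-1 axioms $a(x+y)=a(x)+a(y)$ and $a(0)=0$; for the serial variant, only the first of these, and there is no constant~$0$. The first step is a decomposition lemma: every $t\in\Termsarg{\Sigma,1}(X)$ is derivably equal to a finite join $\sum_{i=1}^{n}a_i(s_i)$ with $n\ge 0$ (the empty join being~$0$), each $a_i\in\A$, and each $s_i\in\Termsarg{\Sigma,0}(X)$. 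This follows by induction on the structure of~$t$, observing that the only ways to build a uniform-depth-1 term over~$\T$ are (i) the constant~$0$, (ii) an atom $a(s)$ with $s$ of uniform depth~$0$, and (iii) a sum $t'+t''$ of two uniform-depth-1 terms; in case (iii) one applies the induction hypothesis to $t'$ and~$t''$, concatenates, and absorbs any resulting $0$-summands using $x+0=x$. For the serial theory case (i) is absent, so $n\ge 1$ always, and moreover every depth-0 term is a nonempty finite join of variables.

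Next I would collect the summands by action. Fix $t=\sum_{i=1}^{n}a_i(s_i)$ as above and, for $a\in\A$, set $I_a=\{\,i\mid a_i=a\,\}$. Regrouping the sum by the value of~$a_i$ (using the join-semilattice axioms together with congruence) and then applying the distributivity axiom in the form $a(u)+a(v)=a(u+v)$ repeatedly yields $\sum_{i\in I_a}a_i(s_i)=a\big(\sum_{i\in I_a}s_i\big)$ whenever $I_a\ne\emptyset$; put $t_a:=\sum_{i\in I_a}s_i$. Thus $t$ is derivably equal to $\sum_{a\in B}a(t_a)$ with $B=\{a_1,\dots,a_n\}$. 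In the serial case this is already the asserted shape: $B\in\powf^+\A$ since $n\ge 1$, and each $t_a$ is a nonempty finite join of variables, i.e.\ a finite nonempty subset of~$X$. In the non-serial case one finishes by padding: for $a\in\A\setminus B$ put $t_a:=0$; then $a(t_a)=a(0)=0$ by the second depth-1 axiom, so $\sum_{a\in B}a(t_a)=\sum_{a\in\A}a(t_a)$ by $x+0=x$. Reading depth-0 terms modulo the semilattice axioms, each $t_a$ is a finite (possibly empty) subset of~$X$, as claimed; this also matches the description of~$\barM$ in \cref{expl:barM}.\ref{item:mono-traces}.

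I do not expect any genuine difficulty here; the argument is essentially bookkeeping, and the two points that need a little care are: making the decomposition lemma precise -- in particular recording exactly which term constructors yield a uniform-depth-1 term, which is where the absence of the constant~$0$ in the serial theory is used to force $n\ge 1$ and nonemptiness of the~$t_a$ -- and verifying that all rewriting steps are legitimate derivations. The latter is fine because no assumptions are in play, so one may freely substitute arbitrary depth-0 terms into the axioms of~$\T$ and apply congruence; each use of distributivity or of $a(0)=0$ is thus an instance of the substituted-axiom rule followed by congruence and symmetry. Finally, note that the proposition only claims \emph{existence} of a normal form of the stated shape; if desired, one could further normalize the~$t_a$ modulo the depth-0 equations to obtain a canonical representative (a function $\A\to\powf X$, resp.\ a nonempty-valued partial function with finite domain into $\powf^+ X$), but this refinement is not needed for the applications to the behavioural equivalence game.
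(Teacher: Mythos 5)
Your proof is correct, and the paper in fact offers no proof of this proposition at all (it is stated as routine and its justification is omitted even from the appendix); your structural induction with grouping by action, distributivity $a(u)+a(v)=a(u+v)$, and padding via $a(0)=0$ is exactly the standard argument the authors evidently have in mind, including the correct observation that the absence of~$0$ in the serial theory forces $n\ge 1$ and non-emptiness of the~$t_a$.
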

\begin{propn}\label{prop:WStratTrace}
  Let $\rho=\sum_{a\in\A} a(\rho_a)$ be depth-1 terms over~$X$ in
  normal form, for $\rho\in\{s,t\}$. Then a relation
  $Z\subseteq \powf X\times\powf X$
  is a legal move of D in position $(s,t)$ iff
  the following conditions hold for all $a\in\A$, in the notation
  of~\cref{prop:trace-nf}:
  \begin{enumerate}
    \item $\forall x\in s_{a}.\ \exists {t'}(t'\subseteq t_{a} \land Z\vdash x\leq t')$
    \item $\forall {y\in t_{a}}.\ \exists {s'}(s'\subseteq s_{a} \land Z\vdash y\leq s')$
  \end{enumerate}
  where, again, $s\leq t$ abbreviates $s+ t=t$. Over serial LTS
  (i.e.~$T=\powf^+$), and for normal forms
  $\rho=\sum_{a\in B_\rho} a(\rho_a)$, a relation
  $Z\subseteq \powf^+X \times \powf^+X$
  is a legal move of D in position~$(s,t)$
  iff~$B_s=B_t$ and the above conditions hold for all~$a\in B_s$.
\end{propn}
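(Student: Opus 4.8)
The plan is to peel the statement into two layers. By \cref{def:game} the move $Z$ is legal in position $(s,t)$ iff $\barZ(s)=\barZ(t)$, which by \cref{R:coeq}.\ref{item:barZ} means precisely that $Z\vdash\rho^{s}=\rho^{t}$ in the graded theory of traces, where $\rho^{s}:=(\alpha_{X}\cdot\gamma)^{*}_{0}(s)$ and $\rho^{t}:=(\alpha_{X}\cdot\gamma)^{*}_{0}(t)$ are the depth-$1$ successor terms. By \cref{prop:trace-nf} we may take these in the stated normal form $\rho^{s}=\sum_{a\in\A}a(s_{a})$, $\rho^{t}=\sum_{a\in\A}a(t_{a})$ with finite sets $s_{a},t_{a}\subseteq X$ (concretely $s_{a}=\{x'\mid\exists x\in s.\,(a,x')\in\gamma(x)\}$, since $\alpha=\id$ and $M_{0}=\powf$). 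The first layer is the \emph{reduction lemma}: $Z\vdash\rho^{s}=\rho^{t}$ iff $Z\vdash s_{a}=t_{a}$ for every $a\in\A$, the latter derivations now living in the depth-$0$ theory of join semilattices with bottom. The direction from right to left is immediate by congruence. For the converse I would invoke the description of $\barM$ from \cref{expl:barM}.\ref{item:mono-traces}: on the Eilenberg--Moore category of $\powf$ the functor $\barM$ is the $\A$-fold power functor, and under the canonical isomorphism $M_{1}X=\barM M_{0}X\cong(\powf X)^{\A}$ the normal-form components $s_{a}$ are exactly the coordinates of $\rho^{s}$. Writing $c_{Z}\colon M_{0}X\to C_{Z}$ for the coequalizer of \cref{N:admissible}, which by \cref{R:coeq}.\ref{R:coeq:1} is the quotient of $\powf X$ by the congruence generated by $Z$, we therefore have $\barZ=\barM c_{Z}\cdot(\alpha_{X}\cdot\gamma)^{*}_{0}=(c_{Z})^{\A}\cdot(\alpha_{X}\cdot\gamma)^{*}_{0}$, so $\barZ(s)=\barZ(t)$ iff $c_{Z}(s_{a})=c_{Z}(t_{a})$ for all $a$, i.e.\ iff $Z\vdash s_{a}=t_{a}$ for all $a$.

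The second layer is the elementary observation that, for a fixed $a$, $Z\vdash s_{a}=t_{a}$ holds iff conditions (1) and (2) hold. For ``if'': by (1) each $x\in s_{a}$ satisfies $Z\vdash x\le t'$ for some $t'\subseteq t_{a}$; since $t'\subseteq t_{a}$ gives $\vdash t'\le t_{a}$ outright (idempotence and commutativity), transitivity of $\le$ yields $Z\vdash x\le t_{a}$, and absorbing the finitely many elements of $s_{a}$ one at a time gives $Z\vdash s_{a}+t_{a}=t_{a}$ (the case $s_{a}=\emptyset$ being trivial). Symmetrically (2) gives $Z\vdash t_{a}+s_{a}=s_{a}$, and the two combine to $Z\vdash s_{a}=t_{a}$. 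For ``only if'': if $Z\vdash s_{a}=t_{a}$ then for each $x\in s_{a}$ we have $Z\vdash x\le s_{a}=t_{a}$, so $t':=t_{a}$ witnesses (1), and symmetrically for (2). Together with the reduction lemma this proves the first assertion; note that (1) and (2) are in fact equivalent to $Z\vdash s_{a}\le t_{a}$ and $Z\vdash t_{a}\le s_{a}$, the existential quantifier being logically redundant but natural when specifying an actual strategy for~D.

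For serial LTS ($T=\powf^{+}$) the only changes are: the normal form of \cref{prop:trace-nf} is $\rho=\sum_{a\in B_{\rho}}a(\rho_{a})$ with $B_{\rho}\in\powf^{+}\A$ and all $\rho_{a}$ nonempty, and on the Eilenberg--Moore category of $\powf^{+}$ the functor $\barM$ is the corresponding ``serial power'', whose elements are the $\sum_{a\in B'}a(-)$ with $\emptyset\ne B'\subseteq\A$. Consequently $\barZ(s)=\barZ(t)$ now additionally forces $B_{s}=B_{t}$, after which the reduction lemma gives $Z\vdash s_{a}=t_{a}$ for all $a\in B_{s}$, and the second-layer argument applies verbatim, the witnesses $t':=t_{a}$, $s':=s_{a}$ being nonempty and the empty-set case never arising. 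This yields the serial statement: $Z$ is legal iff $B_{s}=B_{t}$ and (1), (2) hold for all $a\in B_{s}$.

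The one step requiring more than routine bookkeeping is the converse half of the reduction lemma: one must know that passing to the congruence quotient $c_{Z}$ commutes with the normal-form decomposition of depth-$1$ terms, so that derivable equality of the full successor terms forces derivable equality of \emph{each} component pair $s_{a},t_{a}$ rather than merely of ``enough'' of them. This is exactly what the identification of $\barM$ with a (serial) power functor on the Eilenberg--Moore category provides; everything downstream is join-semilattice calculation.
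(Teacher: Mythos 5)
Your proposal is correct, and for the soundness direction it performs essentially the same join-semilattice calculation as the paper: from condition~(1) derive $Z\vdash s_a+t_a=\sum_{x\in s_a}x+t_a=t_a$, symmetrically from~(2) derive $Z\vdash s_a+t_a=s_a$, conclude $Z\vdash s_a=t_a$ and hence, by congruence, equality of the full depth-1 successor terms. Where you genuinely diverge is in making the \emph{converse} direction explicit: the paper's own proof only argues that conditions (1) and (2) imply admissibility, and then drifts into remarks about Spoiler's responses and winning matches (material that concerns strategies rather than legality of the move); it never shows that a legal $Z$ must satisfy (1) and (2). Your ``reduction lemma'' --- that $Z\vdash\rho^{s}=\rho^{t}$ iff $Z\vdash s_a=t_a$ for every $a$, justified by identifying $\barM$ with the $\A$-fold power functor on $M_0$-algebras (\cref{expl:barM}.\ref{item:mono-traces}) so that $\barM c_Z$ acts coordinatewise on normal forms --- is exactly the missing ingredient, and your observation that $t':=t_a$ then witnesses (1) closes the argument. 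The serial case is likewise handled correctly: the ``serial power'' description of $\barM$ forces $B_s=B_t$ before the componentwise reduction applies. In short, your proof is a strict improvement in completeness over the one in the paper, at the cost of explicitly invoking the concrete description of $\barM$, which the paper leaves implicit.
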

\noindent To explain terminology, we note at this point that by the
above, in particular $Z=\{(x,0)\mid x\in X\}\cup\{(0,y)\mid y\in X\}$
is always admissible. Playing~$Z$,~D is able to move in every
round, bluffing her way through the game; but this strategy does not
win in general, as her bluff is called at the end
(cf.~\cref{S:games}). More reasonable strategies work as follows.

On the one hand,~D can restrict herself to playing the bisimulation
relation on the determinised transition system because the term $s'$
(resp. $t'$) can be taken to be exactly $s_a$ (resp. $t_a$) in
Condition~2 (resp. Condition~1). This form of the game may be recast
as follows. Each round consists just of~S playing some~$a\in\A$ (or
$a\in B_s$ in the serial case), moving to $(s_a,t_a)$ regardless of
any choice by~D. In the non-serial case, the game runs until the bluff
is called after the last round. In the serial case, D wins if either
all rounds are played or as soon as~$B_s=B_t=\emptyset$, and~S wins as
soon as $B_s\neq B_t$.

On the other hand, D may choose to play in a more fine-grained manner,
playing one inequality $x\le t'$ for every $x\in s_a$ and one
inequality $s'\ge y$ for every $y\in t_a$. Like in the case of
simulation, the direction of inequalities remains fixed after~S
challenges one of them, and the game can be rearranged to let~S move
first, picking, say, $x\in s_a$ (or symmetrically), which~D answers
with $t'\subseteq t_a$, reaching the new position $x\le t'$. The game
thus proceeds like the simulation game, except that~D is allowed to
play sets of states.

\paragraph*{Probabilistic traces} These are treated similarly as traces in
non-deterministic LTS: Every depth-1 term can be normalized into one
of the form
 $ \sum_{\A} p_a\cdot a(t_a)$,
where $\sum_{\A}p_a=1$ and the~$t_a$ are depth-0 terms. To show
equality of two such normal forms $\sum_{a\in\A} p_a\cdot a(t_a)$ and
$\sum_{a\in\A} q_a\cdot a(s_a)$ (arising as successors of the current
configuration),~D needs to have $p_a=q_a$, and then claim~$t_a=s_a$,
for all $a\in\A$. Thus, the game can be rearranged to proceed like the
first version of the trace game described above:~S selects~$a\in\A$,
and wins if~$p_a\neq q_a$ (and the game then reaches the next
configuration~$(t_a,s_a)$ without intervention by~D).

\paragraph*{Failure equivalence.}
Let $\gamma\colon X \to \powf(\A \times X)$ be an LTS.
A tuple $(w,B)\in \A^* \times \powf(\A)$ is a \emph{failure pair} of a state $x$ if there is a $w$-path from $x$ to a state $x'\in X$
such that $x'$ \emph{fails} to perform some action $b\in B$ (the
\emph{failure set}). Two states are failure equivalent iff
they have the same set of failure pairs.

The \emph{graded theory of failure semantics}~\cite{DMS19}
extends the graded theory of traces by de\-pth-1 constants~$A$ for
each~$A\in\pow_f(\A)$ (failure sets) and depth-1
equations~$A+ (A\cup B) = A$ for each~$A,B\in\pow_f(\A)$
(failure sets are downwards closed). The resulting graded monad~\cite{DMS19}
has~$M_0X=\powf X$ and~${M_1X=\powf^\downarrow(\A\times X + \powf \A)}$,
where~$\powf \A$ is ordered by inclusion,~$\A\times X$ carries the discrete order, and
$\powf^\downarrow$ denotes the finitary downwards-closed powerset.
It is clear that $\barM$ still preserves monos since we have only expanded
the theory of traces by constants. The game in general
is then described similarly as the one for plain traces above; the key
difference is that now~S is able to challenge whether a pair of failure
sets are matched up to downwards closure.

\begin{expl}
  Consider the following process terms with $\A=\{a,b,c\}$:
  $
  p_1\equiv a.\dl,\ p_2\equiv a.\dl+b.\dl,\ p_3\equiv b.\dl.
  $

  Clearly, the states $s=\{p_1,p_2,p_3\}$ and $t=\{p_1,p_3\}$ in the pre-determinized system are failure equivalent. To reason this through our game $\CalG_\infty(\gamma)$, Duplicator starts with the relation $Z=\{(\dl,\dl)\}$. From~$Z$, we derive
  \begin{align*}
     (\alpha_X\cdot \gamma)^*_0 s &= a(\dl) + b(\dl) +  {\downarrow}\{b,c\} + {\downarrow}\{c\} + {\downarrow}\{a,c\} \\
    &=  a(\dl) + b(\dl) +   {\downarrow}\{b,c\} + {\downarrow}\{c\} + {\downarrow}\{c\} + {\downarrow}\{a,c\} \\
    &=  a(\dl) + b(\dl) + {\downarrow}\{b,c\} + {\downarrow}\{a,c\} =(\alpha_X\cdot \gamma)^*_0 t.
  \end{align*}
  Thus $Z$ is admissible at $(s,t)$ and the game position advances to $(\dl,\dl)$ from where Duplicator has a winning strategy.
\end{expl}

\section{Conclusions and Future Work}
\noindent We have shown how to extract characteristic games for a
given graded behavioural equivalence, such as similarity, trace
equivalence, or probablistic trace equivalence, from the underlying
graded monad, effectively letting Spoiler and Duplicator play out an
equational proof. The method requires only fairly mild assumptions on
the graded monad; specifically, the extension of the first level of
the graded monad to algebras for the zero-th level needs to preserve
monomorphisms. This condition is not completely for free but appears to
be unproblematic in typical application scenarios. In case the zero-th
level of the graded monad preserves the terminal object (i.e.~the
singleton set), it turns out that the induced graded behavioural
equivalence can be recast as standard coalgebraic behavioural
equivalence in a category of Eilenberg-Moore algebras, and is then
characterized by an infinite version of the generic equivalence
game. A promising direction for future work is to develop the generic
algorithmics and complexity theory of the infinite equivalence game,
which has computational content via the implied fixpoint
characterization. Moreover, we will extend the framework to cover
further notions of process comparison such as behavioural
preorders~\cite{FMS21a} and, via a graded version of quantitative
algebra~\cite{MPP16}, behavioural metrics.

\bibliography{monadsrefs}


\providecommand{\noopsort}[1]{}
\begin{thebibliography}{42}


\ifx \showCODEN    \undefined \def \showCODEN     #1{\unskip}     \fi
\ifx \showDOI      \undefined \def \showDOI       #1{#1}\fi
\ifx \showISBNx    \undefined \def \showISBNx     #1{\unskip}     \fi
\ifx \showISBNxiii \undefined \def \showISBNxiii  #1{\unskip}     \fi
\ifx \showISSN     \undefined \def \showISSN      #1{\unskip}     \fi
\ifx \showLCCN     \undefined \def \showLCCN      #1{\unskip}     \fi
\ifx \shownote     \undefined \def \shownote      #1{#1}          \fi
\ifx \showarticletitle \undefined \def \showarticletitle #1{#1}   \fi
\ifx \showURL      \undefined \def \showURL       {\relax}        \fi
\providecommand\bibfield[2]{#2}
\providecommand\bibinfo[2]{#2}
\providecommand\natexlab[1]{#1}
\providecommand\showeprint[2][]{arXiv:#2}

\bibitem[\protect\citeauthoryear{Abramsky, Dawar, and Wang}{Abramsky
  et~al\mbox{.}}{2017}]%
        {ADW17}
\bibfield{author}{\bibinfo{person}{Samson Abramsky}, \bibinfo{person}{Anuj
  Dawar}, {and} \bibinfo{person}{Pengming Wang}.}
  \bibinfo{year}{2017}\natexlab{}.
\newblock \showarticletitle{The Pebbling Comonad in Finite Model Theory}. In
  \bibinfo{booktitle}{\emph{Logic in Computer Science, {LICS} 2017}}.
  \bibinfo{publisher}{{IEEE}}, \bibinfo{pages}{1--12}.
\newblock
\urldef\tempurl%
\url{https://doi.org/10.1109/LICS.2017.8005129}
\showDOI{\tempurl}


\bibitem[\protect\citeauthoryear{Abramsky and Shah}{Abramsky and Shah}{2018}]%
        {AS18}
\bibfield{author}{\bibinfo{person}{Samson Abramsky} {and}
  \bibinfo{person}{Nihil Shah}.} \bibinfo{year}{2018}\natexlab{}.
\newblock \showarticletitle{Relating Structure and Power: Comonadic Semantics
  for Computational Resources - Extended Abstract}. In
  \bibinfo{booktitle}{\emph{Coalgebraic Methods in Computer Science, {CMCS}
  2018, Revised Selected Papers}} \emph{(\bibinfo{series}{LNCS})},
  \bibfield{editor}{\bibinfo{person}{Corina C{\^{\i}}rstea}} (Ed.),
  Vol.~\bibinfo{volume}{11202}. \bibinfo{publisher}{Springer},
  \bibinfo{pages}{1--5}.
\newblock
\urldef\tempurl%
\url{https://doi.org/10.1007/978-3-030-00389-0\_1}
\showDOI{\tempurl}


\bibitem[\protect\citeauthoryear{Aczel and Mendler}{Aczel and Mendler}{1989}]%
        {AczelMendler89}
\bibfield{author}{\bibinfo{person}{Peter Aczel} {and} \bibinfo{person}{Nax~Paul
  Mendler}.} \bibinfo{year}{1989}\natexlab{}.
\newblock \showarticletitle{A Final Coalgebra Theorem}. In
  \bibinfo{booktitle}{\emph{Category Theory and Computer Science, CTCS 1989}}
  \emph{(\bibinfo{series}{LNCS})}, \bibfield{editor}{\bibinfo{person}{David~H.
  Pitt}, \bibinfo{person}{David~E. Rydeheard}, \bibinfo{person}{Peter Dybjer},
  \bibinfo{person}{Andrew~M. Pitts}, {and} \bibinfo{person}{Axel Poign{\'{e}}}}
  (Eds.), Vol.~\bibinfo{volume}{389}. \bibinfo{publisher}{Springer},
  \bibinfo{pages}{357--365}.
\newblock
\urldef\tempurl%
\url{https://doi.org/10.1007/BFb0018361}
\showDOI{\tempurl}


\bibitem[\protect\citeauthoryear{Ad\'{a}mek, Herrlich, and Strecker}{Ad\'{a}mek
  et~al\mbox{.}}{1990}]%
        {AHS90}
\bibfield{author}{\bibinfo{person}{Ji{\v{r}}{\'{i}} Ad\'{a}mek},
  \bibinfo{person}{Horst Herrlich}, {and} \bibinfo{person}{George Strecker}.}
  \bibinfo{year}{1990}\natexlab{}.
\newblock \bibinfo{booktitle}{\emph{Abstract and Concrete Categories}}.
\newblock \bibinfo{publisher}{Wiley-Interscience, New York}.
\newblock


\bibitem[\protect\citeauthoryear{Ad\'amek, Milius, Sousa, and
  Wi\ss\/mann}{Ad\'amek et~al\mbox{.}}{2019}]%
        {amsw19-1}
\bibfield{author}{\bibinfo{person}{Ji\v{r}\'\i\ Ad\'amek},
  \bibinfo{person}{Stefan Milius}, \bibinfo{person}{Lurdes Sousa}, {and}
  \bibinfo{person}{Thorsten Wi\ss\/mann}.} \bibinfo{year}{2019}\natexlab{}.
\newblock \showarticletitle{On Finitary Functors}.
\newblock \bibinfo{journal}{\emph{Theory Appl.~Categ.}}  \bibinfo{volume}{34}
  (\bibinfo{year}{2019}), \bibinfo{pages}{1134--1164}.
\newblock


\bibitem[\protect\citeauthoryear{Ad\'{a}mek and Rosick\'{y}}{Ad\'{a}mek and
  Rosick\'{y}}{1994}]%
        {AR94}
\bibfield{author}{\bibinfo{person}{Ji\v{r}\'{i} Ad\'{a}mek} {and}
  \bibinfo{person}{Ji\v{r}\'{i} Rosick\'{y}}.} \bibinfo{year}{1994}\natexlab{}.
\newblock \bibinfo{booktitle}{\emph{Locally Presentable and Accessible
  Categories}}.
\newblock \bibinfo{publisher}{Cambridge University Press}.
\newblock
\urldef\tempurl%
\url{https://doi.org/10.1017/CBO9780511600579}
\showDOI{\tempurl}


\bibitem[\protect\citeauthoryear{Balan and Kurz}{Balan and Kurz}{2011}]%
        {BK11}
\bibfield{author}{\bibinfo{person}{Adriana Balan} {and}
  \bibinfo{person}{Alexander Kurz}.} \bibinfo{year}{2011}\natexlab{}.
\newblock \showarticletitle{On coalgebras over algebras}.
\newblock \bibinfo{journal}{\emph{Theor. Comput. Sci.}} \bibinfo{volume}{412},
  \bibinfo{number}{38} (\bibinfo{year}{2011}), \bibinfo{pages}{4989--5005}.
\newblock
\urldef\tempurl%
\url{https://doi.org/10.1016/j.tcs.2011.03.021}
\showDOI{\tempurl}


\bibitem[\protect\citeauthoryear{Baldan, Bonchi, Kerstan, and
  K{\"{o}}nig}{Baldan et~al\mbox{.}}{2018}]%
        {BaldanEA18}
\bibfield{author}{\bibinfo{person}{Paolo Baldan}, \bibinfo{person}{Filippo
  Bonchi}, \bibinfo{person}{Henning Kerstan}, {and} \bibinfo{person}{Barbara
  K{\"{o}}nig}.} \bibinfo{year}{2018}\natexlab{}.
\newblock \showarticletitle{Coalgebraic {B}ehavioral {M}etrics}.
\newblock \bibinfo{journal}{\emph{Log.\ Methods Comput.\ Sci.}}
  \bibinfo{volume}{14}, \bibinfo{number}{3} (\bibinfo{year}{2018}),
  \bibinfo{pages}{1860--5974}.
\newblock
\urldef\tempurl%
\url{https://doi.org/10.23638/lmcs-14(3:20)2018}
\showDOI{\tempurl}


\bibitem[\protect\citeauthoryear{Baltag}{Baltag}{2000}]%
        {Baltag00}
\bibfield{author}{\bibinfo{person}{Alexandru Baltag}.}
  \bibinfo{year}{2000}\natexlab{}.
\newblock \showarticletitle{A Logic for Coalgebraic Simulation}. In
  \bibinfo{booktitle}{\emph{Coalgebraic Methods in Computer Science, {CMCS}
  2000}} \emph{(\bibinfo{series}{ENTCS})},
  \bibfield{editor}{\bibinfo{person}{Horst Reichel}} (Ed.),
  Vol.~\bibinfo{volume}{33}. \bibinfo{publisher}{Elsevier},
  \bibinfo{pages}{42--60}.
\newblock
\urldef\tempurl%
\url{https://doi.org/10.1016/S1571-0661(05)80343-3}
\showDOI{\tempurl}


\bibitem[\protect\citeauthoryear{Bonchi, K\"{o}nig, and K\"{u}pper}{Bonchi
  et~al\mbox{.}}{2017}]%
        {bkk:up-to-weighted}
\bibfield{author}{\bibinfo{person}{Filippo Bonchi}, \bibinfo{person}{Barbara
  K\"{o}nig}, {and} \bibinfo{person}{Sebastian K\"{u}pper}.}
  \bibinfo{year}{2017}\natexlab{}.
\newblock \showarticletitle{Up-To Techniques for Weighted Systems}. In
  \bibinfo{booktitle}{\emph{Tools and Algorithms for the Construction and
  Analysis of Systems - 23rd International Conference, {TACAS} 2017}}.
  \bibinfo{publisher}{Springer}, \bibinfo{pages}{535--552}.
\newblock
\urldef\tempurl%
\url{https://doi.org/10.1007/978-3-662-54577-5\_31}
\showDOI{\tempurl}


\bibitem[\protect\citeauthoryear{Bonchi and Pous}{Bonchi and Pous}{2013}]%
        {bp:checking-nfa-equiv}
\bibfield{author}{\bibinfo{person}{Filippo Bonchi} {and}
  \bibinfo{person}{Damien Pous}.} \bibinfo{year}{2013}\natexlab{}.
\newblock \showarticletitle{Checking NFA Equivalence with Bisimulations up to
  Congruence}. In \bibinfo{booktitle}{\emph{Principles of Programming
  Languages, POPL 2013}}. \bibinfo{publisher}{ACM}, \bibinfo{pages}{457–468}.
\newblock
\urldef\tempurl%
\url{https://doi.org/10.1145/2429069.2429124}
\showDOI{\tempurl}


\bibitem[\protect\citeauthoryear{Bonsangue, Milius, and Silva}{Bonsangue
  et~al\mbox{.}}{2013}]%
        {bms13}
\bibfield{author}{\bibinfo{person}{Marcello~M. Bonsangue},
  \bibinfo{person}{Stefan Milius}, {and} \bibinfo{person}{Alexandra Silva}.}
  \bibinfo{year}{2013}\natexlab{}.
\newblock \showarticletitle{Sound and complete axiomatizations of coalgebraic
  language equivalence}.
\newblock \bibinfo{journal}{\emph{ACM Trans.~Comput.~Log.}}
  \bibinfo{volume}{14}, \bibinfo{number}{1:7} (\bibinfo{year}{2013}),
  \bibinfo{pages}{52 pp.}
\newblock


\bibitem[\protect\citeauthoryear{Borceux}{Borceux}{1994}]%
        {Borceux94}
\bibfield{author}{\bibinfo{person}{Francis Borceux}.}
  \bibinfo{year}{1994}\natexlab{}.
\newblock \bibinfo{booktitle}{\emph{Handbook of Categorical Algebra: Volume 2,
  Categories and Structures}}.
\newblock \bibinfo{publisher}{Cambridge University Press}.
\newblock


\bibitem[\protect\citeauthoryear{Chen and Deng}{Chen and Deng}{2008}]%
        {CD08}
\bibfield{author}{\bibinfo{person}{Xin Chen} {and} \bibinfo{person}{Yuxin
  Deng}.} \bibinfo{year}{2008}\natexlab{}.
\newblock \showarticletitle{Game Characterizations of Process Equivalences}. In
  \bibinfo{booktitle}{\emph{Programming Languages and Systems, {APLAS} 2008}}
  \emph{(\bibinfo{series}{LNCS})}. \bibinfo{publisher}{Springer},
  \bibinfo{pages}{107--121}.
\newblock
\urldef\tempurl%
\url{https://doi.org/10.1007/978-3-540-89330-1\_8}
\showDOI{\tempurl}


\bibitem[\protect\citeauthoryear{Dorsch, Milius, and Schr{\"{o}}der}{Dorsch
  et~al\mbox{.}}{2019}]%
        {DMS19}
\bibfield{author}{\bibinfo{person}{Ulrich Dorsch}, \bibinfo{person}{Stefan
  Milius}, {and} \bibinfo{person}{Lutz Schr{\"{o}}der}.}
  \bibinfo{year}{2019}\natexlab{}.
\newblock \showarticletitle{Graded Monads and Graded Logics for the Linear Time
  - Branching Time Spectrum}. In \bibinfo{booktitle}{\emph{Concurrency Theory,
  {CONCUR} 2019}} \emph{(\bibinfo{series}{LIPIcs})},
  \bibfield{editor}{\bibinfo{person}{Wan~J. Fokkink} {and} \bibinfo{person}{Rob
  van Glabbeek}} (Eds.), Vol.~\bibinfo{volume}{140}.
  \bibinfo{publisher}{Leibniz-Zentrum f{\"{u}}r Informatik},
  \bibinfo{pages}{36:1--36:16}.
\newblock
\urldef\tempurl%
\url{https://doi.org/10.4230/LIPIcs.CONCUR.2019.36}
\showDOI{\tempurl}


\bibitem[\protect\citeauthoryear{Fahrenberg and Legay}{Fahrenberg and
  Legay}{2014}]%
        {fl:quantitative-spectrum-journal}
\bibfield{author}{\bibinfo{person}{Uli Fahrenberg} {and} \bibinfo{person}{Axel
  Legay}.} \bibinfo{year}{2014}\natexlab{}.
\newblock \showarticletitle{The quantitative linear-time-branching-time
  spectrum}.
\newblock \bibinfo{journal}{\emph{Theor.\ Comp.\ Sci.}}  \bibinfo{volume}{538}
  (\bibinfo{year}{2014}), \bibinfo{pages}{54--69}.
\newblock


\bibitem[\protect\citeauthoryear{Fahrenberg, Legay, and Thrane}{Fahrenberg
  et~al\mbox{.}}{2011}]%
        {FLT11}
\bibfield{author}{\bibinfo{person}{Uli Fahrenberg}, \bibinfo{person}{Axel
  Legay}, {and} \bibinfo{person}{Claus Thrane}.}
  \bibinfo{year}{2011}\natexlab{}.
\newblock \showarticletitle{The Quantitative Linear-Time--Branching-Time
  Spectrum}. In \bibinfo{booktitle}{\emph{Foundations of Software Technology
  and Theoretical Computer Science, {FSTTCS} 2011}}
  \emph{(\bibinfo{series}{LIPIcs})}, Vol.~\bibinfo{volume}{13}.
  \bibinfo{publisher}{Schloss Dagstuhl - Leibniz-Zentrum f{\"{u}}r Informatik},
  \bibinfo{pages}{103--114}.
\newblock
\urldef\tempurl%
\url{https://doi.org/10.4230/LIPIcs.FSTTCS.2011.103}
\showDOI{\tempurl}


\bibitem[\protect\citeauthoryear{Ford, Milius, and Schr{\"{o}}der}{Ford
  et~al\mbox{.}}{2021}]%
        {FMS21a}
\bibfield{author}{\bibinfo{person}{Chase Ford}, \bibinfo{person}{Stefan
  Milius}, {and} \bibinfo{person}{Lutz Schr{\"{o}}der}.}
  \bibinfo{year}{2021}\natexlab{}.
\newblock \showarticletitle{Behavioural Preorders via Graded Monads}. In
  \bibinfo{booktitle}{\emph{Logic in Computer Science, {LICS} 2021}},
  \bibfield{editor}{\bibinfo{person}{Leonid Libkin}} (Ed.).
  \bibinfo{publisher}{{IEEE}}, \bibinfo{pages}{1--13}.
\newblock
\urldef\tempurl%
\url{https://doi.org/10.1109/LICS52264.2021.9470517}
\showDOI{\tempurl}


\bibitem[\protect\citeauthoryear{Fujii, ya~Katsumata, and Melli{\`{e}}s}{Fujii
  et~al\mbox{.}}{2016}]%
        {FKM16}
\bibfield{author}{\bibinfo{person}{Soichiro Fujii}, \bibinfo{person}{Shin ya
  Katsumata}, {and} \bibinfo{person}{P.~{-}Andr{\'{e}} Melli{\`{e}}s}.}
  \bibinfo{year}{2016}\natexlab{}.
\newblock \showarticletitle{Towards a Formal Theory of Graded Monads}. In
  \bibinfo{booktitle}{\emph{Foundations of Software Science and Computation
  Structures, {FOSSACS} 2016}} \emph{(\bibinfo{series}{LNCS})},
  \bibfield{editor}{\bibinfo{person}{Bart Jacobs} {and}
  \bibinfo{person}{Christof L{\"{o}}ding}} (Eds.), Vol.~\bibinfo{volume}{9634}.
  \bibinfo{publisher}{Springer}, \bibinfo{pages}{513--530}.
\newblock
\urldef\tempurl%
\url{https://doi.org/10.1007/978-3-662-49630-5\_30}
\showDOI{\tempurl}


\bibitem[\protect\citeauthoryear{Glabbeek}{Glabbeek}{1990}]%
        {Glabbeek90}
\bibfield{author}{\bibinfo{person}{{Rob van} Glabbeek}.}
  \bibinfo{year}{1990}\natexlab{}.
\newblock \showarticletitle{The Linear Time-Branching Time Spectrum (Extended
  Abstract)}. In \bibinfo{booktitle}{\emph{Theories of Concurrency, CONCUR
  1990}} \emph{(\bibinfo{series}{LNCS})},
  \bibfield{editor}{\bibinfo{person}{Jos Baeten} {and}
  \bibinfo{person}{Jan~Willem Klop}} (Eds.), Vol.~\bibinfo{volume}{458}.
  \bibinfo{publisher}{Springer}, \bibinfo{pages}{278--297}.
\newblock
\urldef\tempurl%
\url{https://doi.org/10.1007/BFb0039066}
\showURL{%
\tempurl}


\bibitem[\protect\citeauthoryear{Gor{\'{\i}}n and Schr{\"{o}}der}{Gor{\'{\i}}n
  and Schr{\"{o}}der}{2013}]%
        {GorinSchroder13}
\bibfield{author}{\bibinfo{person}{Daniel Gor{\'{\i}}n} {and}
  \bibinfo{person}{Lutz Schr{\"{o}}der}.} \bibinfo{year}{2013}\natexlab{}.
\newblock \showarticletitle{Simulations and Bisimulations for Coalgebraic Modal
  Logics}. In \bibinfo{booktitle}{\emph{Algebra and Coalgebra in Computer
  Science, {CALCO} 2013}} \emph{(\bibinfo{series}{LNCS})},
  \bibfield{editor}{\bibinfo{person}{Reiko Heckel} {and}
  \bibinfo{person}{Stefan Milius}} (Eds.), Vol.~\bibinfo{volume}{8089}.
  \bibinfo{publisher}{Springer}, \bibinfo{pages}{253--266}.
\newblock
\urldef\tempurl%
\url{https://doi.org/10.1007/978-3-642-40206-7\_19}
\showDOI{\tempurl}


\bibitem[\protect\citeauthoryear{Hasuo, Jacobs, and Sokolova}{Hasuo
  et~al\mbox{.}}{2007}]%
        {HJS07}
\bibfield{author}{\bibinfo{person}{Ichiro Hasuo}, \bibinfo{person}{Bart
  Jacobs}, {and} \bibinfo{person}{Ana Sokolova}.}
  \bibinfo{year}{2007}\natexlab{}.
\newblock \showarticletitle{Generic Trace Semantics via Coinduction}.
\newblock \bibinfo{journal}{\emph{Log.\ Meth.\ Comput.\ Sci.}}
  \bibinfo{volume}{3}, \bibinfo{number}{4} (\bibinfo{year}{2007}),
  \bibinfo{pages}{1--36}.
\newblock
\urldef\tempurl%
\url{https://doi.org/10.2168/LMCS-3(4:11)2007}
\showURL{%
\tempurl}


\bibitem[\protect\citeauthoryear{Hirschkoff}{Hirschkoff}{2001}]%
        {h:bisim-verif-journal}
\bibfield{author}{\bibinfo{person}{Daniel Hirschkoff}.}
  \bibinfo{year}{2001}\natexlab{}.
\newblock \showarticletitle{Bisimulation verification using the up to
  techniques}.
\newblock \bibinfo{journal}{\emph{Int. J. Softw. Tools Technol. Transf.}}
  \bibinfo{volume}{3}, \bibinfo{number}{3} (\bibinfo{year}{2001}),
  \bibinfo{pages}{271--285}.
\newblock
\urldef\tempurl%
\url{https://doi.org/10.1007/s100090100041}
\showDOI{\tempurl}


\bibitem[\protect\citeauthoryear{Huhn, Niebert, and Wehrheim}{Huhn
  et~al\mbox{.}}{1998}]%
        {hnw:por-bisimulation-checking}
\bibfield{author}{\bibinfo{person}{Michaela Huhn}, \bibinfo{person}{Peter
  Niebert}, {and} \bibinfo{person}{Heike Wehrheim}.}
  \bibinfo{year}{1998}\natexlab{}.
\newblock \showarticletitle{Partial Order Reductions for Bisimulation
  Checking}. In \bibinfo{booktitle}{\emph{Foundations of Software Technology
  and Theoretical Computer Science, FSTTCS 1998}}
  \emph{(\bibinfo{series}{LNCS})}, Vol.~\bibinfo{volume}{1530}.
  \bibinfo{publisher}{Springer}, \bibinfo{pages}{271--282}.
\newblock
\urldef\tempurl%
\url{https://doi.org/10.1007/978-3-540-49382-2\_26}
\showDOI{\tempurl}


\bibitem[\protect\citeauthoryear{Jacobs, Levy, and Rot}{Jacobs
  et~al\mbox{.}}{2018}]%
        {JLR18}
\bibfield{author}{\bibinfo{person}{Bart Jacobs}, \bibinfo{person}{Paul Levy},
  {and} \bibinfo{person}{Jurriaan Rot}.} \bibinfo{year}{2018}\natexlab{}.
\newblock \showarticletitle{Steps and Traces}. In
  \bibinfo{booktitle}{\emph{Coalgebraic Methods in Computer Science, {CMCS}
  2018}} \emph{(\bibinfo{series}{LNCS})},
  \bibfield{editor}{\bibinfo{person}{Corina C{\^{\i}}rstea}} (Ed.),
  Vol.~\bibinfo{volume}{11202}. \bibinfo{publisher}{Springer},
  \bibinfo{pages}{122--143}.
\newblock
\urldef\tempurl%
\url{https://doi.org/10.1007/978-3-030-00389-0\_8}
\showDOI{\tempurl}


\bibitem[\protect\citeauthoryear{Jacobs, Silva, and Sokolova}{Jacobs
  et~al\mbox{.}}{2015}]%
        {JSS15}
\bibfield{author}{\bibinfo{person}{Bart Jacobs}, \bibinfo{person}{Alexandra
  Silva}, {and} \bibinfo{person}{Ana Sokolova}.}
  \bibinfo{year}{2015}\natexlab{}.
\newblock \showarticletitle{Trace semantics via determinization}.
\newblock \bibinfo{journal}{\emph{J.\ Comput.\ Syst.\ Sci.}}
  \bibinfo{volume}{81} (\bibinfo{year}{2015}), \bibinfo{pages}{859--879}.
\newblock
\urldef\tempurl%
\url{https://doi.org/10.1016/j.jcss.2014.12.005}
\showDOI{\tempurl}


\bibitem[\protect\citeauthoryear{Jou and Smolka}{Jou and Smolka}{1990}]%
        {JouSmolka90}
\bibfield{author}{\bibinfo{person}{Chi{-}Chang Jou} {and}
  \bibinfo{person}{Scott Smolka}.} \bibinfo{year}{1990}\natexlab{}.
\newblock \showarticletitle{Equivalences, Congruences, and Complete
  Axiomatizations for Probabilistic Processes}. In
  \bibinfo{booktitle}{\emph{Theories of Concurrency: Unification and Extension,
  CONCUR 1990}} \emph{(\bibinfo{series}{LNCS})},
  \bibfield{editor}{\bibinfo{person}{Jos Baeten} {and} \bibinfo{person}{Jan
  Klop}} (Eds.), Vol.~\bibinfo{volume}{458}. \bibinfo{publisher}{Springer},
  \bibinfo{pages}{367--383}.
\newblock
\showISBNx{3-540-53048-7}
\urldef\tempurl%
\url{https://doi.org/10.1007/BFb0039045}
\showDOI{\tempurl}


\bibitem[\protect\citeauthoryear{Kapulkin, Kurz, and Velebil}{Kapulkin
  et~al\mbox{.}}{2012}]%
        {KKV12}
\bibfield{author}{\bibinfo{person}{Krysztof Kapulkin},
  \bibinfo{person}{Alexander Kurz}, {and} \bibinfo{person}{Ji\v{r}\'\i\
  Velebil}.} \bibinfo{year}{2012}\natexlab{}.
\newblock \showarticletitle{Expressiveness of Positive Coalgebraic Logic}. In
  \bibinfo{booktitle}{\emph{Advances in Modal Logic, AiML 2012}}.
  \bibinfo{publisher}{College Publications}, \bibinfo{pages}{368--385}.
\newblock
\urldef\tempurl%
\url{http://www.aiml.net/volumes/volume9/Kapulkin-Kurz-Velebil.pdf}
\showURL{%
\tempurl}


\bibitem[\protect\citeauthoryear{Katsumata}{Katsumata}{2014}]%
        {Katsumata14}
\bibfield{author}{\bibinfo{person}{Shin{-}ya Katsumata}.}
  \bibinfo{year}{2014}\natexlab{}.
\newblock \showarticletitle{Parametric effect monads and semantics of effect
  systems}. In \bibinfo{booktitle}{\emph{Principles of Programming Languages,
  {POPL} 2014}}, \bibfield{editor}{\bibinfo{person}{Suresh Jagannathan} {and}
  \bibinfo{person}{Peter Sewell}} (Eds.). \bibinfo{publisher}{{ACM}},
  \bibinfo{pages}{633--646}.
\newblock
\urldef\tempurl%
\url{https://doi.org/10.1145/2535838.2535846}
\showDOI{\tempurl}


\bibitem[\protect\citeauthoryear{Klin and Rot}{Klin and Rot}{2015}]%
        {KR15}
\bibfield{author}{\bibinfo{person}{Bartek Klin} {and} \bibinfo{person}{Jurriaan
  Rot}.} \bibinfo{year}{2015}\natexlab{}.
\newblock \showarticletitle{Coalgebraic Trace Semantics via Forgetful Logics}.
  In \bibinfo{booktitle}{\emph{Foundations of Software Science and Computation
  Structures, FoSSaCS 2015}} \emph{(\bibinfo{series}{LNCS})},
  \bibfield{editor}{\bibinfo{person}{Andrew Pitts}} (Ed.),
  Vol.~\bibinfo{volume}{9034}. \bibinfo{publisher}{Springer},
  \bibinfo{pages}{151--166}.
\newblock
\urldef\tempurl%
\url{https://doi.org/10.1007/978-3-662-46678-0}
\showDOI{\tempurl}


\bibitem[\protect\citeauthoryear{Komorida, Katsumata, Hu, Klin, and
  Hasuo}{Komorida et~al\mbox{.}}{2019}]%
        {kkhkh:codensity-games}
\bibfield{author}{\bibinfo{person}{Yuichi Komorida}, \bibinfo{person}{Shin{-}ya
  Katsumata}, \bibinfo{person}{Nick Hu}, \bibinfo{person}{Bartek Klin}, {and}
  \bibinfo{person}{Ichiro Hasuo}.} \bibinfo{year}{2019}\natexlab{}.
\newblock \showarticletitle{Codensity Games for Bisimilarity}. In
  \bibinfo{booktitle}{\emph{Logic in Computer Science, {LICS} 2019}}.
  \bibinfo{publisher}{{IEEE}}, \bibinfo{pages}{1--13}.
\newblock
\urldef\tempurl%
\url{https://doi.org/10.1109/LICS.2019.8785691}
\showDOI{\tempurl}


\bibitem[\protect\citeauthoryear{K{\"o}nig and Mika-Michalski}{K{\"o}nig and
  Mika-Michalski}{2018}]%
        {km:bisim-games-logics-metric}
\bibfield{author}{\bibinfo{person}{Barbara K{\"o}nig} {and}
  \bibinfo{person}{Christina Mika-Michalski}.} \bibinfo{year}{2018}\natexlab{}.
\newblock \showarticletitle{({Metric}) Bisimulation Games and Real-Valued Modal
  Logics for Coalgebras}. In \bibinfo{booktitle}{\emph{Concurrency Theory,
  CONCUR 2018}} \emph{(\bibinfo{series}{{LIPIcs}})},
  Vol.~\bibinfo{volume}{118}. \bibinfo{publisher}{Schloss Dagstuhl -- Leibniz
  Center for Informatics}, \bibinfo{pages}{37:1--37:17}.
\newblock
\urldef\tempurl%
\url{http://drops.dagstuhl.de/opus/frontdoor.php?source_opus=9575}
\showURL{%
\tempurl}


\bibitem[\protect\citeauthoryear{Mardare, Panangaden, and Plotkin}{Mardare
  et~al\mbox{.}}{2016}]%
        {MPP16}
\bibfield{author}{\bibinfo{person}{Radu Mardare}, \bibinfo{person}{Prakash
  Panangaden}, {and} \bibinfo{person}{Gordon Plotkin}.}
  \bibinfo{year}{2016}\natexlab{}.
\newblock \showarticletitle{Quantitative Algebraic Reasoning}. In
  \bibinfo{booktitle}{\emph{Logic in Computer Science, {LICS} 2016}},
  \bibfield{editor}{\bibinfo{person}{Martin Grohe}, \bibinfo{person}{Eric
  Koskinen}, {and} \bibinfo{person}{Natarajan Shankar}} (Eds.).
  \bibinfo{publisher}{{ACM}}, \bibinfo{pages}{700--709}.
\newblock
\urldef\tempurl%
\url{https://doi.org/10.1145/2933575.2934518}
\showDOI{\tempurl}


\bibitem[\protect\citeauthoryear{Melli{\`{e}}s}{Melli{\`{e}}s}{2017}]%
        {Mellies17}
\bibfield{author}{\bibinfo{person}{Paul{-}Andr{\'{e}} Melli{\`{e}}s}.}
  \bibinfo{year}{2017}\natexlab{}.
\newblock \showarticletitle{The parametric continuation monad}.
\newblock \bibinfo{journal}{\emph{Math.\ Struct.\ Comput.\ Sci.}}
  \bibinfo{volume}{27}, \bibinfo{number}{5} (\bibinfo{year}{2017}),
  \bibinfo{pages}{651--680}.
\newblock
\urldef\tempurl%
\url{https://doi.org/10.1017/S0960129515000328}
\showDOI{\tempurl}


\bibitem[\protect\citeauthoryear{Milius, Pattinson, and Schr{\"{o}}der}{Milius
  et~al\mbox{.}}{2015}]%
        {MPS15}
\bibfield{author}{\bibinfo{person}{Stefan Milius}, \bibinfo{person}{Dirk
  Pattinson}, {and} \bibinfo{person}{Lutz Schr{\"{o}}der}.}
  \bibinfo{year}{2015}\natexlab{}.
\newblock \showarticletitle{Generic Trace Semantics and Graded Monads}. In
  \bibinfo{booktitle}{\emph{Algebra and Coalgebra in Computer Science, {CALCO}
  2015}} \emph{(\bibinfo{series}{LIPIcs})},
  \bibfield{editor}{\bibinfo{person}{Lawrence~S. Moss} {and}
  \bibinfo{person}{Pawel Sobocinski}} (Eds.), Vol.~\bibinfo{volume}{35}.
  \bibinfo{publisher}{Schloss Dagstuhl - Leibniz-Zentrum f{\"{u}}r Informatik},
  \bibinfo{pages}{253--269}.
\newblock
\urldef\tempurl%
\url{https://doi.org/10.4230/LIPIcs.CALCO.2015.253}
\showDOI{\tempurl}


\bibitem[\protect\citeauthoryear{{\'{O}}~Conghaile and Dawar}{{\'{O}}~Conghaile
  and Dawar}{2021}]%
        {CD21}
\bibfield{author}{\bibinfo{person}{Adam {\'{O}}~Conghaile} {and}
  \bibinfo{person}{Anuj Dawar}.} \bibinfo{year}{2021}\natexlab{}.
\newblock \showarticletitle{Game Comonads {\&} Generalised Quantifiers}. In
  \bibinfo{booktitle}{\emph{Computer Science Logic, {CSL} 2021}}
  \emph{(\bibinfo{series}{LIPIcs})}, Vol.~\bibinfo{volume}{183}.
  \bibinfo{publisher}{Schloss Dagstuhl - Leibniz-Zentrum f{\"{u}}r Informatik},
  \bibinfo{pages}{16:1--16:17}.
\newblock
\urldef\tempurl%
\url{https://doi.org/10.4230/LIPIcs.CSL.2021.16}
\showDOI{\tempurl}


\bibitem[\protect\citeauthoryear{Rutten}{Rutten}{2000}]%
        {Rutten00}
\bibfield{author}{\bibinfo{person}{Jan Rutten}.}
  \bibinfo{year}{2000}\natexlab{}.
\newblock \showarticletitle{Universal coalgebra: a theory of systems}.
\newblock \bibinfo{journal}{\emph{Theoret.\ Comput.\ Sci.}}
  \bibinfo{volume}{249}, \bibinfo{number}{1} (\bibinfo{year}{2000}),
  \bibinfo{pages}{3--80}.
\newblock
\urldef\tempurl%
\url{https://doi.org/10.1016/S0304-3975(00)00056-6}
\showDOI{\tempurl}


\bibitem[\protect\citeauthoryear{Silva, Bonchi, Bonsangue, and Rutten}{Silva
  et~al\mbox{.}}{2013}]%
        {BBSR13}
\bibfield{author}{\bibinfo{person}{Alexandra Silva}, \bibinfo{person}{Filippo
  Bonchi}, \bibinfo{person}{Marcello~M. Bonsangue}, {and} \bibinfo{person}{Jan
  J. M.~M. Rutten}.} \bibinfo{year}{2013}\natexlab{}.
\newblock \showarticletitle{Generalizing determinization from automata to
  coalgebras}.
\newblock \bibinfo{journal}{\emph{Log.\ Methods\ Comput.\ Sci.}}
  \bibinfo{volume}{9}, \bibinfo{number}{1} (\bibinfo{year}{2013}),
  \bibinfo{pages}{9:1--27}.
\newblock
\urldef\tempurl%
\url{https://doi.org/10.2168/LMCS-9(1:9)2013}
\showDOI{\tempurl}


\bibitem[\protect\citeauthoryear{Smirnov}{Smirnov}{2008}]%
        {Smirnov08}
\bibfield{author}{\bibinfo{person}{A. Smirnov}.}
  \bibinfo{year}{2008}\natexlab{}.
\newblock \showarticletitle{Graded monads and rings of polynomials}.
\newblock \bibinfo{journal}{\emph{J.\ Math.\ Sci.}} \bibinfo{volume}{151},
  \bibinfo{number}{8} (\bibinfo{year}{2008}), \bibinfo{pages}{3032--3051}.
\newblock
\urldef\tempurl%
\url{https://doi.org/10.1007/s10958-008-9013-7 51}
\showURL{%
\tempurl}


\bibitem[\protect\citeauthoryear{Stirling}{Stirling}{1999}]%
        {Stirling99}
\bibfield{author}{\bibinfo{person}{Colin Stirling}.}
  \bibinfo{year}{1999}\natexlab{}.
\newblock \showarticletitle{Bisimulation, Modal Logic and Model Checking
  Games}.
\newblock \bibinfo{journal}{\emph{Log. J. {IGPL}}} \bibinfo{volume}{7},
  \bibinfo{number}{1} (\bibinfo{year}{1999}), \bibinfo{pages}{103--124}.
\newblock
\urldef\tempurl%
\url{https://doi.org/10.1093/jigpal/7.1.103}
\showDOI{\tempurl}


\bibitem[\protect\citeauthoryear{Wild and Schr{\"{o}}der}{Wild and
  Schr{\"{o}}der}{2020}]%
        {WildSchroder20}
\bibfield{author}{\bibinfo{person}{Paul Wild} {and} \bibinfo{person}{Lutz
  Schr{\"{o}}der}.} \bibinfo{year}{2020}\natexlab{}.
\newblock \showarticletitle{Characteristic Logics for Behavioural Metrics via
  Fuzzy Lax Extensions}. In \bibinfo{booktitle}{\emph{Concurrency Theory,
  {CONCUR} 2020}} \emph{(\bibinfo{series}{LIPIcs})},
  \bibfield{editor}{\bibinfo{person}{Igor Konnov} {and} \bibinfo{person}{Laura
  Kov{\'{a}}cs}} (Eds.), Vol.~\bibinfo{volume}{171}.
  \bibinfo{publisher}{Schloss Dagstuhl -- Leibniz-Zentrum f{\"{u}}r
  Informatik}, \bibinfo{pages}{27:1--27:23}.
\newblock
\urldef\tempurl%
\url{https://doi.org/10.4230/LIPIcs.CONCUR.2020.27}
\showDOI{\tempurl}
\showeprint[arXiv]{cs.LO/2007.01033}


\bibitem[\protect\citeauthoryear{Worrell}{Worrell}{2005}]%
        {Worrell05}
\bibfield{author}{\bibinfo{person}{James Worrell}.}
  \bibinfo{year}{2005}\natexlab{}.
\newblock \showarticletitle{On the final sequence of a finitary set functor}.
\newblock \bibinfo{journal}{\emph{Theoret.\ Comput.\ Sci.}}
  \bibinfo{volume}{338}, \bibinfo{number}{1-3} (\bibinfo{year}{2005}),
  \bibinfo{pages}{184--199}.
\newblock
\urldef\tempurl%
\url{https://doi.org/10.1016/j.tcs.2004.12.009}
\showDOI{\tempurl}


\end{thebibliography}


\clearpage
\appendix
\section{Omitted Proofs and Details}

\subsection{Proofs for \cref{S:games}}

\begin{lem}\label{L:bar-kleisli}
Let~$\M$ be a depth-1 graded monad and
let~$X,Y$ be sets. Then for every function%
~$f\colon X\to M_kY$ we have\lsnote{@Chase: This should be a fact about general $f\colon X\to M_kY$ CF: Done.}
\[
\barM (f^*_0) = f^*_1.
\]
\end{lem}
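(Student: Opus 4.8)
The plan is to unwind the definitions so that the claim reduces to a single instance of the associative law, using the explicit descriptions of $\barM$ on free and shifted-free $M_0$-algebras provided by \cref{E:canonical} and the fact that $\mu^{1,0}$ is an epi-transformation (\cref{L:depth-1}).

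First I would check that the two sides are even parallel morphisms. Writing $f^*_0=\mu^{0,k}_Y\cdot M_0f$, a short computation (naturality of $\mu^{0,0}$ plus one instance of the associative law) shows that $f^*_0$ is an $M_0$-algebra homomorphism $(M_0X,\mu^{0,0}_X)\to(M_kY,\mu^{0,k}_Y)$, so $\barM(f^*_0)$ is defined, and by \eqref{eq:barM-can} it has domain $(M_1X,\mu^{0,1}_X)$ and codomain $(M_{k+1}Y,\mu^{0,k+1}_Y)$; the same computation with $\mu^{1,\cdot}$ in place of $\mu^{0,\cdot}$ shows that $f^*_1=\mu^{1,k}_Y\cdot M_1f$ is an $M_0$-algebra morphism with the same domain and codomain. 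Next I would pin down the relevant coequalizing maps: by \cref{E:canonical}, the canonical $M_1$-algebra $E(M_kY,\mu^{0,k}_Y)$ is $(M_kY,M_{k+1}Y,\mu^{1,k}_Y)$, and for $k=0$ this reads $E(M_0X,\mu^{0,0}_X)=(M_0X,M_1X,\mu^{1,0}_X)$. Since $\barM=(-)_1\circ E$ and $Ef^*_0$ is the $M_1$-homomorphism with $0$-part $f^*_0$, its defining compatibility with the main structure maps says precisely
\[
  \barM(f^*_0)\cdot\mu^{1,0}_X \;=\; \mu^{1,k}_Y\cdot M_1(f^*_0).
\]
Because $\mu^{1,0}_X$ is surjective (a component of the epi-transformation $\mu^{1,0}$), it then suffices to prove that $f^*_1$ satisfies the very same equation, i.e. $f^*_1\cdot\mu^{1,0}_X=\mu^{1,k}_Y\cdot M_1(f^*_0)$.

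To discharge this I would expand both sides to the same canonical form. Expanding $f^*_1=\mu^{1,k}_Y\cdot M_1f$ and applying naturality of $\mu^{1,0}\colon M_1M_0\to M_1$ at $f\colon X\to M_kY$ rewrites the left-hand side as $\mu^{1,k}_Y\cdot\mu^{1,0}_{M_kY}\cdot M_1M_0f$, while expanding $f^*_0=\mu^{0,k}_Y\cdot M_0f$ rewrites the right-hand side as $\mu^{1,k}_Y\cdot M_1\mu^{0,k}_Y\cdot M_1M_0f$. These coincide by the instance of the associative law \eqref{diagram:associativelaw} with the three grades taken as $1$, $0$, $k$ and evaluated at $Y$, namely $\mu^{1,k}_Y\cdot\mu^{1,0}_{M_kY}=\mu^{1,k}_Y\cdot M_1\mu^{0,k}_Y$. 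This finishes the argument. I expect the only delicate point to be bookkeeping rather than mathematics: correctly identifying $E$ on the \emph{non}-free algebra $(M_kY,\mu^{0,k}_Y)$ through \cref{E:canonical} (so that the coequalizing map there is $\mu^{1,k}_Y$, not some other component), and confirming that the map we cancel, $\mu^{1,0}_X$, is indeed the coequalizer map exhibiting $\barM FX$; once the indices are lined up, everything is naturality plus a single associativity square.
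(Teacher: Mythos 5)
Your proof is correct and follows essentially the same route as the paper's: both arguments reduce the claim to the single identity $f^*_1\cdot\mu^{1,0}_X=\mu^{1,k}_Y\cdot M_1(f^*_0)$ and then conclude via the coequalizer presenting $\barM$ on the free algebra --- the paper by invoking uniqueness of the extension of $f^*_0$ along the canonical $M_1$-algebra $(M_0X,M_1X,\mu^{1,0}_X)$, you by cancelling the epimorphism $\mu^{1,0}_X$, which is the same fact. The only cosmetic difference is that the paper verifies the key identity using the graded Kleisli-star law~\eqref{item:star-1}, whereas you unfold it into naturality of $\mu^{1,0}$ plus one instance of the associative law, which is precisely how that star law is proved.
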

\begin{proof}
    It follows by equational reasoning using the associative laws
    of $\M$ and naturality of $\mu^{0,n}$ that $f^*_n$
    is a homomorphism of $M_0$-algebras from~$(M_nX, \mu^{0,n})$
    to $(M_{n+m}X, \mu^{0, (n+m)})$. For the homomorphism%
    ~$f^*_0\colon M_0X\to M_kX$ there exists, by canonicity
    of~($M_0X, M_1X$) (see \cref{E:canonical}), a unique
    homomorphism~$h\colon M_1X\to M_{k+1}X$ making
    $(f^*_0, h)$ a homomorphism of $M_1$-algebras.
    Thus, it is enough to show that~$(f^*_0, f^*_1)$ is an
    $M_1$-algebra homomorphism.  Indeed, it will then follow
    that~$f^*_1$ is the $1$-part of the free extension of~$f^*_0$
    to an $M_0$-algebra homomorphism, i.e.~$\barM(f^*_0) = f^*_1$,
    as required.
    We conclude the proof by showing that
    $(f^*_0, f^*_1)$ satisfies the remaining interchange law stating that
    \[
      \mu^{1,k}_{M_{1+k}1}\cdot M_1(f^*_0) = f^*_1\cdot \mu^{1, 0}_X
    \]
    making it an $M_1$-algebra homomorphism:
    \begin{align*}
      \mu^{1,k}_{M_{1+k}1} \cdot M_1(f^*_0)
      &= (f_0^*)^*_1 & \text{def.~of $(-)^*_1$} \\
      &= (f_0^* \cdot \id_{M_0 X})^*_1 \\
      &= f_1^* \cdot (\id_{M_0 X})^*_1 & \text{by~\eqref{item:star-1}} \\
      &= f^*_1 \cdot \mu^{1,0}_X & \text{def.~of $(-)^*_1$}.
    \end{align*}
\end{proof}

\begin{expl}
Let~$(\alpha, \M)$ be a depth-1 graded semantics
on~$G$-coalgebras and let~$\gamma$ be a $G$-coalgebras.
We have
\begin{equation}
\barM(\gamma^{(k)})^*_0 = (\gamma^{(k)})^*_1		\tag{$k\in\omega$}
\end{equation}
by instantiation of~\cref{L:bar-kleisli} to the map
$\gamma^{(k)}\colon X\to M_kX$.
\end{expl}

\subsubsection*{Proof of~\cref{T:sound-complete}}
\begin{rem}\label{R:factorization}
  Whenever $c\colon X \epito C$ is the coequalizer of the kernel pair
  $p, q\colon Z\to X$ of a morphism $f\colon X\to Y$, then there is a
  unique morphism $m\colon C\to Y$ such that $m\cdot c = f$. Recall
  (e.g.~from the proof of Borceaux \cite[Thm.~2.1.3]{Borceux94}) that
  if $\CatC$ is a regular category, then~$m$ is monic; in particular,
  this holds when~$\CatC$ is the category of algebras for a monad
  on~$\Set$.
\end{rem}
\begin{proof}[Proof (\cref{T:sound-complete})]
  Let us denote the \emph{winning region} for D in $\CalG_n(\gamma)$ by
  $\Win_n(\text{D})$ (leaving the coalgebra $(X, \gamma)$ implicit in
  the notation).  We are going to show that $(s, t)\in\Win_n(D)$ if
  and only if $(\gamma^{(n)})^*_0$ merges $s$ and $t$ by induction on
  $n\in\omega$. For $n=0$, we must show that $M_0!(s)=M_0!(t)$ if and
  only if $(\gamma^{(0)})^*_0$ merges $s$ and $t$. In fact, we show
  that $M_0! = (\gamma^{(0)})^*_0$. First note that the following
  diagram commutes due to the unit law~\eqref{diagram:unitlaw} of $\M$
  (instantiated to $n=0$) and naturality of~$\mu^{0,0}$:
\[
\begin{tikzcd}[column sep = 35]
  M_0X \ar{rd}[swap,inner sep=0]{\id_{M_0X}} \arrow[r, "M_0\eta_X"]
  &
  M_0M_0X
  \arrow[r, "M_0M_0!"]
  \arrow[d, "{\mu^{0,0}_X}"']
  &
  M_0M_01
  \arrow[d, "{\mu^{0,0}_1}"]
  \\
  &
  M_0X \arrow[r, "M_0!"]
  &
  M_01
\end{tikzcd}
\]
Then unfold the definition of $(\gamma^{(0)})^*_0$ and compute
\begin{align*}
  (\gamma^{(0)})^*_0
  &= \mu^{0,0}_1\cdot M_0\gamma^{(0)}  \\
  &= \mu^{0,0}_1\cdot M_0(M_0!\cdot \eta_X)  \\		
  &= \mu^{0,0}_1\cdot M_0M_0!\cdot M_0\eta_X \\
  &= M_0!,
\end{align*}
where the last step uses that the outside of the previous diagram
commutes.

Now, inductively assume that $(s,t)\in\Win_{k}(D)$
if and only if $(\gamma^{(k)})^*_0$ merges $s$ and
$t$. We proceed to show that the equivalence carries
through to $n=k+1$:

\medskip
\noindent ($\Leftarrow$)~Let $(s,t)\in M_0X\times M_0X$
such that $(\gamma^{(k+1)})^*$ merges $s$ and~$t$ be given; we must
show that D has a winning strategy in the $(k+1)$-round
$\CalS$-behavioural equivalence game at $(s, t)$. Define
\[
Z:= \Ker((\gamma^{(k)})^*_0).
\]
That is, $Z$ is the pullback, computed in $\Set$, as
in the diagram
\[
\begin{tikzcd}[column sep = 35]
  Z \arrow[d, "\ell"', dashed] \arrow[r, "r", dashed]
  &
  M_0X \arrow[d, "(\gamma^{(k)})_0^*"]
  \\
  M_0X \arrow[r, "(\gamma^{(k)})_0^*"]
  &
  M_k1
\end{tikzcd}
\]
It is clear that (the relation induced by) $Z$ is a winning move for D
at $(s, t)$ provided that it is admissible.  Indeed, for every
$z\in Z$ we have that $(\gamma^{(k)})^*_0$ merges $\ell(z)$ and $r(z)$
by construction; so, by induction, $Z\subseteq\Win_{k}(D)$, as
claimed.  Thus, to conclude the proof of \emph{`if'}, it suffices to
show that $Z$ is admissible at $(s, t)$, i.e.~we want to show that the
homomorphism $\barZ\colon M_0 X \to \barM C_Z$ from~\eqref{eq:barZ}
merges $s$ and $t$.

Let $m\colon C_Z\to M_k 1$ be the unique (mono-)morphism such that
$m\cdot c_Z = (\gamma^{(k)})^*_0$ according
to~\cref{R:factorization}, and consider the following diagram of
$M_0$-algebra homomorphisms:
\[
\begin{tikzcd}[column sep = 40, row sep = 30]
  M_0X \arrow[rd, "(\gamma^{(k+1)})^*_0"'] \arrow[r,
  "(\alpha\cdot\gamma)^*_0"]
  \ar[shiftarr = {yshift=20}]{rr}{\overbar Z}
  &
  {M_1X=\barM M_0X}
  \ar{d}{\overbar M_1 (\gamma^{(k)})^*_0} \arrow[r, "\overbar M_1 c_Z"]
  &
  \barM C_Z \arrow[ld, "\overbar M_1 m"]
  \\
  &
  M_{k+1} 1 = \barM M_k1
\end{tikzcd}
\]
\takeout{
\[
\begin{tikzcd}[column sep = 35]
  M_0X
  \arrow[r, "(\alpha\cdot\gamma)^*_0"]
  \ar{rd}[swap]{(\gamma^{k+1})^*_0}
  \ar[shiftarr = {yshift=20}]{rr}{\barZ}
  &
  M_1X = \barM M_0X
  \arrow[r, "\barM c_Z"]
  \arrow[d, "\barM(\gamma^{(k)})^*_0"']
  \\
  \barM M_k1
  & \barM C_Z
  \arrow[rd, "\barM m"']
\end{tikzcd}
\]
}
The above diagram commutes: the upper part is the definition of
$\barZ$, the right hand triangle commutes since
$m\cdot c_Z = (\gamma^{(k)})^*_0$, and the left-hand triangle
commutes by the following:\cfnote{this needs to be included since it no longer follows from LemA.1}
\begin{align*}
\barM(\gamma^{(k)})^*_0\cdot (\alpha\cdot\gamma)^*_0
&= (\gamma^{(k)})^*_1\cdot (\alpha\cdot\gamma)^*_0			&\text{by~\cref{L:bar-kleisli}} \\
&= ((\gamma^{(k)})^*_1\cdot(\alpha\cdot\gamma))^*_0			&\text{by~\cref{item:star-1}} \\
&= (\gamma^{(k+1)})^*_0									&\text{~def. of~$\gamma^{(k+1)}.$}
\end{align*}
Thus
\[
  \barM m\cdot\barZ(s)
  =
  (\gamma^{(k+1)})^*_0(s)
  =
  (\gamma^{(k+1)})^*_0(t)
  =
  \barM m\cdot\barZ(t).
\]
Since $m$ is a monomorphism and $\barM$ preserves
monomorphisms, we conclude that $\barZ(s) = \barZ(t)$ holds.
Hence $Z$ is admissible at $(s, t)$, as desired.

\medskip
\noindent ($\Rightarrow$)~Suppose that $(s, t)\in\Win_{k+1}(\bbD)$ so
that there exists $Z\subseteq\Win_k(\bbD)$ which is admissible at
$(s, t)$; we proceed to show that $(\gamma^{(k+1)})^*_0$ merges~$s$
and~$t$, as required. By admissibility of~$Z$, we know that
$\barZ\colon M_0X \to \barM C_Z$ from~\eqref{eq:barZ}
merges~$s$ and~$t$.

Next we see that $(\gamma^{(k)})^*_0\colon M_0X\to M_k1$ coequalizes
the pair $\ell^*_0, r^*_0\colon M_Z\to M_0X$. Indeed, since
$Z\subseteq\Win_k(D)$, we have by induction that
\( (\gamma^{(k)})^*_0(\ell^*_0(z)) = (\gamma^{(k)})^*_0(r^*_0(z)) \)
for all $z\in Z$. Thus, by the universal property of the coequalizer
$c_Z\colon M_0X \to C_Z$ we obtain a unique homomorphism
$h\colon C_Z\to M_k1$ such that $h\cdot c_Z = (\gamma^{(k)})^*_0$.
Therefore the right-hand triangle of the following diagram of
$M_0$-algebra homomorphisms commutes:
\[
\begin{tikzcd}[column sep = 40, row sep = 30]
  M_0X \arrow[rd, "(\gamma^{(k+1)})^*_0"'] \arrow[r,
  "(\alpha\cdot\gamma)^*_0"]
  \ar[shiftarr = {yshift=20}]{rr}{\overbar Z}
  & M_1X = \barM M_0 X \arrow[d, "\overbar M_1 (\gamma^{(k)})^*_0"] \arrow[r, "\overbar M_1c_z"]
  & \barM C_Z \arrow[ld, "\overbar M_1 h", dashed] \\
  &
  M_{k+1}1 = \barM M_k 1
\end{tikzcd}
\]
Note that the upper part commutes by the definition~\eqref{eq:barZ} of
$\barZ$. The left-hand triangle commutes
by~\cref{L:bar-kleisli} as before. We conclude that
$(\gamma^{(k+1)})^*_0$ merges $s$ and $t$, as desired, since so does
$\barZ$.
\end{proof}

\subsection{Proofs for \cref{sec:infinte-depth}}
\subsubsection*{Details for \cref{R:final-coalg}}

We will prove that for every accessible graded monad $\M$ on a locally
presentable category, the category $\Alg_n(\M)$, $n\leq \omega$, is locally
presentable. Furthermore, we show that the functor $\barM\colon
\Alg_0(\M) \to \Alg_0(\M)$ is then accessible and therefore has a
final coalgebra.

Our proof makes use of several facts from the theory of accessible
and locally presentable categories which we tersely recall from
Ad\'amek and Rosick\'y's book~\cite{AR94}.

As before, we fix a regular cardinal $\kappa$. A diagram
\mbox{$D\colon \CatD \to \CatC$} is \emph{$\kappa$-filtered} if its diagram
scheme $\CatD$ is a \emph{$\kappa$-filtered} category, that is a
category in which every subcategory with less than $\kappa$ objects
and morphisms has a cocone. (In particular, $\CatD$ is then non-empty
since a cocone for the empty subcategory is some object of $\CatD$.) A
functor $F\colon \CatC \to \CatD$ is \emph{$\kappa$-accessible} if it
preserves $\kappa$-filtered colimits, and a monad is
$\kappa$-accessible if so is its underlying functor. An object $X$ of
a category $\CatC$ is \emph{$\kappa$-presentable} if its hom-functor
$\CatC(X,-)\colon \CatC \to \Set$ is $\kappa$-accessible.
\begin{rem}\label{R:adj}
  We shall need the following facts about $\kappa$-presentable
  objects and left adjoints.
  \begin{enumerate}
  \item\label{R:adj:1} A left adjoint $L\colon \CatC \to \CatD$ with a $\kappa$-accessible right adjoint
    $R$ preserves $\kappa$-presentable objects. Indeed, for every
    $\kappa$-presentable object $C$ in $\CatC$ and every
    $\kappa$-filtered diagram $D$ in~$\CatD$ we have the following chain
    of natural isomorphisms
    \begin{align*}
      \CatD(LC, \colim D) &\cong \CatC(C, R(\colim D)) \\
      &\cong \CatC(C, \colim RD) \\
      &\cong \colim \CatC(C, RD) \\
      &\cong \colim \CatD(LC, D).
    \end{align*}

  \item\label{R:adj:2} Recall (e.g.~\cite[Def.~7.74]{AHS90}) that an
    epimorphism $e$ is \emph{extremal} if whenever $e = m \cdot f$ for
    a monomorphism~$m$, then~$m$ is an isomorphism. Let
    $L \dashv R\colon \CatD \to \CatC$ be an adjunction where the
    right adjoint $R$ is faithful and reflects isomorphisms. Then the
    adjoint transpose of every extremal epimorphism $e\colon C \epito RD$
    is an extremal epimorphism, too.

    Indeed, to see that the adjoint transpose $\bar e\colon LC \to D$
    is epic take a pair of morphisms $f, g\colon D \to D'$ such that
    $f \cdot \bar e = g \cdot \bar e$. Then $Uf \cdot e = Ug \cdot e$
    by adjoint transposition, whence $Uf = Ug$ since $e$ is epic, and
    therefore $f= g$ since $U$ is faithful.

    Now suppose that $\bar e = m \cdot f$ for some monomorphism~$m$ in
    $\CatD$. Then $e = Um \cdot \bar f$ in $\CatC$, where $Um$ is
    monic since right adjoints preserve monos. Thus, $Um$ is an
    isomorphism in $\CatC$, whence $m$ is an isomorphism in $\CatD$
    since $U$ reflects isomorphisms.
  \end{enumerate}
\end{rem}

A category $\CatC$ is \emph{$\kappa$-accessible} if it has
$\kappa$-filtered colimits and a set of $\kappa$-presentable objects
such that every object of~$\CatC$ is a $\kappa$-filtered colimit of
object from that set. The category~$\CatC$ is locally
$\kappa$-presentable if it is locally $\kappa$-accessible and
cocomplete. We say that a category is \emph{accessible} (\emph{locally presentable})
if it is $\kappa$-accessible (locally $\kappa$-presentable) for some
regular cardinal $\kappa$.

\begin{rem}\label{R:acc}
  We now collect several facts used in the proof of \cref{T:acc} below.
  \begin{enumerate}
  \item\label{R:acc:1} A category is locally $\kappa$-presentable iff
    is is $\kappa$-accessible and complete~\cite[Cor.~2.47]{AR94}.


  \item\label{R:acc:1a} A \emph{strong generator} in a cocomplete
    category $\CatC$ is a set~$\CatG$ of objects such that every
    object is an extremal quotient of a coproduct of object from
    $\CatG$. A cocomplete category is locally $\kappa$-presentable iff
    it has a strong generator formed by~$\kappa$-presentable
    objects~\cite[Thm.~1.20]{AR94}.

  \item\label{R:acc:2} For a $\kappa$-accessible monad $T$ on a
    locally $\kappa$-presentable category $\CatC$, the category
    $\Alg(T)$ of all Eilenberg-Moore algebras is locally
    $\kappa$-presentable~\cite[Thm.~\& Rem.~2.78]{AR94}.

  \item\label{R:acc:2a} For every locally $\kappa$-presentable
    category $\CatC$ the product category $\CatC \times \CatC$ is
    locally $\kappa$-presentable. Indeed, first $\CatC \times \CatC$
    is $\kappa$-accessible~\cite[Prop.~2.67]{AR94}. In addition, the
    product is (co)complete since so is $\CatC$ and (co)limits are
    formed componentwise in the product category. Moreover, the product
    projection functors are clearly $\kappa$-accessible.

  \item\label{R:acc:3} Let $\CatC$ and $\CatK$ be locally
    $\kappa$-presentable, and let $F, G\colon \CatK \to \CatC$ be
    $\kappa$-accessible. The \emph{inserter category} of $F,G$ has
    objects $(K,f)$ where $K$ is an object of $\CatK$ and
    $f\colon FK \to GK$ is a morphism of $\CatC$; morphisms
    $h\colon (K,f) \to (K',f')$ are morphisms $h\colon K \to K'$ of
    $\CatK$ such that $f' \cdot Fk = Gk \cdot f$. The inserter
    category is $\lambda$-accessible for some regular cardinal
    $\lambda \geq \kappa$~\cite[Thm.~2.72]{AR94}. (Note that, in
    general, $\lambda \neq \kappa$.)%

  \item\label{R:acc:4} Let $\CatC$ and $\CatK$ be $\kappa$-accessible, and
    let $F, G\colon \CatK \to \CatC$ be $\kappa$-accessible functors.  The
    \emph{equifier} of a pair \mbox{$\alpha,\beta\colon F \to G$} of natural
    transformations is the full subcategory of $\K$ given by all
    objects $K$ such that $\alpha_K = \beta_K$. This category is
    $\kappa$-accessible and its inclusion functor into $\K$ is
    $\kappa$-accessible. The same holds, more generally,
    for any set of pairs $F^i, G^i\colon \CatK \to \CatC$ ($i \in I$)
    and any set of pairs
    $\alpha^i,\beta^i\colon F^i \to G^i$~\cite[Lem.~2.76]{AR94}.
  \end{enumerate}
\end{rem}

We say that a graded monad $\M$ on $\CatC$ is
\emph{$\kappa$-accessible} if every of its endofunctors $M_n$
($n \in \omega$) is $\kappa$-accessible.

\begin{thm}\label{T:acc}
  Let $\M$ be a $\kappa$-accessible graded monad on a locally
  $\kappa$-presentable category $\CatC$. Then the category
  $\Alg_n(\M)$ is locally $\kappa$-presentable for every
  $n \leq \omega$.
\end{thm}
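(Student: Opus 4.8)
The plan is to build $\Alg_n(\M)$, uniformly for all $n\le\omega$, out of $\CatC$ by the closure properties recalled in \cref{R:acc}: a small power, then a single inserter, then a single equifier. This yields accessibility immediately, and a separate argument with strong generators brings the accessibility rank back down to $\kappa$ itself, as the statement demands. Throughout, write $I=\{(j,m)\mid j+m\le n\}$ for the set of index pairs of structure maps (finite if $n<\omega$, countably infinite if $n=\omega$, but in any case a set), and let $\CatK$ be the power of $\CatC$ indexed by $\{m\in\omega\mid m\le n\}$. Small powers of a locally $\kappa$-presentable category are again locally $\kappa$-presentable (iterate \cref{R:acc}.\ref{R:acc:2a} for finite powers, and see~\cite{AR94} for the general case), so $\CatK$ and $\CatC^I$ are locally $\kappa$-presentable.

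An $M_n$-algebra is precisely an object $A$ of $\CatK$ (its carriers $(A_m)_{m}$) together with a single morphism $a=(a^{j,m})_{(j,m)\in I}\colon FA\to GA$ in $\CatC^I$, where the $(j,m)$-components of $F,G\colon\CatK\to\CatC^I$ are $A\mapsto M_jA_m$ and $A\mapsto A_{j+m}$. Both $F$ and $G$ are $\kappa$-accessible: componentwise they are composites of coordinate projections (which preserve all colimits) with the $\kappa$-accessible functors $M_j$, and a functor into a small power is $\kappa$-accessible once all of its components are, since $\kappa$-filtered colimits in a power are formed componentwise. Hence, by \cref{R:acc}.\ref{R:acc:3}, the inserter $\CatE$ of $F$ and $G$ is $\lambda$-accessible for some regular $\lambda\ge\kappa$, and unwinding the definitions shows that its objects and morphisms are exactly the objects and homomorphisms of \cref{D:gradedalgebra} \emph{before} imposing the equational laws.

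Next I would carve $\Alg_n(\M)$ out of $\CatE$ by one equifier. The unit laws $a^{0,m}\cdot\eta_{A_m}=\id_{A_m}$ (for $m\le n$) and the algebra laws~\eqref{Diagram:gradedalgebralaw} (one for each triple $(j,m,r)$ with $j+m+r\le n$) each equate a parallel pair of natural transformations between $\kappa$-accessible functors $\CatE\to\CatC$ --- functors obtained by composing the projection $\CatE\to\CatK$, a coordinate projection, and suitable $M_j$'s. Since this is a set of such pairs, \cref{R:acc}.\ref{R:acc:4} applies and shows that $\Alg_n(\M)$ --- the full subcategory of $\CatE$ cut out by all these equations --- is $\lambda$-accessible. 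Separately, the forgetful functor $U\colon\Alg_n(\M)\to\CatK$ \emph{creates} $\kappa$-filtered colimits --- such a colimit of carriers in $\CatK$ carries a unique $M_n$-algebra structure because the $M_j$ preserve it --- so $U$ is $\kappa$-accessible; and $U$ creates limits as well (take the limit of carriers in $\CatK$ and induce the structure maps through the limiting cones), so $\Alg_n(\M)$ is complete. By \cref{R:acc}.\ref{R:acc:1}, $\Alg_n(\M)$ is thus locally $\lambda$-presentable, and in particular cocomplete.

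It remains to replace $\lambda$ by $\kappa$ using \cref{R:acc}.\ref{R:acc:1a}. As $U$ is a limit-preserving, $\kappa$-accessible functor between locally presentable categories, it has a left adjoint $F_n$ (by the adjoint functor theorem for locally presentable categories~\cite{AR94}, or by a direct construction of free graded algebras). By \cref{R:adj}.\ref{R:adj:1}, $F_n$ preserves $\kappa$-presentable objects, so the free algebras over $\kappa$-presentable objects of $\CatK$ form a \emph{set} of $\kappa$-presentable objects of $\Alg_n(\M)$. They form a strong generator: for an arbitrary $M_n$-algebra $A$, \cref{R:acc}.\ref{R:acc:1a} applied to the locally $\kappa$-presentable category $\CatK$ yields an extremal epi $e\colon\coprod_iZ_i\twoheadrightarrow UA$ out of a coproduct of $\kappa$-presentables, and its adjoint transpose $\bar e\colon\coprod_iF_nZ_i=F_n\big(\coprod_iZ_i\big)\to A$ is again an extremal epi by \cref{R:adj}.\ref{R:adj:2}, since $U$ is faithful and reflects isomorphisms. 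Hence $\Alg_n(\M)$ is cocomplete with a strong generator of $\kappa$-presentable objects, so \cref{R:acc}.\ref{R:acc:1a} gives that it is locally $\kappa$-presentable. The one step that genuinely needs care is this final descent in rank: the inserter is bound to inflate $\kappa$ to some larger $\lambda$, and the only route back is to notice that $U$ nonetheless remains $\kappa$-accessible and then invoke the strong-generator criterion; a minor additional point is to check that all these constructions go through uniformly when $n=\omega$, where $I$ and $\CatK=\CatC^\omega$ are infinite but small.
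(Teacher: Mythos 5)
Your proof is correct and follows essentially the same route as the paper's: present $\Alg_n(\M)$ via an inserter and an equifier over a locally $\kappa$-presentable base, observe that the forgetful functor creates limits and $\kappa$-filtered colimits, obtain a left adjoint by the adjoint functor theorem, and descend from the inflated inserter rank $\lambda$ back to $\kappa$ by showing that free algebras on $\kappa$-presentable objects form a strong generator (via \cref{R:adj}). The only organizational differences are that the paper builds the inserter over $\Alg_0(\M)\times\Alg_0(\M)$ (so the depth-$0$ structure comes from the known result on Eilenberg--Moore categories rather than being encoded in the inserter and equifier), performs the rank descent on the inserter \emph{before} applying the equifier (exploiting that equifiers preserve the index of accessibility), treats $n=1$ explicitly while leaving general $n\le\omega$ as an exercise, and concludes with the Reflection Theorem -- none of which affects correctness.
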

\begin{rem}
  The proof is essentially a more involved variation on the proof of
  the result on categories $\Alg(T)$ of Eilenberg-Moore algebras
  mentioned in \cref{R:acc}.\ref{R:acc:2}. We provide the details
  for the convenience of the reader.
\end{rem}
\begin{proof}
  For $n = 0$ we are done since $\Alg_0(\M)$ is simply the
  Eilenberg-Moore category for the $\kappa$-accessible monad $M_0$. We
  now give an explicit proof for $n=1$; the general case is then an
  easy exercise.
  \begin{enumerate}
  \item Let $\CatK = \Alg_0(\M) \times \Alg_0(\M)$ be the product of
    the Eilenberg-Moore category for the monad $M_0$ with
    itself. Then~$\CatK$ is locally $\kappa$-presentable
    by~\cref{R:acc}.\ref{R:acc:2a}.

  \item Let $P_0, P_1\colon \CatK \to \CatC$ be the functors obtained
    by composing the product projections with the forgetful functor
    $U_0\colon \Alg_0(\M) \to \CatC$. By \cref{R:acc}.\ref{R:acc:3},
    the inserter category~$\CatI$ of the pair $M_1P_0, P_1$ of
    $\kappa$-accessible functors is $\lambda$-accessible for some
    regular cardinal $\lambda \geq \kappa$.

  \item Note that objects of the inserter category $\CatI$ are
    $5$-tuples $A = (A_0, A_1, (a^{i,j})_{i+j \leq 1})$, where
    $a^{i,j}\colon M_iA_j \to A_{i+j}$ is a morphism of $\CatC$
    (i.e.~the same data as objects of $\Alg_1(\M)$) such that
    $(A_0,a^{0,0})$ and $(A_1, a^{0,1})$ are Eilenberg-Moore algebras
    for the monad $M_0$ (the remaining axioms, homorphy and
    coequalization, of $M_1$-algebras need not hold). Let
    $V\colon \CatI \to \CatK$ be the functor forgetting the structure
    morphism~$a^{1,0}$.

  \item We prove that $V$ reflects isomorphisms. First note that
    morphisms in $\CatI$ from $(A_0, A_1, (a^{i,j})_{i+j \leq 1})$ to
    $(B_0, B_1, (b^{i,j})_{i+j \leq 1})$ are pairs $h_i\colon A_i \to
    B_i$, $i=0,1$, of $M_0$-algebra homomorphisms such that, in addition,
    we have $h_1 \cdot a^{1,0} = b^{1,0} \cdot M_1h_0$. Now given such
    a pair such that $V(h_0,h_1)$ is an isomorphism in $\CatK$, that
    means that $h_i$ is an isomorphism of $M_0$-algebras for $i = 0,1$, we need to show
    that the inverses $h_i^{-1}$ form a morphism in $\CatI$. To this
    end we make use of the following diagram:

    \[
      \begin{tikzcd}
        M_1B_0 \ar{r}{b^{1,0}} \ar{d}[swap]{M_1 h_0^{-1}}
        \ar[shiftarr = {xshift=-30}]{dd}[swap]{M_1\id}
        &
        B_1
        \ar{d}{h_1^{-1}}
        \ar[shiftarr = {xshift=20}]{dd}{\id}
        \\
        M_1A_0 \ar{r}{a^{1,0}} \ar{d}[swap]{M_1 h_0}
        &
        A_1
        \ar{d}{h_1}
        \\
        M_1B_0 \ar{r}{b^{1,0}}
        &
        B_1
      \end{tikzcd}
    \]
    Indeed, the outside, left- and right-hand parts and the lower
    square commute. Thus so does the desired upper square when
    postcomposed by the isomorphism $h_1$. Thus, this square
    commutes.

  \item We now prove that $V$ creates limits and $\kappa$-filtered
    colimits. Observe first that limits and $\kappa$-filtered colimits
    of $M_0$-algebras are formed at the level of their carrier objects
    since the forgetful functor $U_0\colon \Alg_0(\M) \to \CatC$
    creates limits and $\kappa$-filtered colimits (the latter since
    $M_0$ preserves them). Further note that all limits and colimits
    in $\CatK$ are formed componentwise.%
    \smnote{The following argument is just a variation of the standard
      argument that forgetful functors of categories algebras for a
      functor create all limits and all colimits that the functor preserves.}

    For limits, given a diagram of objects $A^j$ ($j\in J$) in
    $\CatI$, let~$A$ be its limit in $\CatK$ with the limit projections
    $p^j\colon A \to A^j$. Then we obtain a unique morphism
    $a^{1,0}\colon M_1 A_0 \to A_1$ such that every $p^j$ ($j \in J$)
    is a morphism in $\CatI$, i.e.~the following squares commute
    (throughout we put the index $j$ of the structure morphisms of
    $A^j$ in the subscript):
    \[
      \begin{tikzcd}
        M_1A_0
        \ar[dashed]{r}{a^{1,0}}
        \ar{d}[swap]{M_1 p^j_0}
        &
        A_1
        \ar{d}{p^j_1}
        \\
        M_1A^j_0
        \ar{r}{a^{1,0}_j}
        &
        A^j_1
      \end{tikzcd}
      \qquad
      \text{for every $j \in J$}.
    \]
    Indeed, the morphisms $a^{1,0}_j \cdot M_1p^j_0$ form a cone (in
    $\CatC$) on the diagram yielding the component $A_1$ of the
    colimit $A$: for every morphism $h = (h_0, h_1)\colon A^j \to A^k$
    in the given diagram scheme we have a commutative diagram
    \[
      \begin{tikzcd}[row sep = 3]
        &
        M_1A^j_0
        \ar{r}{a^{1,0}_j}
        \ar{dd}{M_1h_0}
        &
        A^j_1
        \ar{dd}{h_1}
        \\
        M_1A_0
        \ar{ru}[inner sep =0]{M_1p^j_0}
        \ar{rd}[swap, inner sep =0]{M_1p^k_0}
        \\
        &
        M_1A^k_0
        \ar{r}{a^{1,0}_k}
        &
        A^k_1
      \end{tikzcd}
    \]
    Thus, $A$ carries a unique structure of an $\CatI$-object such
    that the limit projections $p^j$ are morphisms of $\CatI$.

    We still need to prove that $A$ is the limit of the $A^j$ in
    $\CatI$. Given any cone $h^j\colon B \to A^j$ ($j \in J$) in
    $\CatI$, we know that there is a unique morphism $h\colon B \to A$
    of $\CatK$ such that $p^j \cdot h = h^j$ for every $j \in J$. It
    suffices to show that $h$ is a morphism in~$\CatI$. For this
    consider the following diagram:
    \[
      \begin{tikzcd}
        M_1B_0
        \ar{rrr}{b^{1,0}}
        \ar{ddd}[swap]{M_1h^j_0}
        &&&
        B_1
        \ar{ddd}{h^j_1}
        \\
        &
        M_1B_0
        \ar{r}{b^{1,0}}
        \ar{d}[swap]{M_1h_0}
        \ar{lu}[swap]{\id}
        &
        B_1
        \ar{d}{h_1}
        \ar{ru}{\id}
        \\
        &
        M_1A_0
        \ar{r}{a^{1,0}}
        \ar{ld}[swap,inner sep =0]{M_1p^j_0}
        &
        A_1
        \ar{rd}[inner sep =1]{p^j_1}
        \\
        M_1A^j_0
        \ar{rrr}{a^{1,0}_j}
        &&&
        A_1
      \end{tikzcd}
    \]
    The outside commutes since $h^j$ is a morphism of $\CatI$. The
    left- and right-hand parts commute by the definition of $h$, the
    lower part commutes by the definition of $a^{1,0}$ and the upper
    part trivially does. Thus, the middle square commutes when
    postcomposed by every limit projection $p^j_1$, whence this square
    commutes, as desired.

    For $\kappa$-filtered colimits, given a $\kappa$-filtered diagram
    of $M_1$-algebras $A^j$ ($j\in J$), let $A$ be its colimit in
    $\CatI$ with colimit injections $\inj^j\colon A^j \to A$. We use
    that $M_1$ preserves the colimit in the first component; that is,
    we have a $\kappa$-filtered colimit with the injections
    $M_1\inj^j_0\colon M_1A^j_0 \to M_1A_0$ ($j\in J$) in $\Alg_0(\M)$ (whence in
    $\CatC$). Then we obtain a unique algebra structure
    $a^{1,0}\colon M_1A_0 \to A_1$ such that every $\inj^j$
    ($j \in J$) is a morphism in $\CatI$, i.e.~the following squares
    commute:
    \[
      \begin{tikzcd}
        M_1A^j_0
        \ar{r}{a^{1,0}_j}
        \ar{d}[swap]{M_1\inj^j_0}
        &
        A^j_1
        \ar{d}{\inj^j_1}
        \\
        M_1A_0
        \ar[dashed]{r}{a^{1,0}}
        &
        A_1
      \end{tikzcd}
      \qquad
      \text{for every $j \in J$}.
    \]
    Indeed, the morphisms $\inj^j_1 \cdot a^{1,0}_j\colon M_1A^j_0 \to
    A_1$ form a cocone of the above diagram formed by the
    $M_0$-algebras $M_1A^j_0$: for every morphism $h = (h_0,h_1)\colon
    A^j \to A^k$ in the diagram scheme we have a commutative diagram
    \[
      \begin{tikzcd}[row sep = 3]
        M_1A^j_0
        \ar{r}{a^{1,0}_j}
        \ar{dd}[swap]{M_1h_0}
        &
        A^j_1
        \ar{rd}[inner sep = 0]{\inj^j_1}
        \ar{dd}[swap]{h_1}
        \\
        &&
        A_1
        \\
        M_1A^k_0
        \ar{r}{a^{1,0}_k}
        &
        A^k_1
        \ar{ru}[swap, inner sep =0]{\inj^k_1}
      \end{tikzcd}
    \]
    Thus, $A$ carries a unique structure of an $\CatI$-object such
    that the colimit injections $\inj^j$ are morphisms of $\CatI$.

    We still need to prove that $A$ is the colimit of the $A^j$
    in~$\CatI$. Given any cocone $h^j\colon A^j \to B$ ($j \in J$) in
    $\CatI$ we know that there is a unique morphism $h\colon A \to B$
    of $\CatK$ such that \mbox{$h \cdot \inj^j = h^j$} for every $j \in
    J$. It suffices to show that $h$ is a morphism in $\CatI$. For
    this consider the following diagram:
    \[
      \begin{tikzcd}
        M_1A^j_0
        \ar{rd}[inner sep=0]{M_1\inj^j_0}
        \ar{rrr}{a^{1,0}_j}
        \ar{ddd}[swap]{M_1h^j_0}
        &&&
        A^j_1
        \ar{ld}[swap,inner sep =0]{\inj^j_1}
        \ar{ddd}{h^j_1}
        \\
        &
        M_1A_0
        \ar{r}{a^{1,0}}
        \ar{d}[swap]{M_1h_0}
        &
        A_1
        \ar{d}{h_1}
        \\
        &
        M_1B_0
        \ar{r}{b^{1,0}}
        &
        B_1
        \\
        M_1B_0
        \ar{ru}{\id}
        \ar{rrr}{b^{1,0}}
        &&&
        B_1
        \ar{lu}[swap]{\id}
      \end{tikzcd}
    \]
    The outside commutes since $h^j$ is a morphism in $\CatI$. The
    left- and right-hand parts commmute by the definition of $h$, the
    upper part commutes by the definition of $a^{1,0}$ and the lower
    part trivially does. Thus, the inner square commutes when
    precomposed by every colimit injection $M_1\inj^j_0$, whence it
    commutes, as desired.

    \takeout{
      One readily proved that $V$ creates limits and $\kappa$-filtered
      colimits. The proof is a variation on the standard argument that
      the forgetful functor of a category of algebras for an
      endofunctor creates all limits and all those colimits that the
      functor preserves. Here the argument needs to be adjusted to
      accomodate the structure morphisms
      $a^{1,0}\colon M_1A_0 \to A_1$ of objects of $\CatI$ and making
      use of the $\kappa$-accessibility of $M_1$. One also uses that
      limits and $\kappa$-filtered colimits of Eilenberg-Moore
      algebras for $M_0$ are formed at the level of $\CatC$, that is,
      the forgetful functor $U_0\colon \Alg_0(\M) \to \CatC$ creates
      limits and $\kappa$-filtered colimits. We leave the
      straightforward details to the reader.}

  \item It follows that $\CatI$ is complete, and therefore it is
    locally presentable by \cref{R:acc}.\ref{R:acc:1}. Moreover, by
    the Adjoint Functor Theorem~\cite[Thm.~1.66]{AR94}, $V$ has a left
    adjoint \mbox{$L\colon \CatK \to \CatI$}.

    From \cref{R:adj}.\ref{R:adj:1} we obtain that for every
    $\kappa$-presentable object in $\CatK$, i.e.~a pair $A$ formed by
    $\kappa$-presentable objects $(A_i, a^{0,i})$, $i = 0,1$, the
    object $LA$ is $\kappa$-presentable in $\CatI$. We show that the
    these $\kappa$-presentable objects $LA$ form a strong generator in
    $\CatI$. By \cref{R:acc}.\ref{R:acc:1a}, it follows that $\CatI$
    is locally $\kappa$-presentable. Given an object $A$ of $\CatI$
    we know that $VA$ is an extremal quotient of a coproduct of
    $\kappa$-presentable objects in $\CatK$, reprensented, say, by the
    extremal epimorphism
    \[\textstyle
      e\colon \coprod_{j\in J} A_j \epito VA
    \]
    for some set $J$. Using \cref{R:adj}.\ref{R:adj:1}, that the
    left-adjoint $L$ preserves coproducts and that $V$ reflects
    isomorphisms and is clearly faithful we thus see that the adjoint
    transpose
    \[\textstyle
      \bar e\colon \coprod_{j\in J} LA_j \epito A
    \]
    exhibits $A$ as an extremal quotient of the coproduct of the
    $\kappa$-presentable objects $LA_j$.

  \item Now let $Q_i\colon \CatI \to \CatC$, $i = 0,1$, be the obvious
    forgetful projection functors. We have natural transformations
    $\varphi^{i,j}\colon M_iQ_j \to Q_{i+j}$ for $i+j \leq 1$ with
    components $\varphi^{i,j}_A = a^{i,j}$. The missing laws of an
    $M_1$-algebra (homomorphy and coequalization) can be expressed by
    equifiers of the following pairs of natural transformations
    \begin{align*}
      \varphi^{1,0} \cdot \mu^{1,0} Q_0, \varphi^{1,0} \cdot
      M_1\varphi^{0,0}
      &\colon M_1M_0Q_0 \to Q_1,
      \\
      \varphi^{1,0}\cdot \mu^{1,0} Q_0, \varphi^{1,0}\cdot
      M_0\varphi^{1,0}
      &
      \colon M_0M_1Q_0 \to Q_1.
    \end{align*}
    Consequently, $\Alg_1(\M)$ is $\kappa$-accessible and the
    inclusion functor $I\colon\Alg_1(\M) \subto \CatI$ is $\kappa$-accessible
    by \cref{R:acc}.\ref{R:acc:4}.

  \item To see that $\Alg_1(\M)$ is locall $\kappa$-presentable, we
    prove that $\Alg_1(\M)$ is closed under limits and
    $\kappa$-filtered colimits in $\CatI$. This is a slight variation
    of the standard argument that the Eilenberg-Moore
    algebras for a $\kappa$-accessible monad $T$ is closed under
    limits and $\kappa$-filtered colimits.

    For limits, given a diagram of
    $M_1$-algebras $A^j$ ($j \in J$), let $A$ be its limit in $\CatI$
    with limit projections $p^j\colon A \to A^j$. We show that $A$ is an
    $M_1$-algebra, that means that it satisfies homomorphy and
    coequalization. For the latter consider the following diagram
    (structure morphisms of $A^j$ have their index as a subscript,
    e.g.~$a^{1,0}_j\colon M_1A^j_0 \to A^j_1$):
    \[
      \begin{tikzcd}[row sep=15]
        M_1M_0A^j_0
        \ar{rrr}{\mu^{1,0}_{A^j_0}}
        \ar{ddd}[swap]{M_1a^{0,0}_j}
        &&&
        M_1A^j_0
        \ar{ddd}{a^{1,0}_j}
        \\
        &
        M_1M_0A_0\ar{lu}[swap,inner sep=0]{M_1M_0p_0^j}
        \ar{r}{\mu^{1,0}_{A_0}}
        \ar{d}[swap]{M_1a^{0,0}}
        &
        M_1A_0
        \ar{ru}[inner sep=0]{M_1p_0^j}
        \ar{d}{a^{1,0}}
        \\
        &
        M_1A_0
        \ar{ld}[inner sep = 0]{M_1p_0^j}
        \ar{r}{a^{1,0}}
        &
        A_1
        \ar{rd}[swap, inner sep = 1]{p_1^j}
        \\
        M_1A^j_0
        \ar{rrr}{a^{1,0}_j}
        &&&
        A^j_1
      \end{tikzcd}
    \]
    The outside commutes due to coequalization of $A^j$ and the four
    inner trapezoids by the naturality of $\mu^{1,0}$ and since
    $p^j = (p^j_0, p^j_1)$ is a morphism in $\CatI$. Thus, the desired
    middle square commutes when postcomposed by the projection
    $p^j_1$. Since limits in $\CatI$ are formed componentwise, the
    $p^j_1$ ($j \in J$) form a limit cone whence a jointly monic
    family. This, implies that the desired inner square commutes.
    That $A$ satisfies homomorphy is shown analogogously.

    For $\kappa$-filtered colimits, given a $\kappa$-filtered diagram of
    $M_1$-algebras $A^j$ ($j \in J$), let $A$ be its colimit in $\CatI$
    with colimit injections $\inj^j\colon A^j \to A$. We show that $A$ is an
    $M_1$-algebra, that means that it satisfies homomorphy and
    coequalization. For the former, observe first that since $M_0M_1$
    is $\kappa$-accessible and colimits in $\CatI$ are formed
    componentwise we have a colimit cocone $M_0M_1\inj^j_0\colon A^j_0
    \to A_0$. Now consider the following diagram (again we put the
    index of $A^j$ in the subscript of structure morphisms)
    \[
      \begin{tikzcd}[row sep=15]
        M_0M_1A_0^j
        \ar{rrr}{\mu^{0,1}_{A_0^j}}
        \ar{ddd}[swap]{M_0a^{1,0}_j}
        \ar{rd}[inner sep = 0]{M_0M_1\inj^j_0}
        &&&
        M_1A^j_0
        \ar{ddd}{a^{1,0}_j}
        \ar{ld}[swap, inner sep = 1]{M_1\inj^j_0}
        \\
        &
        M_0M_1A_0
        \ar{r}{\mu^{0,1}_A}
        \ar{d}[swap]{M_0a^{1,0}}
        &
        M_1A
        \ar{d}{a^{1,0}}
        \\
        &
        M_1A_0
        \ar{r}{a^{1,0}}
        &
        A_1
        \\
        M_1A^j_0
        \ar{rrr}{a^{1,0}_j}
        \ar{ru}[inner sep = 1]{M_1\inj^j_0}
        &&&
        A_1^j
        \ar{lu}[swap, inner sep = 0]{\inj^j_1}
      \end{tikzcd}
    \]
    The outside commutes since $A^j$ is an $M_1$-algebra, and the four
    inner trapezoids commute due to the naturality of $\mu^{0,1}$ and
    the fact that $\inj^j$ is a morphism in $\CatI$. Thus, the desired
    inner square commutes when precomposed by every colimit injection
    $M_0M_1\inj^j_0$, whence it commutes as desired. That $A$
    satisfies coequalization is shown analogously

  \item The desired result now follows from the Reflection Theorem~\cite[Thm.~2.48]{AR94}.
    \qedhere
  \end{enumerate}
\end{proof}
\begin{corollary}
  The functor $(-)_1\colon \Alg_1(\M) \to \Alg_0(\M)$ taking $1$-parts
  is $\kappa$-accessible.
\end{corollary}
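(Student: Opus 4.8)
The corollary is essentially a repackaging of the proof of \cref{T:acc}. Recall that a functor is $\kappa$-accessible exactly when it preserves $\kappa$-filtered colimits, and that a composite of $\kappa$-accessible functors is again $\kappa$-accessible; the plan is therefore to exhibit $(-)_1$ as such a composite. Note that both categories involved have $\kappa$-filtered colimits by \cref{T:acc}.

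Using the notation of the proof of \cref{T:acc}, I would factor
\[
  (-)_1 = \Bigl(\Alg_1(\M) \xra{I} \CatI \xra{V} \CatK = \Alg_0(\M) \times \Alg_0(\M) \xra{\pi_1} \Alg_0(\M)\Bigr),
\]
where $I$ is the inclusion of $\Alg_1(\M)$ into the inserter category $\CatI$, $V$ is the functor forgetting the main structure morphism $a^{1,0}$, and $\pi_1$ is the second product projection. Unwinding the definitions, this composite sends an $M_1$-algebra $A$ to the $M_0$-algebra $(A_1, a^{0,1})$, which is precisely its $1$-part, so the composite really is $(-)_1$.

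It then remains to observe that each of the three factors is $\kappa$-accessible: the product projection $\pi_1$ preserves all colimits, since colimits in $\CatK$ are computed componentwise; the forgetful functor $V$ creates $\kappa$-filtered colimits, as established in the proof of \cref{T:acc}; and the inclusion $I$ is $\kappa$-accessible, again by the proof of \cref{T:acc}, where $\Alg_1(\M)$ is realized as an equifier of $\kappa$-accessible functors on $\CatI$ so that its inclusion is $\kappa$-accessible by \cref{R:acc}\ref{R:acc:4} (equivalently, $\Alg_1(\M)$ is closed under $\kappa$-filtered colimits in $\CatI$, also shown there). Composing yields the claim.

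There is no genuine obstacle here; the corollary is a bookkeeping consequence of the proof already given. The only point to watch is that the colimit computations borrowed from that proof fit together: a $\kappa$-filtered colimit in $\Alg_1(\M)$ is computed as in $\CatI$, this is in turn formed componentwise in $\CatK$ via $V$, and reading off the second component returns exactly the $\kappa$-filtered colimit of the $1$-parts in $\Alg_0(\M)$ — so that $(-)_1$ genuinely preserves $\kappa$-filtered colimits.
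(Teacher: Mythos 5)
Your proof is correct and takes essentially the same route as the paper: both read the corollary off the proof of \cref{T:acc} by factoring $(-)_1$ through the inserter category $\CatI$ and citing the accessibility facts established there (accessibility of the inclusion $I$ via the equifier argument, and preservation of $\kappa$-filtered colimits by the forgetful functors out of $\CatI$). The only cosmetic difference is that the paper descends all the way to $\CatC$ via $Q_1$ and then invokes \emph{creation} of $\kappa$-filtered colimits by $U_0$ to lift the conclusion back to $\Alg_0(\M)$, whereas you stop at $\CatK$ and use the componentwise product projection; both are valid bookkeeping of the same ingredients.
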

\noindent
Indeed, this follows from the proof of \cref{T:acc} since we have
the commutative diagram of functors
\[
  \begin{tikzcd}[row sep = 3]
    \Alg_1(\M) \ar{rr}{(-)_1}
    \ar[hook]{rd}[swap]{I}
    &&
    \Alg_0(\M)
    \ar{dd}{U_0}
    \\
    &
    \CatI
    \ar{rd}[swap]{Q_1}
    \\
    &&
    \CatC
  \end{tikzcd}
\]
where the functor $Q_1I$ preserves and the forgetful functor~$U_0$
creates $\kappa$-filtered colimits.
\begin{corollary}
  The functor $\barM\colon \Alg_0(\M) \to \Alg_0(\M)$ has a final
  coalgebra.
\end{corollary}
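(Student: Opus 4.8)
The plan is to realize $\barM$ as an accessible endofunctor on a locally presentable category and then invoke the standard existence theorem for final coalgebras of accessible functors.

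First I would recall from~\eqref{eq:barM} that $\barM$ is the composite
$\Alg_0(\M)\xra{E}\Alg_1(\M)\xra{(-)_1}\Alg_0(\M)$, where $E$ is the left adjoint of $(-)_0$ sending an $M_0$-algebra~$A$ to its canonical extension~$EA$ (cf.~\cref{sec:determinization}). Being a left adjoint, $E$ preserves all colimits, in particular $\kappa$-filtered ones, so $E$ is $\kappa$-accessible. By the corollary immediately preceding the present one, the $1$-part functor $(-)_1\colon\Alg_1(\M)\to\Alg_0(\M)$ is $\kappa$-accessible as well. Hence $\barM$, as a composite of two $\kappa$-accessible functors, is $\kappa$-accessible.

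Next, since $\M$ is $\kappa$-ary (our running assumption), $M_0$ is a $\kappa$-accessible monad, so \cref{T:acc} with $n=0$ tells us that $\Alg_0(\M)$ is locally $\kappa$-presentable, in particular locally presentable and cocomplete. An accessible endofunctor on a locally presentable category has a final coalgebra: the (transfinitely continued) terminal coalgebra sequence $1\leftarrow\barM 1\leftarrow\barM^2 1\leftarrow\cdots$ stabilizes because $\barM$ preserves $\lambda$-filtered colimits for a suitable regular $\lambda$ and the category has all the required colimits, see~\cite[Exercise~2j]{AR94}. Thus $\barM$ has a final coalgebra.

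There is no genuinely hard step here; everything is assembled from \cref{T:acc} and its first corollary together with the fact that $E$ is a left adjoint. The only point requiring minor care is the bookkeeping of accessibility indices: as the inserter construction used in the proof of \cref{T:acc} may raise the index, $(-)_1$ and hence $\barM$ are perhaps only $\lambda$-accessible for some regular $\lambda\geq\kappa$ rather than $\kappa$-accessible; this is harmless, since the final-coalgebra theorem applies to \emph{any} accessible endofunctor on a locally presentable category.
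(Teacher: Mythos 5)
Your proposal follows essentially the same route as the paper: decompose $\barM=(-)_1\cdot E$ as in~\eqref{eq:barM}, observe that the left adjoint~$E$ preserves all colimits and that $(-)_1$ is $\kappa$-accessible, conclude that $\barM$ is an accessible endofunctor on the locally $\kappa$-presentable category $\Alg_0(\M)$, and invoke \cite[Exercise~2j]{AR94}. One small correction to your last step: what that exercise gives is that the category $\Coalg(\barM)$ of $\barM$-coalgebras is itself locally presentable, hence complete, hence has a terminal object -- which is the final coalgebra; this is how the paper concludes. Your alternative gloss, that the terminal coalgebra sequence $1\leftarrow\barM 1\leftarrow\barM^2 1\leftarrow\cdots$ stabilizes \emph{because} $\barM$ preserves $\lambda$-filtered colimits, does not work as stated: that sequence is built from \emph{limits}, and preservation of filtered colimits gives no control over it. So keep the conclusion and the citation, but drop the terminal-sequence justification (or replace it with the completeness argument).
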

\noindent
Indeed, since $\barM = (-)_1 \cdot E$, where $E$ is a left adjoint
(see~\eqref{eq:barM}) we see that $\barM$ is a $\kappa$-accessible
endofunctor on the locally $\kappa$-presentable category
$\Alg_0(\M)$. It follows that the category of all coalgebra for
$\barM$ is locally presentable~\cite[Exercise~2j]{AR94}. (Though note
that the index of presentability may not be $\kappa$.) In particular,
this category is complete and therefore has a terminal object.

\subsubsection*{Proof of \cref{thm:infinite-depth-games}}

\begin{rem}\label{R:fact}
  Suppose that $\CatC$ is a category with a (regular epi, mono)
  factorization system (e.g.~a regular
  category~\cite[Def.~2.1.1]{Borceux94}); that means that every
  morphism $f$ of $\CatC$ has a factorization $f = m\cdot e$ into a
  regular epimorphism $e$ followed by a monomorphism $m$. Moreover,
  the following unique diagonal fill-in property holds: whenever we
  have $m \cdot f = g \cdot e$ for a regular epimorphism $e$ and a
  monomorphism $m$, then there exists a unique diagonal $d$ such that
  $m \cdot d = g$ and $d \cdot e = f$:
  \[
    \begin{tikzcd}
      A \ar[->>]{r}{e}
      \ar{d}[swap]{f}
      &
      B
      \ar{d}{g}
      \ar[dashed]{ld}[swap]{d}
      \\
      C
      \ar[>->]{r}{m}
      &
      D
    \end{tikzcd}
  \]

  Now suppose further that $F\colon \CatC \to\CatC$ preserves
  monomorphisms. Then every coalgebra morphism
  $h\colon (A,\gamma) \to (B,\delta)$ can be factorized into a
  coalgebra morphism carried by a regular epi and one carried by a
  monomorphism in $\CatC$. Indeed, take the factorization $h$ into a
  regular epimorphism $e\colon A \epito C$ followed by a monomorphism
  $m\colon C \monoto B$ in $\CatC$, and observe that the unique
  diagonal fill-in property yields a unique coalgebra structure
  $\alpha\colon C \to FC$ such that $e$ and $m$ are coalgebra
  morphisms (here one uses that $Fm$ is a monomorphism):
  \[
    \begin{tikzcd}
      A \ar{r}{\gamma}
      \ar[->>]{d}[swap]{e}
      \ar[shiftarr = {xshift=-18}]{dd}[swap]{h}
      &
      FA
      \ar{d}{Fe}
      \ar[shiftarr = {xshift=20}]{dd}{Fh}
      \\
      C
      \ar[dashed]{r}{\alpha}
      \ar[>->]{d}[swap]{m}
      &
      FC
      \ar[>->]{d}{Fm}
      \\
      B
      \ar{r}{\delta}
      &
      FB
    \end{tikzcd}
  \]
\end{rem}

Towards the proof of \cref{thm:infinite-depth-games}, we generalize
the game to be played on any $\barM$-coalgebra $(A,\gamma)$. Positions
of~D then are pairs in $A\times A$; positions of~S are relations
$Z\subseteq A\times A$. A move from $(x,y)\in A \times A$ to such
a~$Z$ is admissible for~D if the relations in~$Z$ entail equality
of~$\gamma(x)$ and~$\gamma(y)$; in categorical formulation, this means
that $\barM c_Z(\gamma(x))=\barM c_Z(\gamma(y))$ where
$c_Z\colon A\to C_Z$ is the coequalizer of the pair
$\ell^*_0,r^*_0\colon M_0Z\to A$ for the projections
$\ell,r\colon Z\to A$. Again,~S just picks from~$Z$; any player who
cannot move, loses, and infinite matches are won by~D. We then claim,
generalizing the claim of the theorem, that $x,y\in A$ are
behaviourally equivalent iff~D wins $(x,y)$.

\begin{proof}[Proof (\cref{thm:infinite-depth-games})]
($\Leftarrow$)~We denote the winning region of~$D$ by $\infgameeq$. It is
easy to see that~$\infgameeq$ is a congruence on the $M_0$-algebra~$A$
(this is similar as in the proof of \cref{thm:fin-coalg}),%
\smnote{I do not see this; can you please insert more details.}
so
we have a unique $M_0$-algebra structure on $A/{\infgameeq}$ such that the
quotient map $e\colon A\epito A/{\infgameeq}$ is an $M_0$-algebra
homomorphism. We write elements of $A/{\infgameeq}$ in the form $[x]_D$
for $x\in A$. We shall define a coalgebra structure $\gamma_D\colon
A/{\infgameeq} \to \barM(A/{\infgameeq})$ such that the square below commutes
in $\Alg_0(\M)$:
\[
  \begin{tikzcd}
    A \ar{r}{\gamma}
    \ar[->>]{d}[swap]{e}
    &
    \barM A
    \ar{d}{\overbar M_1 e}
    \\
    A/{\infgameeq}
    \ar[dashed]{r}{\gamma_D}
    &
    \barM(A/{\infgameeq})
  \end{tikzcd}
\]
That means that it suffices to show that
\begin{equation*}
  \gamma_D([x]_D) = \barM e(\gamma(x))\qquad
  \text{for every $x \in A$}
\end{equation*}
is well-defined; indeed, since $\barM e\cdot \gamma$ is an $M_0$-algebra
homomorphism and $\infgameeq$ a congruence, $\gamma$ is then clearly an
$M_0$-algebra homomorphism, too, and the above diagram commutes so
that $e$ becomes a coalgebra morphism that identifies all $x,y$ such
that~D wins~$(x,y)$).

So let~D win $(x,y)$; we have to show that
\[
  \barM e(\gamma(x))=\barM e(\gamma(y)).
\]
Let~$Z$ be the winning move of~D at $(x,y)$ (that is,~$Z$ is D's first
move of the corresponding match). Since~S can pick any element
of~$Z$,~D wins on every element of~$Z$, so that~$e$ factors through
the above coequalizer~$c_Z\colon A \to C_Z$: there is a unique
$M_0$-algebra homomorphism $h\colon C_Z \to A/{\infgameeq}$ such that
$h \cdot c_Z = e$. Hence,~$\barM e$ factors through~$\barM c_Z$ via
$\barM h$. Since~$Z$ is an admissible move, $\barM c_Z$
identifies~$\gamma(x)$ and~$\gamma(y)$; hence, so does~$\barM e$, as
required.%
\smnote{This is all categorical, so why don't we make a categorical
  proof?}

\medskip
\noindent
($\Rightarrow$)~Let $h\colon(A,\gamma)\to(B,\delta)$ be an
$\barM$-coalgebra morphism; it suffices to show that the kernel
\[
  \Ker h = \{(x,y)\in A\times A \mid h(x)=h(y)\}
\]
of~$h$ is contained in~$\infgameeq$. To see this, it suffices to show
that~D can maintain the invariant $\Ker h$, i.e.~ensure that her
positions always remain in $\Ker h$ if the game starts in a position
in $\Ker h$. But when at a position~$(x,y)\in\Ker h$, she can clearly
ensure that the next position is still in $\Ker h$ by just playing
$Z :=\Ker h$.

We proceed to show that $Z$ is admissible. Let
$\ell, r\colon Z \to A$ be the obvious projection maps. Since the
kernel is clearly a congruence,~$Z$ is an $M_0$-algebra, and
$\ell, r\colon Z \to A$ are $M_0$-algebra homomorphims. (In fact,~$Z$
is the pullback of~$h$ along itself, and the forgetful functor
$\Alg_0(\M) \to \Set$ creates limits.)  Using
\cref{R:coeq}.\ref{R:coeq:2} (and noting that it holds for
arbitrary algebras $A$ in lieu of $M_0X$) we take the coequalizer
$c_Z\colon A \epito C_Z$ of the pair $\ell,r$, and we now prove that
$\barM c_Z \cdot \gamma $ merges the pair $\ell, r$. 
By \cref{R:factorization}, the unique morphism
$m\colon C_Z \monoto B$ such that $h = m \cdot c_Z$ is monic. By
\cref{R:fact} we know that $C_Z$ carries a unique coalgebra
structure $\alpha\colon C_Z \to \barM C_Z$ such that $c_Z$ and $m$ are
coalgebra morphisms. Using the former fact and that $c_Z$ merges
$\ell$ and $r$ we obtain the following commutative diagram:
\[
  \begin{tikzcd}
    &
    A
    \ar{r}{\gamma}
    \ar[->>]{d}{c_Z}
    &
    FA
    \ar{d}{\overbar M_1 c_Z}
    \\
    Z
    \ar{ru}{\ell}
    \ar{rd}[swap]{r}
    &
    C
    \ar{r}{\alpha}
    &
    FC
    \\
    &
    A\ar{r}{\gamma}
    \ar[->>]{u}[swap]{c_Z}
    &
    FA
    \ar{u}[swap]{\overbar M_1 c_Z}
  \end{tikzcd}
\]
The commutativity of its outside is the desired equality.%
\takeout{
  Further we consider the kernel is clearly a congruence,~$Z$ is an
  $M_0$-algebra, and $\ell, r\colon Z \to A$ are $M_0$-algebra
  homomorphims. (In fact,~$Z$ is the pullback of~$h$ along itself, and
  the the forgetful functor $\Alg(M_0) \to \Set$ creates limits.)
  Using \cref{R:coeq}.\ref{R:coeq:2} we take the coequalizer of
  $c_Z\colon A \epito C_Z$ and show that $\barM c_Z \cdot \gamma$
  merges the pair $\ell, r$. We now use the well-known fact that the
  algebras for a monad (here: $M_0$) form a regular category. Hence,
  by \cref{R:factorization} we have a unique monomorphism
  $m\colon C_Z \monoto B$ such that $h = m \cdot c_Z$. Further we
  consider the following diagram%
\[
  \begin{tikzcd}
    &&
    Z
    \pullbackangle{-90}
    \ar{ld}[swap]{\ell}
    \ar{rd}{r}
    \\
    &
    A
    \ar[->>]{r}{c_Z}
    \ar{rd}[swap]{h}
    \ar{ld}[swap]{\gamma}
    &
    C_Z
    \ar[>->]{d}{m}
    &
    A
    \ar[->>]{l}[swap]{c_Z}
    \ar{ld}{h}
    \ar{rd}{\gamma}
    \\
    \barM A
    \ar{d}[swap]{\overbar M_1 c_Z}
    \ar{rrd}{\overbar M_1 h}
    &&
    B
    \ar{d}{\beta}
    &&
    \barM A
    \ar{d}{\overbar M_1 c_Z}
    \ar{lld}[swap]{\overbar M_1 h}
    \\
    \barM C_Z
    \ar[>->]{rr}{\overbar M_1 m}
    &&
    \barM B
    &&
    \barM C_Z
    \ar[>->]{ll}[swap]{\overbar M_1 m}
  \end{tikzcd}
\]
All its inner parts commute; the middle left- and right-hand
trapezoids commute because $h$ is a coalgebra morphism. Thus the outside
of the diagram commutes. Since $\barM$ preserves monomorphisms we have
that $\barM m$ is a monomorphism. Thus $\barM c_Z \cdot \gamma$ merges
$\ell, r$ as desired.}
\takeout{
Since~$\barM$ preserves
monomorphisms, $\barM h$ also factors through $\barM c_Z$ via the
monomorphism $\barM m$. It thus suffices to show that $\barM h$
identifies~$\gamma(x)$ and~$\gamma(y)$, which we show using that~$h$
is a coalgebra morphism and that $(x,y)\in\Ker h$:
\begin{equation*}
  \barM h(\gamma(x)) = \beta(h(x))=\beta(h(y))=\barM h(\gamma(y)).\tag*{\qed}
\end{equation*}}
\end{proof}

\subsubsection*{Proof of \cref{thm:fin-coalg}}
It is straightforward to see that~$\infgameeq$ is an equivalence
relation. For instance, transitivity is seen as follows. Assume that~D
wins on $(s,t)$ and on $(t,u)$, with winning (first) moves~$Z,Z'$,
respectively. Then $Z\cup Z'$ is a winning move for~D on $(s,u)$,
where we exploit that by the assumption that $\barM$ preserves monos,
we do not need to worry about $Z\cup Z'$ possibly using more variables
than needed in the top-level decompositions of~$s$ and~$u$
(cf.~\cref{rem:monos}). Symmetry and reflexivity are easier. We
write $[t]^D=q(t)$.  The $M_0$-algebra structure of~$U$ is then given
by
  \begin{equation*}
    f^U(([t_i]^D)_{i\in I})=[f((t_i)_{i\in I})]^D
  \end{equation*}
  for a depth-0 operation~$f$ of arity~$I$. Well-definedness is seen
  similarly as transitivity above.

  For well-definedness of~$\zeta$, suppose that $t_1\sigma, s_1\sigma$
  are top-level decompositions of infinite-depth terms such that~D
  wins on~$(t_1\sigma,s_1\sigma)$, again w.l.o.g.~with
  $t_1,s_1\in\Termsarg{\Sigma,1}(X)$ for the same~$X$, and with the
  same~$\sigma$.
  Let~$Z\subseteq\Termsarg{\Sigma,1}(X)\times\Termsarg{\Sigma,1}(X)$
  be~D's winning move. Then~$Z$, being an admissible move of~D,
  entails $t_1=s_1$. Moreover, since~S can pick any element
  $(u,v)\in Z$ as a response,~D wins on each of the subsequent
  positions $(u\sigma,v\sigma)$, so~$u$ and~$v$ are identified under
  $q\cdot\sigma$. It follows that
  $M_1(q\cdot\sigma)([t_1])=M_1(q\cdot\sigma)([s_1])$. This shows
  well-definedness; it is then clear by construction that~$\zeta$ is
  an~$M_0$-homomorphism.%
  \smnote{I am lost here. Can this be demonstated?}

  Finally, let $\gamma\colon A\to\barM A$ be an $\barM$-coalgebra on
  an $M_0$-coalgebra~$A$; we abuse~$A$ to denote also the carrier
  of~$A$.%
  \smnote{This should not be said here, but be a global convention
    stated in prelims.}
  For each~$x\in A$, the successor structure
  $\gamma(x)\in\barM A$ is represented as a term
  $g_x\in\Termsarg{\Sigma,1}(A)$. Infinite unrolling of~$A$ thus
  produces an infinite-depth term $h_x$ for each~$x\in A$. We claim
  that the map~$h\colon A\to U$ given by $h(x)=[h_x]^D$ is the unique
  $\barM$-coalgebra morphism from $(A,\gamma)$ to $(U,\zeta)$.

  First note that the map~$h$ is independent of the choice of
  representing terms~$g_x$. This is seen by letting~D play, in
  $\CalG_\infty^\syn$, a strategy based on maintaining the invariant
  that the present state is a pair of unrollings, under differently
  chosen representing terms (where the choice of representing terms may change
  during the unrolling process), of the same state in~$A$.
  Like in the
  `only if' direction of the proof of
  \cref{thm:infinite-depth-games},~D can maintain the invariant by
  playing it in every move.%
  \smnote{I do not understand the meaning here: `it' $=$ `the
    invariant'; so why can~D play this?}  To see that~$h$ is an
  $M_0$-homomorphism, let~$f$ be a depth-0 operation of arity~$I$;
  then $\gamma(f(x_i)_{i\in I})=f(\gamma(x_i)_{i\in I})$ has
  $f((g_{x_i})_{i\in I})$ as a representing term, so the unrolling
  $h(f(x_i)_{i\in I})$ arises by applying~$f$ to the
  unrollings~$h(x_i)$.

  Finally, being a coalgebra morphism $A\to U$ amounts precisely to
  the unrolling property, so~$h$ is a coalgebra morphism, and unique
  as such.
  \qed

\subsection*{Proof of Proposition~\ref{prop:WStratTrace}}
\begin{proof}
  We first show that $Z$ is admissible at $(s,t)$,
  i.e.~$Z \vdash (\alpha_X\cdot\gamma)^*_0 s=
  (\alpha_X\cdot\gamma)^*_0t$, where we write $Z\vdash$ to indicate
  equational derivability from~$Z$. This follows from the construction
  of $\bar Z$ since Condition 1 ensures that
  \[
  Z \vdash s_{a} + t_{a} = \sum_{x\in s_{a}} x + t_{a} = t_{a}.
  \]
  Likewise Condition 2 ensures that $Z \vdash s_a + t_a = s_a$. Thus, $Z \vdash s_a=t_a$; whence, $Z \vdash (\alpha_X\cdot\gamma)^*_0s=(\alpha_X\cdot\gamma)^*_0 t$.

  Moreover, Spoiler thanks to Condition~1 (resp.~Condition~2) can play
  the positions either of the form $(x\lor t',t')$ or $(y\lor
  s',s')$. In either cases, $M_0!$ will map the left and right term to
  $1$ since $M_01=1$. So, in hindsight, if $(s,t) Z (s',t')$ is a
  winning match for Duplicator, then $s'\mathrel Z t'$. Thus,
  repeating the above argument $n$-times give the desired
  result. Lastly, $Z$ is also a winning strategy in the infinite game
  because $(s',t')\in Z$ whenever $(s,t) Z (s',t')$ is a match.
\end{proof}


\end{document}